\numberwithin{equation}{section}
\def\cb{{\mathcal B}}
\def\cd{{\mathcal D}}
\def\cf{{\mathcal F}}
\def\ch{{\mathcal H}}
\def\car{{\mathcal R}}
\def\bc{{\mathbb C}}
\def\bg{{\mathbb G}}
\def\bh{{\mathbb H}}
\def\bn{{\mathbb N}}
\def\br{{\mathbb R}}
\def\bz{{\mathbb Z}}
\def\ga{{\mathfrak A}} 
\def\gb{{\mathfrak B}}
 \def\gph{{\mathfrak h}}
\def\gw{{\mathfrak W}}
\def\a{\alpha}
\def\b{\beta}
        \def\G{\Gamma}
\def\d{\delta}        \def\D{\Delta}
\def\eps{\varepsilon}
\def\th{\vartheta}
\def\l{\lambda}       \def\La{\Lambda}
\def\m{\mu}
\def\r{\rho}
\def\s{\sigma}        \def\S{\Sigma}
\def\t{\tau}
\def\f{\varphi} \def\F{\Phi}
\def\om{\omega}        \def\Om{\Omega}
\newtheorem{Thm}{Theorem}[section]
\newtheorem{Prop}[Thm]{Proposition}
\newtheorem{Lemma}[Thm]{Lemma}
\theoremstyle{definition}
\newtheorem{Dfn}[Thm]{Definition}
\newtheorem{Rem}[Thm]{Remark} 
\theoremstyle{remark}
\def\di{\mathop{\rm d}\!}
\def\im{\mathop{\rm Im}}
\def\Int{\mathop{\rm int}}
\def\itm#1{\item{$(#1)$}}
\newcommand{\ssv}{\text{span}\,}
\newcommand{\ccr}{\textrm{CCR}}
\newcommand{\tr}{\textrm{Tr}}
\def\idd{{\bf 1}\!\!{\rm I}}
\def\supp{\mathop{\rm supp}}
\newcommand{\nn}{\nonumber}
\begin{document}

\title[harmonic analysis and Bose Einstein condensation]
{harmonic analysis on Cayley Trees II: the Bose Einstein condensation}
\author{Francesco Fidaleo}
\address{Dipartimento di Matematica,
Universit\`{a} di Roma Tor Vergata, 
Via della Ricerca Scientifica 1, Roma 00133, Italy} 

\email{fidaleo@mat.uniroma2.it}

\subjclass[2000]{82B20; 82B10;  46Lxx. }
\keywords{Bose Einstein condensation, Perron Frobenious theory, inhomogeneous graphs.}

\date{\today}

\begin{abstract}

We investigate the Bose--Einstein Condensation on 
non homogeneous non amenable networks for the model describing arrays of 
Josephson junctions. The graphs under investigation are obtained by
adding density zero perturbations to the homogeneous Cayley Trees.
The resulting topological model, whose 
Hamiltonian is the pure hopping one given by the opposite of the 
adjacency operator, has also a mathematical interest in itself. The present paper is then the application to the Bose--Einstein Condensation phenomena, of the harmonic analysis aspects,  
previously investigated in a separate work, for such non amenable graphs.
Concerning the appearance of the Bose--Einstein Condensation, the results are surprisingly in accordance with the previous ones, despite the lack of amenability. The appearance of the hidden spectrum for low energies always implies that the critical density is finite for all the models under consideration. First we show that, even when the critical density is finite, if the adjacency operator of the graph is recurrent, it is impossible to exhibit temperature states which are locally normal (i.e. states for which the local particle density is finite) describing the condensation at all. A similar situation seems to occur in the transient cases for which it is impossible to exhibit locally normal states
$\om$ describing the Bose--Einstein Condensation at mean particle density $\r(\om)$ strictly greater than the critical density $\r_c$. Indeed, it is shown that the transient cases admit locally normal states exhibiting Bose--Einstein Condensation phenomena. In order to construct such locally normal temperature states by infinite volume limits of finite volume Gibbs states, a careful choice of the the sequence of the chemical potentials should be done. For all such states, the condensate is essentially allocated on the base--point supporting the perturbation. This leads to $\r(\om)=\r_c$. We prove that all such temperature states are Kubo--Martin--Schwinger states for a natural dynamics. The construction of such a dynamics, which is a delicate issue, is also done.

\end{abstract}

\maketitle


\section{Introduction}

The present paper is devoted to the phenomena of Bose--Einstein Condensation (BEC for short) on non homogeneous 
networks obtained by adding density zero perturbations to homogeneous Cayley Trees. Even if the unperturbed model already exhibit BEC (cf. \cite{BDP}), the arising picture for the perturbed situation deserves attention as we have already explained in \cite{F}, and we are going to study in details in the sequel. The present paper is then the second part of the previous one \cite{F}, in which the harmonic analysis aspects of the model were intensively studied. The reader is also referred to \cite{FGI1} for some very interesting pivotal amenable models. We are going to investigate the physical application to the BEC of the very fascinating picture which arises from the previous mathematical investigation. The physical underlying model concerns Bardeen--Cooper Boson pairs in networks describing arrays of Josephson junctions. The formal Hamiltonian is the quartic Bose--Hubbard Hamiltonian, given on a generic network $G$ by
\begin{equation*}
H_{BH}=m\sum_{i\in VG}n_i+\sum_{i,j\in VG}A_{ij}\big(Vn_in_j-J_0a^{\dagger}_ia_j\big)\,.
\end{equation*}
Here, $VG$ denotes the set of the vertices of the network $G$, $a^{\dagger}_i$ is the Bosonic creator, and $n_i=a^{\dagger}_ia_i$ the number operator
on the site $i\in VG$ (cf. \cite{BR2}). Finally, $A$ is the adjacency operator whose matrix element
$A_{ij}$ in the place $ij$ is the number of the edges connecting the site $i$ with the site $j$ (in particular it is Hermitian). The reader is referred to the seminal paper \cite{BC} in which the theory of the superconductivity (called BCS Theory in honor of the authors) has been firstly discussed. 
It has been argued in \cite{BCRSV} that, in the case when $m$ and $V$ are negligible with respect to $J_0$, the hopping term dominates the physics of the system. After neglecting such terms, putting for the resulting effective coupling constant $J$, in principle different from $J_0$, $J=1$, and finally passing to the one--particle space, the model under consideration becomes the so--called pure hopping ones described by the {\it pure hopping one--particle Hamiltonian} which assumes the form
 \begin{equation}
\label{boha2}
 H=\|A\|\idd-A\,.
 \end{equation}
 Here, $A$ is simply the Adjacency of the fixed graph $G$, acting on the Hilbert space
 $\ell^2(VG)$. The constant $\|A\|$ is added just in order to have $H\geq0$ and does not affect the analysis at all. The reader is referred to \cite{F} for the systematic investigation of the mathematical properties arising in non amenable cases considered here consisting of negligible additive perturbations of Cayley Trees. Several interesting amenable models are treated in details in \cite{FGI1}. The reader is also referred to \cite{SRC} for some promising experiments for the pure hopping model on the Comb and Star Graphs (see Fig. \ref{figd}) pointing out an enhanced current at low temperature which might be explained by condensation phenomena.
 
One of the first attempts to investigate the BEC on non homogeneous amenable graphs, such as the Comb Graphs, was made in \cite{BCRSV}. In that paper, it was pointed out the appearance of a {\it hidden spectrum} in the low energy part of the spectrum, which is responsible of the finiteness of the critical density. In addition, the behavior of the wave function of the ground state, describing the spatial density of the condensate, was also computed. Some spectral properties of the Comb and the Star graph were also investigated in
\cite{ABO} in connection with the various  notions of independence in Quantum Probability. In that paper, it was noted the possible connection between such spectral properties and the BEC.

The systematic investigation of the BEC for the pure hopping model
on a wide class of amenable networks obtained by negligible perturbations of periodic graphs, has been started in \cite{FGI1}. The emerging results are quite surprising. First of all, the appearance of the hidden spectrum was proven for most of the graphs under consideration. This is due to the combination of two opposite phenomena arising from the perturbation. If the perturbation is sufficiently large (surprisingly, in some examples, it is enough that the perturbation is indeed finite), the norm
$\|A_p\|$ of the Adjacency of the perturbed graph becomes larger than the analogous one $\|A\|$ of the unperturbed Adjacency. On the other hand, as the perturbation is sufficiently small (i.e. zero--density), the part of the spectrum
$\s(A_p)$ in the segment $(\|A\|, \|A_p\|]$ does not contribute to the density of the
states.
This allows to compute the critical density $\r_c(\b)$ at the inverse temperature $\b$ for the perturbed model by using the integrated density of the states $F$ of the unperturbed one,
 \begin{equation}
\label{cdens01}
\r_c(\b)=\int\frac{\di F(x)}{e^{\b\left(x+(\|A_p\|-\|A\|)\right)}-1}\,.
\end{equation}
The resulting effect of the perturbed model exhibiting the hidden spectrum (i.e. when $\|A_p\|-\|A\|>0$) is that the critical density is always finite. The bridge between the Mathematics and Physics can be easily explained as follows. Consider the Bose--Gibbs occupation number (at the inverse temperature $\b>0$ and chemical potential $\m<0$) at low energies for the Hamiltonian \eqref{boha2}. After using Taylor expansion, one heuristically gets
$$
\frac1{e^{\b(H-\m I)}-I}\approx [\b(H-\m I)]^{-1}=[\b((\|A\|-\m)I-A)]^{-1}
\equiv \frac1{\b}R_A(\|A\|-\m)\,.
$$
Then the study of the BEC is reduced to the investigation of the spectral properties of the resolvent
$R_A(\l)$, for $\l\approx\|A\|$, the latter being a very familiar object for mathematicians. Due to the particular form \eqref{boha2}, as the BEC phenomena are connected with the spectral properties for small energies of the Hamiltonian, we can reduce the matter to the investigation of the spectral properties of the adjacency operator $A$ of the network close to $\|A\|$.
Another relevant fact connected with the introduction of the perturbation, and thus with the lack of homogeneity, is the possible change of the transience/recurrence character (cf. \cite{S}, Section 6) of the adjacency operator. It has to do with the possibility to construct locally normal states exhibiting BEC. As explained in \cite{FGI1}, the last relevant fact is the investigation of the shape of wave function of the ground state of the model, describing the spatial distribution of the condensate on the network in the ground state of the Hamiltonian. From the mathematical viewpoint, this is nothing else the Perron Frobenius generalized eigenvector of the Adjacency (cf. \cite{S}). It appears then clear that the physical and the mathematical aspects of the topological model based on the pure hopping Hamiltonian \eqref{boha2} are strongly related. 

Summarizing, the new and very surprising phenomena are the following. First the finiteness of the critical density is connected with the appearance of the {\it hidden spectrum} above the zero of the energy. It can be computed in a quite simple way for many amenable and non amenable models by using the {\it secular equation}, see Section 6 of \cite{FGI1} and Section 3 of \cite{F}.
This leads to the fact that the BEC can appear also in low dimensional cases $d<3$. In addition, not for all the models with finite critical density, it is possible to exhibit locally normal states (those for which the local particle density is finite in the infinite volume limit) describing BEC. Due to non homogeneity, particles condensate also in the configuration space, that is the system undergoes a "dimensional transition". This phenomenon is connected with the {\it transience/recurrence} character of the adjacency operator of the network (the last describing the  Hamiltonian of the system). Another new aspect is the difference between the geometrical dimension of the network and the growing of the wave function of the ground state. The last is called {\it Perron--Frobenius dimension} $d_{PF}$ for volume growing networks and is computed through the behavior of $\ell^2$--norms of the finite volume Perron Frobenius eigenvectors, all normalized to $1$ at a common root. If the geometrical dimension $d_{G}$ is greater than $d_{PF}$, then it is impossible to exhibit locally normal states describing the BEC, whose mean density of the particles is greater than the critical one. Surprisingly, there are non homogeneous networks for which $\r_c(\b)<+\infty$ and 
$d_{G}=d_{PF}$ (i.e. the comb graph $\bn\dashv\bz^2$, see Fig. \ref{figg}). 
\begin{figure}[ht]
     \centering
     \psfig{file=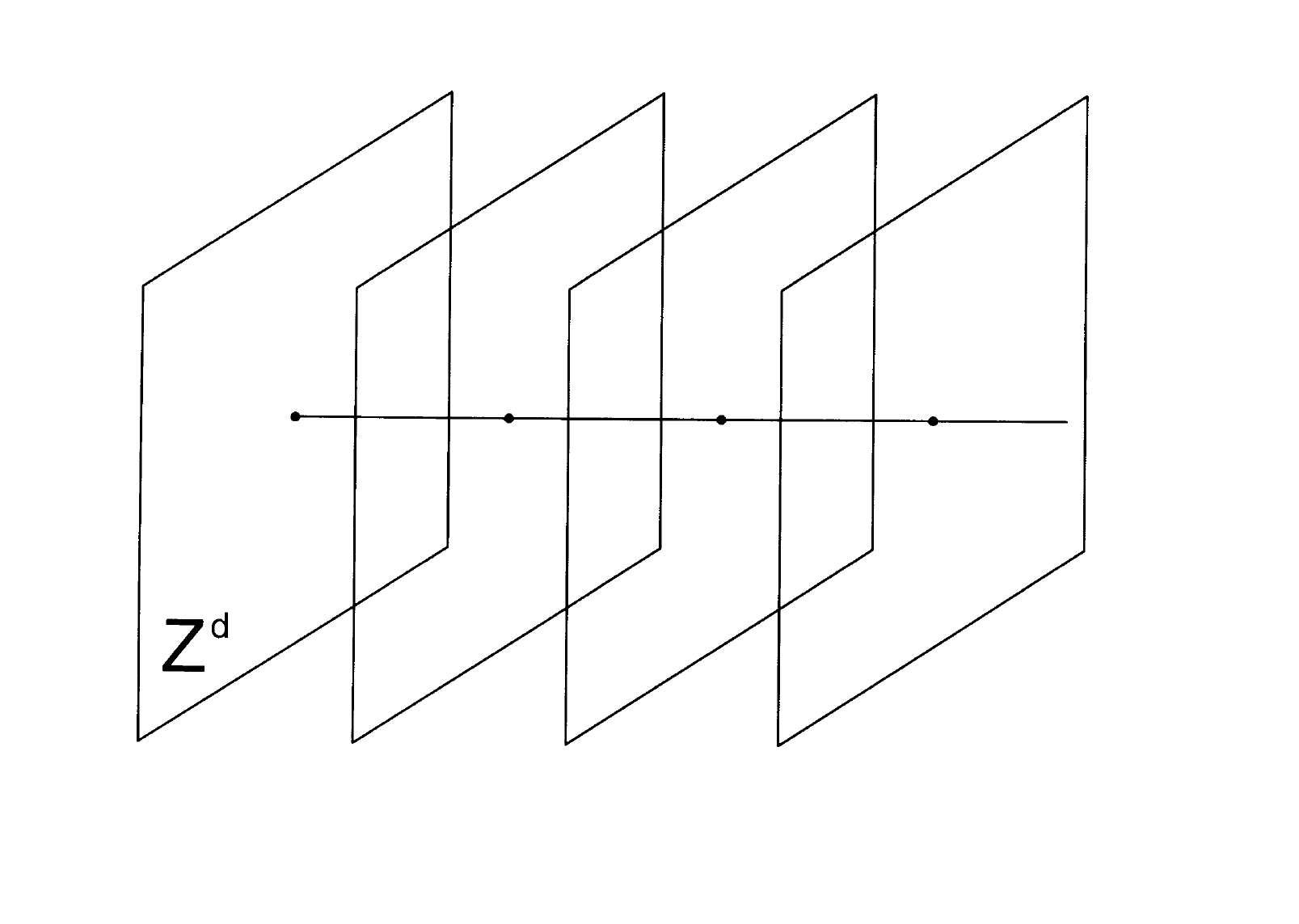,height=1.6in} \,\, \psfig{file=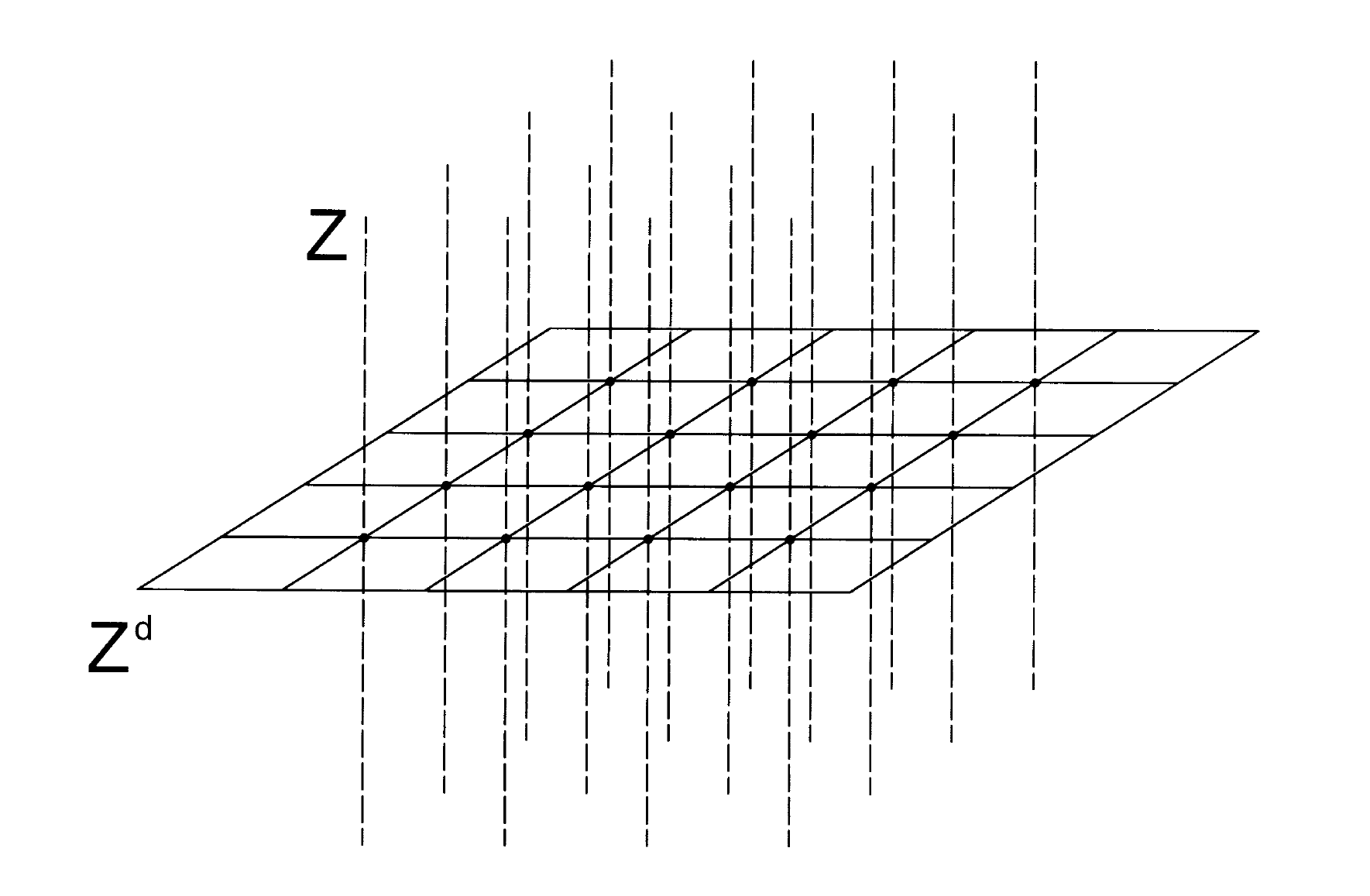,height=1.6in}
     \caption{The Comb Graphs $\bn\dashv\bz^d$ and $\bz^d\dashv\bz$.}
     \label{figg}
     \end{figure}
In addition, there are also simple non homogeneous graphs exhibiting infinite critical density, but for which it is yet possible to exhibit locally normal states describing BEC. Such a very intriguing situation is summarized as follows.\footnote{$A\dashv B$ is the comb--shaped graph whose base--point is $A$, $\r_c$ is the critical density, R/T denotes the recurrence/transience character of the Adjacency, BEC ($\r$--BEC) denotes the existence of locally normal states 
exhibiting BEC (exhibiting BEC at any mean density $\r>\r_c$).}

\vskip1cm

\begin{tabular}{||ll|c|c|c|c|c|c||}
\hline
&&  $\r_c$ & \ \ R/T \ \  & \ \ $d_G$ \ \ & \ \ $d_{PF}$ \ \ &  \ \ BEC  \ \ &  \ \ $\r$--BEC  \ \ \\
\hline

${\bz}^d, \,d<3$ &&$\infty$  &R &$d$&$d$&no&no\\ \hline 			
${\bz}^d, \,d\geq3$ &&$<\infty$  &T &$d$&$d$&yes&yes\\ \hline 
star graph&&$<\infty$  &R &$1$&$0$&no&no\\ \hline 
$\bz^d\dashv\bz, \,d<3$&&$<\infty$  &R &$d+1$&$d$&no&no\\ \hline 
$\bz^d\dashv\bz, \,d\geq3$&&$<\infty$  &T &$d+1$&$d$&yes&no\\ \hline 
$\bn$&&$\infty$  &T &$1$&$3$&yes&no\\ \hline 
$\bn\dashv\bz^2$&&$<\infty$  &T &$3$&$3$&yes&yes\\ \hline 
\end{tabular}

\vskip1cm

A similar situation also happens for some non amenable networks. 
The networks under consideration in the present paper are density zero additive perturbations of
exponentially growing graphs made of homogeneous Cayley Trees, see Fig. \ref{figa}. 
The models treated in the present paper are the perturbations $\bg^{Q,q}$,
$2\leq q<Q$, and $\bh^Q$, of the homogeneous Cayley Tree $\bg^{Q}$ along a subtree isomorphic to $\bg^{q}$, and $\bn$ respectively, see below, see also \cite{F} for the precise definition of such models and for further details. Quite surprisingly, the emerging mathematical situation described in \cite{F} for such non amenable examples, is  similar to that for the amenable graphs treated in \cite{FGI1}. Namely, we still find hidden spectrum, and the transience/recurrence character is determined by the transience/recurrence character of the negligible subgraph supporting the perturbation. Finally, the infinite volume Perron--Frobenius eigenvector is exponentially decreasing far from the base--point supporting the perturbation. Namely, concerning the wave--function
 of the ground state, we have precisely the same situation arising for the amenable situations treated in \cite{FGI1} (see Fig. \ref{figd}). 
  
The first part of the present paper is devoted to describe general properties of (bounded) Hamiltonians on networks equipped with an exhaustion such that the density of the eigenvalues of the 
underlying Hamiltonian converges in the infinite volume limit to a cumulative function 
called {\it Integrated Density of the States}. It includes as particular cases those arising by negligible perturbations of Cayley Trees as shown in \cite{F}. Then we describe the general statistical properties of such models. In particular, a careful definition of the mean density and the possible amount of the BEC condensate should be done, because of the non negligible effects of the boundary terms due to the possible lack of amenability. We pass to describe the existence of the dynamics associated to the pure hopping Hamiltonian \eqref{boha2} for the perturbed Cayley Trees. This is a delicate issue which is done in Section \ref{dynkmss} for all the cases under consideration. After that, by using the results in \cite{F}, in the transient case we write down the formula (cf. \eqref{kmsbecd}) for the natural class of states describing BEC (i.e. presenting a non negligible amount of condensate in the ground state wave--function). Such states are KMS (i.e. equilibrium) states for the dynamics previously investigated. As for the previous amenable models treated in \cite{FGI1}, such states are infinite volume KMS states whose mean density coincides with the critical one. This simply means that the condensate is allocated on a subgraph whose volume growth is negligible with respect to that of the unperturbed one. We end the paper by investigating the infinite volume behavior of finite volume Gibbs states. Due to the lack of amenability, such an investigation presents some very delicate technical difficulties arising by the unavoidable boundary effects. Modulo such difficulties, the emerging situation seems to be in accordance with that described in \cite{FGI1}, and summarized as follows. First, relatively to the general recurrent situation, it is indeed impossible to construct locally normal states exhibiting BEC at all, even if the critical density is finite. In addition, in the all transient cases under consideration, it seems to be impossible to exhibit locally normal states describing BEC at the mean density $\r>\r_c(\b)$, $\b=1/T$. Finally, in the condensation regime where for the mean density 
$\r(\om)$ of all such states 
$\r(\om)=\r_c(\b)$, a careful choice of the sequence of the finite volume chemical potentials $\m(\La_n)\to0$ should be done. By considering the geometry of the graph $G$ and the 
$\ell^2$--behavior of the ground state wave function $\left\|v\lceil_\La\right\|^2$, such a choice corresponds to the local density of the condensate
$$
C_D(\La)\approx D\frac{\left\|v\lceil_\La\right\|^2}{|\La|}\to0\,,
$$
in the infinite volume limit $\La\uparrow G$, see Proposition \ref{3a3c}. This is nothing but the naive explanation of the fact that $\r(\om)=\r_c(\b)$ for the locally normal states describing the condensation regime in all the situations under consideration.

\section{perturbed cayley trees}

A {\it graph} (called also a {\it network}) $X=(VX,EX)$ is a collection $VX$ of objects,
 called {\it vertices}, and a collection $EX$ of unordered lines connecting vertices, 
 called {\it edges}. Usually, the edges describe the physical interaction between vertices. In our model describing the pure hopping interaction, multiple edges as well as self--interactions, are also allowed. Let $E_{xy}$ be the collection of all the edges connecting $x$ with $y$. We have $E_{xy}=E_{yx}$. Let us denote by $A=[A_{xy}]_{x,y\in X}$, $x,y\in VX$, the {\it adjacency
 matrix} of $X$:
 $$
 A_{xy}=|E_{xy}|\,.
 $$ 
As explained in \cite{F}, all the geometric properties of $X$ are encoded in $A$. For example, $X$ is connected if and only if $A$ is irreducible. Setting
 $$
 \deg:=\sup_{x\in VX} \deg(x)\,,
 $$
and
$$
D_{xy}:=\deg(x)\d_{x,y}\,.
$$
We have
 $\sqrt{\deg}\leq\|A\|\leq \deg$, that is $A$ is bounded if and only if $X$
has uniformly bounded degree. For connected networks (or on each connected component) we can define the standard distance
\begin{equation}
\label{disteqa}
d(x,y):=\{\min\ell(\pi(x,y))\mid\,\pi(x,y)\,\text{a path connecting}\,x,y\}\,,
\end{equation}
$\ell(\pi)$ being the length of the path $\pi$ (i.e. the number of the edges in $\pi$).
The adjacency operator considerably differs from the Laplacian $\D=A-D$, whenever 
$X$ is non homogeneous. On the other hand, it was proven in Theorem 7.6 of \cite{FGI1} that the value of the critical density does not change under negligible perturbations on a fixed homogeneous graph, if the Hamiltonian of the model is the (opposite of the) Laplacian. Thus, the main object for the investigation of the BEC for the pure hopping model of inhomogeneous networks will be the adjacency operator.
In the present paper, all the graphs are connected, countable and with uniformly bounded
degree. In addition, we deal only with bounded operators acting on $\ell^2(VX)$ if it is not otherwise specified.

Let $B$ be a closed operator acting on
$\ell^2(VX)$, and $\l\in{\rm P(B)}\subset\bc$ the resolvent set of $B$. As usual,
$$
R_B(\l):=(\l\idd-B)^{-1}
$$
denotes the {\it Resolvent} of $B$.

Fix a bounded matrix with positive entries $B$ acting on $\ell^2(VX)$. Such an operator is called {\it positive preserving} as it preserves the elements of $\ell^2(VX)$ with positive entries. If $B$ is selfadjoint, $B$ is {\it positive} if $\langle Bu,u\rangle\geq0$ for each $u\in\ell^2(VX)$.

A sequence $\{v(x)\}_{x\in VX}$ is called a (generalized) {\it Perron--Frobenius eigenvector} if it has positive entries and
 $$
 \sum_{y\in VX}B_{xy}v(y)=\|B\|v(x)\,,\quad x\in VX\,.
 $$
If such a vector is normalizable (i.e. if it belongs to $\ell^2(VX)$) it is a standard $\ell^2(VX)$--vector, otherwise it is a weight. In all the cases we simply call Perron--Frobenius eigenvector such a sequence by dropping the word generalized.

Suppose for simplicity that $B$ is selfadjoint. It is said to be {\it recurrent} if
\begin{equation}
\label{caz}
\lim_{\l\downarrow\|B\|}\langle R_B(\l)\d_x,\d_x\rangle=+\infty\,.
\end{equation}
Otherwise $B$ is said to be {\it transient}.
It is shown in \cite{S}, Section 6, that the recurrence/transience character of $B$ does not depend on the base--point chosen in computing the limit in \eqref{caz}.
The Perron--Frobenius eigenvector is unique up to a multiplicative constant, if $X$ is finite or when $B$ is recurrent, see e.g. \cite{S}. In general, it is not unique, see the end of the present section for the case relative to the Cayley Trees.\footnote{An analogous result can be provided for the Comb Graphs in the transient situation.}
We say that an operator $B$ acting on $\ell^2(VX)$ has {\it finite
 propagation} if there exists a constant $r=r(B)>0$ such that, for any
 $x\in X$, the support of $Bv$ is contained in the (closed) ball
 $B(x,r)$ centered in $x$ and with radius $r$. It is easy to show that for the adjacency operator if $A$ on $X$, then $A^k$ has propagation $k$ for any integer $k>0$.

Let $X$ be an infinitely extended graph with an exhaustion $\{\La_n\}_{n\in\bn}$ which is kept fixed during the analysis. The definition of the 
{\it the integrated density of the states} (cf. \cite{PF}) of a selfadjoint operator $B\in\cb(\ell^2(VX))$
given in \cite{F1} works for all the situations for which it is meaningful, and then is useful for the purposes of the present work. Indeed, consider on 
$\cb(\ell^2(VX))$ the state
$$
\t_n:=\frac1{|V\La_n|}\tr_n(P_n\,{\bf\cdot}\,P_n)\,,
$$
$P_n$ being the selfadjoint projection onto $\ell^2(V\La_n)$. 
Define for a bounded operator $B$,
\begin{equation}
\label{343}
\t(B):=\lim_n\t_n(B)\,,\quad B\in\cd_\t\,,
\end{equation}
where the domain $\cd_\t$ is precisely the linear subspace of $\cb(\ell^2(VX))$ for which
the limit in \eqref{343} exists. In addition, define for a bounded selfadjoint operator $B$,
\begin{equation}
\label{3434}
\t^B(f(B)):=\lim_n\t_n(f(P_nBP_n))\,,\quad f\in C(\s(B))\,,
\end{equation}
provided such a limit exists. The domain $\cd_{\t^B}\subset C^*(B)$ is precisely the linear subspace of the unital $C^*$--algebra 
$C^*(B)\subset\cb(\ell^2(VX))$ generated by $B$, for which
the limit in \eqref{3434} exists. Notice that the definition of $\t$, $\t^B$ might depend on the chosen exhaustion which we keep fixed during the analysis.

Suppose now that $\cd_{\t^B}=C^*(B)$. Then it provides a state on $C^*(B)$ and, by the Riesz--Markov Theorem, a  Borel probability measure 
$\m_B$ on $\s(B)$. Thus, there exists a right continuous, increasing, positive function $F_B$ satisfying
$$
F_B(x)=0\,,\,\,\,x<\min\s(B)\,;\quad F_B(x)=1\,,\,\,\,x\geq\max\s(B)\,,
$$
such that
$$
\m_B((-\infty,x])=F_B(x)\,,\quad x\in\br\,.
$$
The previous described cumulative function $F_B$ is precisely the integrated density of the states (IDS for short) associated to $B$, provided it exists for the chosen exhaustion. When the graph is amenable and the operator $B$ is of finite propagation, the definition and some of the main facts relative to the IDS considerably simplify as the boundary effects play no role in the infinite volume limit, see Theorem 2.1 of \cite{F1}.

Consider the graph $Y$ such that $VY=VX$, both equipped with exhaustions
$\{X_n\}_{n\in\bn}$, $\{Y_n\}_{n\in\bn}$ such that $VY_n=VX_n$,
$n\in\bn$. The graph $Y$ is a {\it negligible} or {\it density zero
perturbation} of $X$ if it differs from $X$ by a number of edges
for which
$$
 \lim_n\frac{|\{e_{xy}\in EX\triangle EY\mid x\in VX_n\}|}{|VX_n|}=0\,,
$$
where $EX\triangle EY$ denotes the symmetric difference. To simplify
matters, we consider only perturbations involving edges, the more
general case involving also vertices can be treated analogously, see
\cite{FGI1}. Suppose that $\cd(\t^{A_X})=C^*(A_X)$ that is $A_X$ admits the IDS.
\begin{Prop} (Proposition 1.3 of \cite{F1})
\label{density0} 
Let $Y$ be a negligible perturbation of the tree $X$. Then $\cd(\t^{A_Y})=C^*(A_Y)$, and
\begin{equation*}
\t^{A_Y}(f(A_Y))=\t^{A_X}(f(A_X))\,.
\end{equation*}
\end{Prop}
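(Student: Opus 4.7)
The plan is to reduce the equality of the two functionals to an equality of polynomial moments, and then to exploit the finite propagation of the adjacency operator together with the density zero hypothesis to control the discrepancy.

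For each fixed $n$ the map $T\mapsto\t_n(P_nTP_n)$ is a state on $\cb(\ell^2(VX))$, so $|\t_n(f(P_nA_YP_n))|\leq\|f\|_\infty$ on $[-\deg,\deg]$, an interval containing all the spectra of $A_Y$ and of its compressions $P_nA_YP_n$. By Stone--Weierstrass and a standard $\eps/3$--argument, both $\cd_{\t^{A_Y}}=C^*(A_Y)$ and the stated identity reduce to the single claim that, for every $k\in\bn$,
\begin{equation*}
\lim_n\Bigl[\t_n\bigl((P_nA_YP_n)^k\bigr)-\t_n\bigl((P_nA_XP_n)^k\bigr)\Bigr]=0;
\end{equation*}
existence of $\lim_n\t_n((P_nA_XP_n)^k)=\t^{A_X}(A_X^k)$ is part of the hypothesis on $X$, and the display above would then transfer both the existence and the value of the limit to $Y$.

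For the moment comparison I would expand each trace combinatorially as a sum over closed walks in $V\La_n$:
\begin{equation*}
\t_n\bigl((P_nA_BP_n)^k\bigr)=\frac{1}{|V\La_n|}\sum_{x\in V\La_n}\sum_{y_1,\dots,y_{k-1}\in V\La_n}B_{xy_1}B_{y_1y_2}\cdots B_{y_{k-1}x},\qquad B\in\{X,Y\}.
\end{equation*}
Call $x\in V\La_n$ \emph{good} at level $k$ if every closed walk of length $k$ from $x$ staying in $V\La_n$ traverses only edges of $EX\cap EY$ with identical multiplicity on both sides; otherwise call it \emph{bad}. At good $x$ the two inner sums coincide term by term, while at bad $x$ the trivial bound $|((P_nA_BP_n)^k)_{xx}|\leq\|A_B\|^k\leq\deg^k$ contributes at most $2\deg^k$ to the difference of the two traces. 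Since the walks are forced by the cutoff $P_n$ to stay in $V\La_n$, only perturbed edges with \emph{both} endpoints in $V\La_n$ can produce bad vertices, and the uniformly bounded degree bounds the number of bad vertices attributable to any one perturbed edge by a constant $C_k=C_k(\deg)$. Writing $p_n:=|\{e\in EX\triangle EY:e\text{ has an endpoint in }V\La_n\}|$, the density zero hypothesis gives
\begin{equation*}
\Bigl|\t_n\bigl((P_nA_YP_n)^k\bigr)-\t_n\bigl((P_nA_XP_n)^k\bigr)\Bigr|\leq\frac{2\deg^kC_kp_n}{|V\La_n|}\xrightarrow[n\to\infty]{}0.
\end{equation*}

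The one point that requires genuine care is that no amenability of $\{\La_n\}$ has been assumed, so each individual trace feels the boundary $\partial\La_n$ non--trivially through $P_n$: the set of vertices within walk--distance $k$ of $\partial\La_n$ may well constitute a non--negligible fraction of $V\La_n$. The crucial observation is that this boundary effect is \emph{identical} for $A_X$ and for $A_Y$ at every good vertex, and only the perturbed edges counted by the density zero hypothesis can break the symmetry. Hence the boundary contributions cancel in the difference, and no further geometric input beyond the density zero assumption is needed.
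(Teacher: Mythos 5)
The paper does not prove this proposition; it imports it verbatim as Proposition 1.3 of the corrigendum \cite{F1}, so there is no in-text proof to compare against. Your argument is correct and self-contained. The reduction to moments via Stone--Weierstrass is legitimate because each $\t_n$ is a state and all the compressions $P_nA_BP_n$ have spectrum in the fixed interval $[-\deg,\deg]$, and the moment comparison is sound: a vertex can be ``bad'' at level $k$ only if it lies within walk-distance $k$ (in the union graph, whose degree is still uniformly bounded) of an edge of $EX\triangle EY$ with both endpoints in $V\La_n$, so the number of bad vertices is at most $C_k(\deg)\,p_n$ and the difference of normalized traces is $O(p_n/|V\La_n|)\to0$ by the density-zero hypothesis. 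Your closing observation is exactly the right one and is the reason the statement survives non-amenability: neither trace converges for geometric reasons, but the boundary distortion introduced by $P_n$ is the same for $A_X$ and $A_Y$ away from the perturbed edges, so it cancels in the difference, and convergence for $A_Y$ is inherited from the assumed convergence for $A_X$. The published proof reaches the same estimate slightly differently, by noting that $P_n(A_Y-A_X)P_n$ is a localized perturbation of rank $O(p_n)$ and bounding the trace of a difference of polynomials by (rank)$\times$(norm); your explicit closed-walk count is an equivalent, if anything more transparent, way of organizing the same cancellation. Two small points of hygiene: ``closed walk'' should be read in the union graph (equivalently, any term with nonzero weight in either sum), and ``identical multiplicity'' should be tied to the multiset convention implicit in $EX\triangle EY$, since the adjacency entries count edge multiplicities; both are already implicit in what you wrote.
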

If $|\l|$ is sufficiently large, then it is possible to express the Resolvent of $Y$ in terms of the Resolvent of $X$ by the Krein formula. Indeed, fix $X$ as the reference graph and define $A_X:=A$, $A_Y:=A+D$,
where $D$ is the perturbation, which eventually acts
on $\overline{\car(D)}$. Put, for $\l\in\bc$,
\begin{equation} 
 \label{s2s}
S(\l):= DPR_{A}(\l)\lceil_{\overline{\car(D)}}\,,
\end{equation}
where $P:=P_{\overline{\car(D)}}$ is the selfadjoint projection onto $\overline{\car(D)}\subset \ell^2(VX)$. The Krein formula assumes the form
 \begin{equation} 
 \label{k2s}
R_{A_Y}(\l)=R_{A_X}(\l)+R_A(\l)(\idd_{P\ell^2(VX)}-S(\l))^{-1}DPR_{A_X}(\l)\,.
\end{equation}
The resolvent formula \eqref{k2s} 
holds true for $|\l|>\|A\|+\|D\|$, and extends to any simply connected subset containing the point at infinity of $\bc\cup\{\infty\}$ made of the $\l$ for which $\idd_{P\ell^2(VX)}-S(\l)$ is invertible. Conversely, suppose that  $\idd_{P\ell^2(VX)}-S(\l)$ is not invertible for some positive number 
$\l>\|A_X\|$. Then the norm $\|A_Y\|$ of the perturbed graph $Y$ is greater than that $\|A_X\|$ of the unperturbed one $X$. It can be computed as
$$
\|A_Y\|=\sup\{\l>0\mid \idd_{P\ell^2(VX)}-S(\l)\,\,\text{is not invertible}\,\}\,.
$$
Fix now our set--up which is a graph $X$ equipped with an exhaustion $\{X_n\}_{n\in\bn}$ such that $A_X$ admits the IDS. Let $Y$ be a negligible additive perturbation of $X$ as explained above. First of all, note that Proposition \ref{density0} assures that $A_Y$ admits the IDS w.r.t. the exhaustion $\{Y_n\}_{n\in\bn}$. In addition, we obviously have $\|A^Y\|\geq\|A^X\|$. Define
\begin{equation}
\label{ddee}
\d:=\|A_X\|-\|A_Y\|\,.
\end{equation}
Denote
$F_X:=N_{\|A_X\|\idd-A_X}$, $F_Y:=N_{\|A_Y\|\idd-A_Y}$, where $N_B$ is the IDS of the operator $B$. 
\begin{Prop} (Corollary 2.6 of \cite{F}) 
\label{6}
We have
\begin{equation*}
F_Y(x)=F_X(x+\d)\,.
\end{equation*}
\end{Prop}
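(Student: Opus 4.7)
The plan is to reduce the identity for the cumulative IDS functions $F_X,F_Y$ to an identity of the underlying Borel spectral measures associated to $A_X,A_Y$, and then translate back through the shifts $\lambda\mapsto \|A_X\|-\lambda$ and $\lambda\mapsto \|A_Y\|-\lambda$. The input I will use is Proposition~\ref{density0}: once both adjacency operators admit an IDS, the states $\tau^{A_X}$ and $\tau^{A_Y}$ coincide on polynomials (hence on all continuous functions) in the sense
\[
\tau^{A_Y}(f(A_Y))=\tau^{A_X}(f(A_X)),\qquad f\in C(\mathbb{R}).
\]
Viewing both $\tau^{A_X}$ and $\tau^{A_Y}$ as positive linear functionals on $C_c(\mathbb{R})$ (by compactly supporting $f$ on a neighborhood of $\sigma(A_Y)\cup\sigma(A_X)$), Riesz--Markov then gives the crucial equality $\mu_{A_X}=\mu_{A_Y}$ as Borel probability measures on $\mathbb{R}$. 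In particular, any portion of $\sigma(A_Y)$ lying in $(\|A_X\|,\|A_Y\|]$ (the hidden spectrum) carries no IDS mass.

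Next I translate this to $H_X=\|A_X\|\idd-A_X$ and $H_Y=\|A_Y\|\idd-A_Y$. The Borel functional calculus shows that $\mu_{H_X}$ is the push-forward of $\mu_{A_X}$ under $\lambda\mapsto \|A_X\|-\lambda$, and similarly for $Y$. Therefore, using right-continuity of the cumulative function,
\[
F_X(x)=\mu_{A_X}\bigl([\|A_X\|-x,\infty)\bigr),\qquad
F_Y(x)=\mu_{A_Y}\bigl([\|A_Y\|-x,\infty)\bigr).
\]

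Now I combine these two computations with $\mu_{A_X}=\mu_{A_Y}$:
\[
F_Y(x)=\mu_{A_X}\bigl([\|A_Y\|-x,\infty)\bigr)
=\mu_{A_X}\bigl([\|A_X\|-(x+\delta),\infty)\bigr)=F_X(x+\delta),
\]
where the middle equality uses $\|A_X\|-\delta=\|A_Y\|$, which is precisely the definition \eqref{ddee} of $\delta$. This completes the identification.

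The only genuine subtlety is passing from continuous test functions (which is what Proposition~\ref{density0} directly provides) to the characteristic functions of half-lines used to define the cumulative IDS. This is handled by the Riesz--Markov representation, which extends the agreement from $C(\mathbb{R})$ to all Borel sets; right-continuity of $F_X,F_Y$ then removes any ambiguity at atoms of the measures. I expect no other obstacle: the proof is essentially a bookkeeping exercise matching the two shifts $\|A_X\|$ and $\|A_Y\|$, together with the equality of spectral distributions granted by the density-zero perturbation hypothesis.
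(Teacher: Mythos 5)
Your proof is correct and follows essentially the same route as the paper: both arguments feed Proposition~\ref{density0} into the affine relation $\|A_Y\|\idd-A_X=(\|A_X\|\idd-A_X)-\d\idd$ and then shift by $\d$. The only difference is presentational — the paper works with $\int f\,\di F_Y$ for continuous test functions $f$ and a change of variables, while you pass through Riesz--Markov to the equality of spectral measures $\m_{A_X}=\m_{A_Y}$ and push forward; the content is identical.
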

\begin{proof}
The IDS of $\|A_X\|\idd-A_X$ and $\|A_Y\|\idd-A_Y$ exist by hypothesis and by Proposition \ref{density0}, respectively. We get, again by Proposition \ref{density0},
\begin{align*}
&\int f(x)\di F_Y(x)=\t^{A_Y}(f(\|A_Y\|\idd-A_Y))=\t^{A_X}(f(\|A_Y\|\idd-A_X))\\
=&\t^{A_X}(f(\|A_X\|\idd-A_X-\d\idd))=\int f(x-\d)\di F_X(x)=\int f(x)\di F_X(x+\d)\,.
\end{align*}
This leads to the assertion.
\end{proof}
Suppose that $\|A^Y\|>\|A^X\|$. Then $\d<0$ in \eqref{ddee}. It has a very precise physical meaning as an effective chemical potential (cf.
\eqref{cdens01}), see
Proposition 7.1 of  \cite{FGI1}. Thus,  the part of the spectrum $(\|A_Y\|+\d,\|A_Y\|]$ does not contribute to the IDS of $A_Y$. Equivalently, the perturbed pure hopping Hamiltonian $\|A_Y\|\idd-A_Y$ presents the hidden spectrum, see Proposition \ref{hyspe}. 
The appearance of the hidden spectrum is then the combination of two different effects: the perturbation should be of density zero in order not to affect the IDS (Proposition \ref{density0}), but it should be sufficiently large in order to increase the norm of the perturbed Adjacency.

Our framework will be the investigation of the occurrence of the BEC for perturbed Cayley Trees, whose mathematical aspects are investigated in detail in \cite{F}. Namely, we consider the graphs $\bg^{Q,q}$ with $q\geq2$, and $q<Q\leq Q(q)$ where 
\begin{equation}
\label{ub}
Q(q)=\bigg[\left(2\sqrt{q-1}+1+\sqrt{4\sqrt{q-1}+1}\right)^2\bigg/4\bigg]+1
\end{equation}
($[x]$ stands for the integer part of a positive number $x$), and $\bh^{Q}$ with $2<Q\leq7$. Such networks consist of the Cayley Tree 
$\bg^{Q}$ of order $Q$, suitably perturbed along subtrees as explained in Fig. \ref{figa}. 
\begin{figure}[ht]
     \centering
     \psfig{file=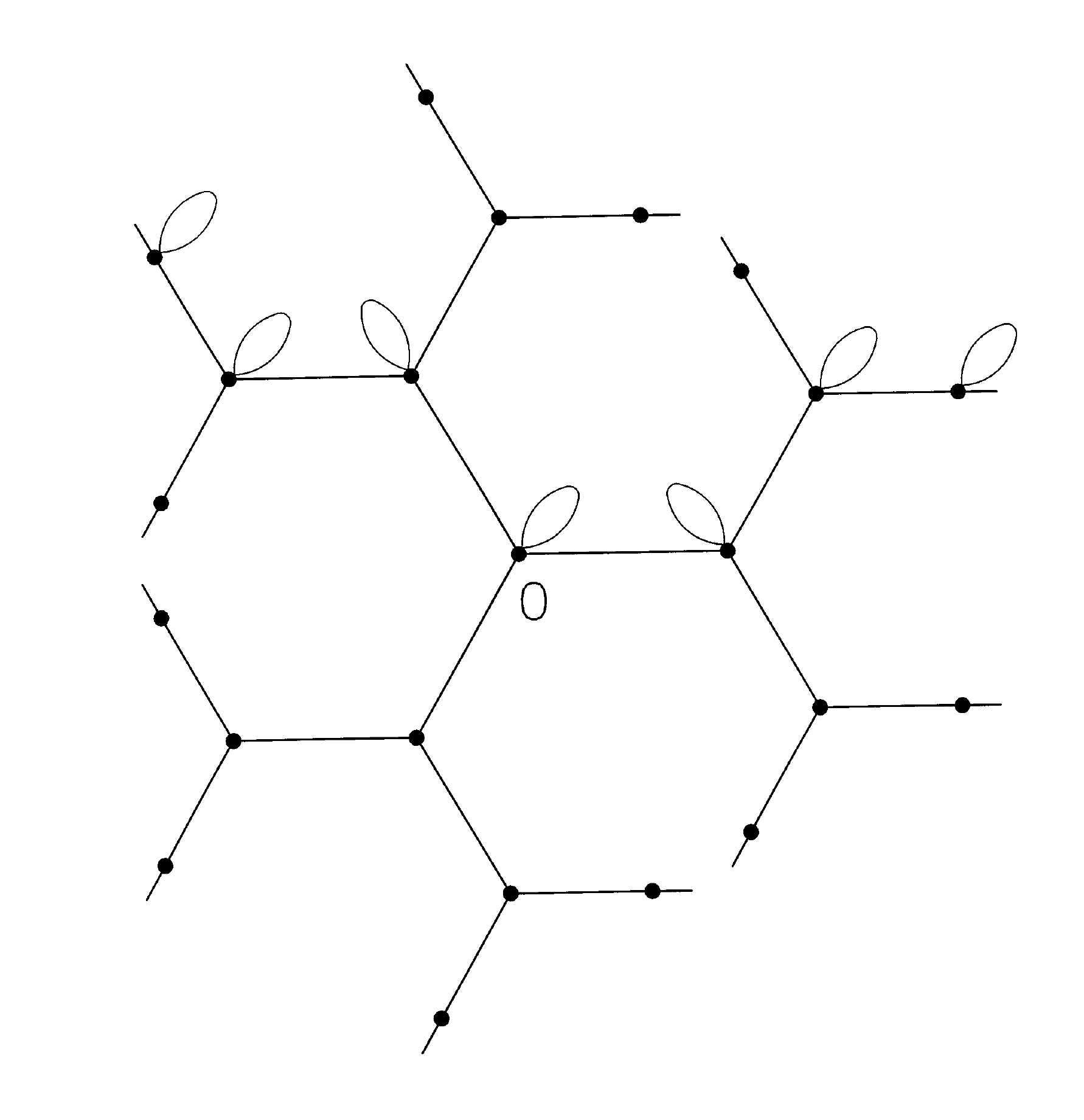,height=1.6in} \,\, \psfig{file=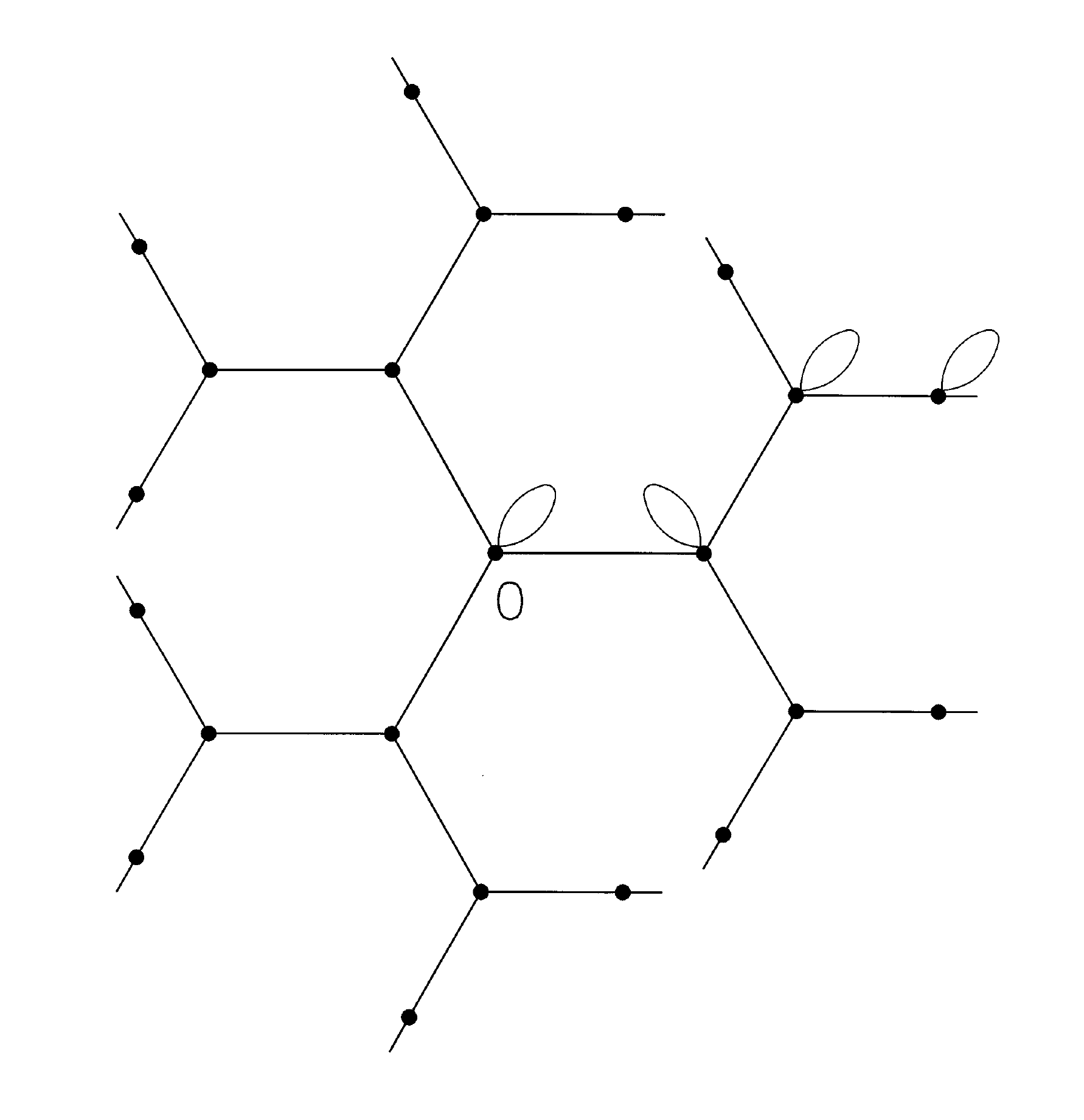,height=1.6in}\,\,\psfig{file=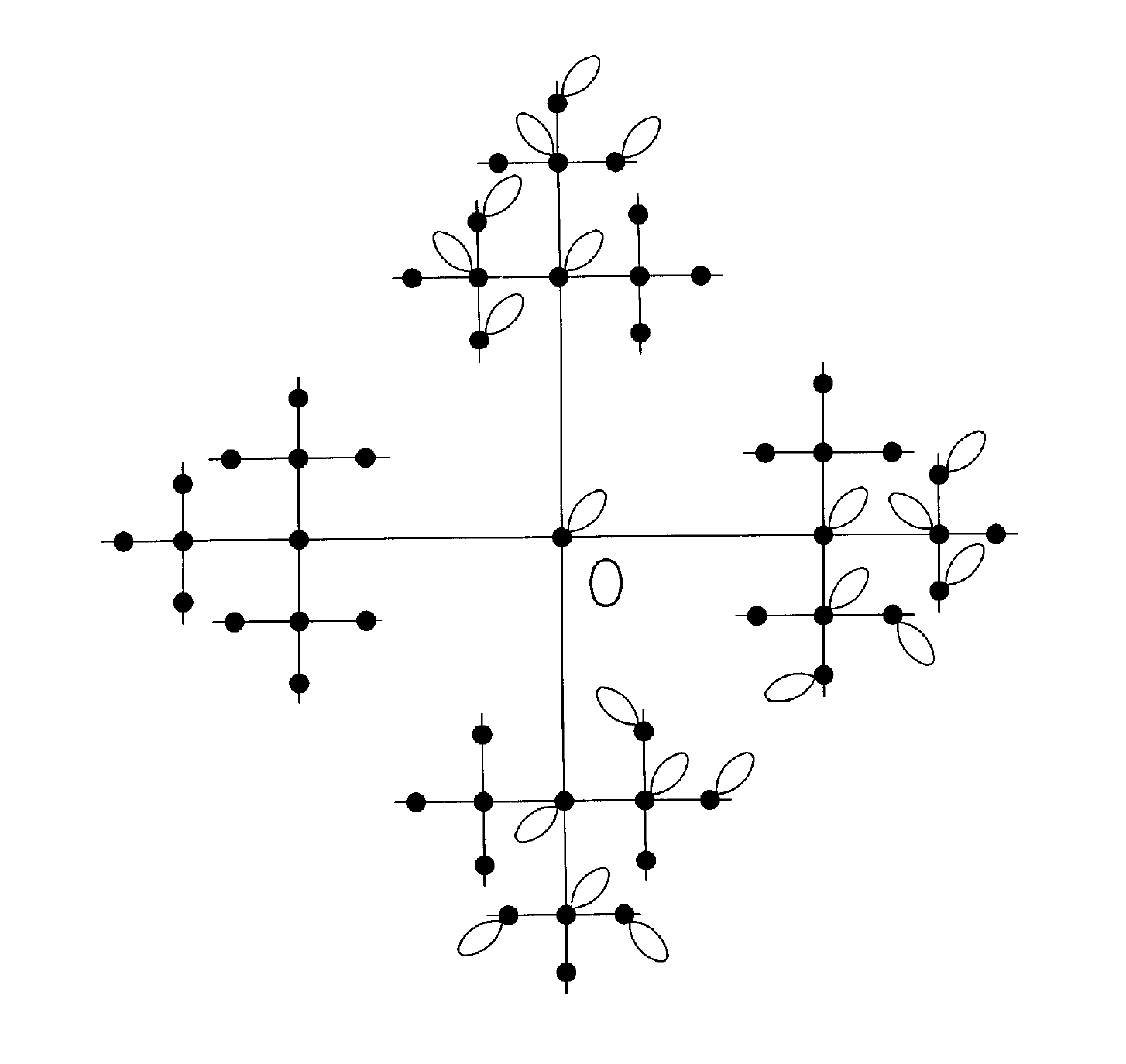,height=1.6in}
     \caption{The networks $\bg^{3,2}$, $\bh^3$ and $\bg^{4,3}$.}
     \label{figa}
     \end{figure}
One can prove (cf. Lemma 4.1 of \cite{BDP}, together with Proposition 2.2 of \cite{F}) that the Adjacency of $\bg^{Q}$ admits the IDS w.r.t. the standard exhaustion made of all the balls of radius $n$ centered in a fixed root. In the above notations, the Laplace transform $\F=L[F_{\bg^{Q}}]$ (or equivalently the {\it one--particle partition function} in physical language) of $F_{\bg^{Q}}$ is given by
\begin{equation}
\label{3}
\F(\b):=\frac{(Q-2)^2}{Q-1}\sum_{k=1}^{+\infty}\sum_{n=1}^{k}
(Q-1)^{-k}e^{-4\b\sqrt{Q-1}\sin^2\frac{n\pi}{2(k+1)}}\,.
\end{equation}
This, together with Proposition \ref{6}, leads to $F_{Y}(x)=F_{\bg^{Q}}\big(x+\|A_{\bg^{Q}}\|-\|A_{Y}\|\big)$, where $Y$ is $\bg^{Q,q}$ or $\bh^{Q}$. As $\|A_{Y}\|-\|A_{\bg^{Q}}\|>0$, they all present hidden spectrum. We also have that the graphs 
$\bg^{Q,q}$ and $\bh^{Q}$ are all transient except 
$\bg^{Q,2}$, see \cite{F}.
 
We briefly report the description of the Perron Frobenius eigenvectors $v$ given in \cite{F} for the graphs under consideration. Let $S$ be the base--point of the perturbed graph $Y$ under consideration, supporting the perturbation. The exhaustion $\{\La_n\}_{n\in\bn}$ has an obvious meaning by considering the part $S_n$ of $S$ in $B^Q_n$, the last being the ball of radius $n$ of $\bg^Q$, centered in a root $0$, kept fixed during the analysis. It was shown in \cite{F} that 
\begin{equation}
\label{3a3b}
v(x)=a(\|A_Y\|)^{d(x,S)}w(y(x))\,,\quad x\in VX\,.
\end{equation}
Here, $a(\l)$ is the function given \eqref{000}, $d$ is the distance given in \eqref{disteqa}, $y(x)$ is the unique point on $S$ such that 
$d(x,S)=d(x,y(x))$ (cf. Lemma 4.2 of \cite{F}), and finally $w$ is the Perron--Frobenius eigenvector of $PR_{A_{\bg^Q}}(\|A_Y\|)P$ 
($P\equiv P_{\ell^2(S)})$ given by
\begin{align}
\label{weign}
w(x)&=\big(1-a(\|A_{\bh^Q}\|)\big)d(x,0)+1\,,\quad x\in S\sim\bn\,,\\
w(x)&=\left(1+\frac{q-2}{q}d(x,0)\right)(q-1)^{-\frac{d(x,0)}2}\,,\quad x\in S\sim\bg^q\,,\nn
\end{align}
described in Theorems 5.2 and 6.3 of \cite{F} for $\bh^{Q}$ and $\bg^{Q,q}$, respectively.
Concerning the behavior of the $\ell^2$--norms of the restrictions of $v$, we have the following
\begin{Prop}
\label{3a3c}
For the Perron--Frobenius eigenvector $v$ described by \eqref{3a3b} and \eqref{weign}, we get
\begin{equation}
\label{1z1}
\lim_n\frac{\big\|v\lceil_{\ell^2(\La_n)}\big\|^2|S_n|}{|\La_n|}=0\,.
\end{equation}
\end{Prop}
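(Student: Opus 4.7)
The plan is to use the explicit form \eqref{3a3b} of $v$ to reduce $\|v\lceil_{\ell^2(\La_n)}\|^2$ to a sum indexed by $S_n$, and then compare this expression with $|\La_n|$.

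The key geometric input is that, since $S$ is a subtree of the perturbed graph $Y$ containing the chosen root $0$, every tree geodesic from $x\in VY$ to $0$ passes through the nearest--$S$ vertex $y(x)$, so that $d(x,0)=d(x,y(x))+d(y(x),0)$ and hence $x\in\La_n$ forces $y(x)\in S_n$. Setting $\ct_y:=\{x\in VY:y(x)=y\}$, this gives the disjoint decomposition $\La_n=\bigsqcup_{y\in S_n}(\ct_y\cap\La_n)$ with $\ct_y\cap\La_n=\{x\in\ct_y:d(x,y)\leq n-d(y,0)\}$. Each fibre $\ct_y$ sits inside the $\bg^Q$-branches at $y$ complementary to $S$, so its sphere of radius $k$ around $y$ has size at most $C(Q-1)^k$.

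Using $|v(x)|^2=a(\|A_Y\|)^{2d(x,y)}w(y)^2$ on $\ct_y$, I would then obtain
\begin{equation*}
\|v\lceil_{\ell^2(\La_n)}\|^2\leq C\sum_{y\in S_n}w(y)^2\sum_{k=0}^{n-d(y,0)}\bigl((Q-1)a(\|A_Y\|)^2\bigr)^k,
\end{equation*}
and insert the explicit formulas \eqref{weign}. For $Y=\bh^Q$, $w$ grows linearly on $S\sim\bn$, giving $\sum_{y\in S_n}w(y)^2=O(n^3)$ together with $|S_n|=n+1$. For $Y=\bg^{Q,q}$, the factor $(q-1)^{-d(y,0)/2}$ in $w$ exactly compensates the sphere growth of $S\sim\bg^q$, again yielding $\sum_{y\in S_n}w(y)^2=O(n^3)$, while $|S_n|=O((q-1)^n)$ with $q<Q$. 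In both cases $|\La_n|\sim(Q-1)^n$, so, up to the inner geometric factor, the ratio in \eqref{1z1} is a polynomial in $n$ times an exponentially decaying factor.

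The main obstacle I expect is precisely the inner truncated geometric series, whose size is determined by the position of $(Q-1)a(\|A_Y\|)^2$ relative to $1$; the crucial numerical input from \cite{F} is that the parameter bounds \eqref{ub} (together with the range $2<Q\leq 7$ for $\bh^Q$) ensure, via the explicit form \eqref{000} of $a(\cdot)$, that this inner factor grows at a rate strictly smaller than $(Q-1)^n/|S_n|$, so that the exponential denominator $|\La_n|$ still dominates. Once this verification is done, the remainder is a routine comparison of polynomial factors against exponentially decaying ratios.
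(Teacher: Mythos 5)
Your argument is correct, but it takes a genuinely different route from the paper's. The paper never decomposes $\La_n$ into fibres over $S_n$: it writes $v(x)=\sum_{y\in S}R_{A_{\bg^Q}}(\|A_Y\|)_{x,y}\,w(y)$ and uses boundedness of the resolvent at the point $\|A_Y\|$ (which lies strictly outside $\s(A_{\bg^Q})$) to pass in one stroke from $\big\|v\lceil_{\ell^2(\La_n)}\big\|^2$ to a constant times $\big\|w\lceil_{\ell^2(S_n)}\big\|^2$; it then estimates $\big\|w\lceil_{\ell^2(S_n)}\big\|^2\approx n^3$ for $\bh^Q$ and simply bounds $w\leq 1$ for $\bg^{Q,q}$, arriving at ratios of the form $n^4/Q^n$ and $|B_n^q|^2/|B_n^Q|$. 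Your fibre decomposition replaces that operator-norm step by an explicit pointwise computation from \eqref{3a3b}, and the one point you leave open --- the size of the inner sum $\sum_k\bigl((Q-1)a(\|A_Y\|)^2\bigr)^k$ --- resolves much more simply than you anticipate: from \eqref{000} one has $a(2\sqrt{Q-1})=1/\sqrt{Q-1}$ and $a(\cdot)$ is strictly decreasing on $(2\sqrt{Q-1},+\infty)$, so the hidden-spectrum inequality $\|A_Y\|>\|A_{\bg^Q}\|=2\sqrt{Q-1}$, which the paper establishes for all the models in question, gives $(Q-1)a(\|A_Y\|)^2<1$ outright and the inner sum is uniformly bounded in $n$ and $y$; no comparison with $(Q-1)^n/|S_n|$ and no fine use of the bound \eqref{ub} is needed. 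This is exactly the mechanism hidden inside the paper's appeal to $\|R_{A_{\bg^Q}}(\|A_Y\|)\|<+\infty$. What your route buys is a sharper estimate in the $\bg^{Q,q}$ case, namely $n^3(q-1)^n/(Q-1)^n$ in place of the paper's cruder $|B_n^q|^2/|B_n^Q|$; the small price is that you must justify the geodesic identity $d(x,0)=d(x,y(x))+d(y(x),0)$, which does hold because it is an identity in the unperturbed tree $\bg^Q$ (subtrees of a tree are geodesically convex and $0\in S$), and both $\La_n$ and $S_n$ are defined through the balls of $\bg^Q$.
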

\begin{proof}
First of all we note that, in all the situations, 
$$
v(x)=\sum_{y\in S}R_{A_{\bg^Q}}(\|A_Y\|)_{x,y}w(y)\,.
$$
This leads to 
$$
\frac{\big\|v\lceil_{\ell^2(\La_n)}\big\|^2|S_n|}{|\La_n|}
\leq(1+\big\|R_{A_{\bg^Q}}(\|A_Y\|))\big\|)^2\frac{\big\|w\lceil_{\ell^2(S_n)}\big\|^2|S_n|}{|\La_n|}\,.
$$
By taking into account the form \eqref{weign} of $w$, we easily obtain the following estimates.
Concerning $\bh^Q$ we have
$\big\|w\lceil_{\ell^2(S_n)}\big\|^2\approx n^3$, which leads to
$$
\frac{\big\|v\lceil_{\ell^2(\La_n)}\big\|^2|S_n|}{|\La_n|}\approx\frac{n^4}{Q^n}\to0\,.
$$
For the case $\bg^{Q,q}$, 
$q=2$ leads to $w(x)=1$, identically on $S$. Concerning the case $q>2$, consider the function
$$
f(\xi)=\left(1+\frac{q-2}{q}\xi\right)(q-1)^{-\frac\xi2}\,,\quad \xi\geq1\,.
$$
It is not difficult to see that $f(1)<1$ and $f'(\xi)<0$, $\xi>1$. This leads to $w(x)<1$ whenever $x\in S\backslash\{0\}$. We finally get
$$
\frac{\big\|v\lceil_{\ell^2(\La_n)}\big\|^2|S_n|}{|\La_n|}
\leq(1+\big\|R_{A_{\bg^Q}}(\|A_Y\|))\big\|)^2\frac{|B_n^q|^2}{|B_n^Q|}\approx\frac{q^{2n}}{Q^n}\to0\,.
$$
\end{proof}
We end the present section by noticing that it is easily seen that the Perron--Frobenius eigenvector is no longer unique, in general for infinitely extended networks, and in particular for the situation considered here. In fact, on $\bg^Q$, $Q>2$, $\f_{q,x_0}(x)$, $x\in\bg^Q$ are 
Perron--Frobenius eigenvectors for the Adjacency, for any fixed root $x_0\in\bg^Q$. The same happens for $\bg^{Q,q}$, $q>2$ where in 
\eqref{3a3b} it is enough to choose $w(x)=\f_{q,x_0}(x)$, $x_0\in S\sim\bg^q$  is any fixed root,  where 
$$
\f_{q,y}(x)=\left(1+\frac{q-2}{q}d(x,y)\right)(q-1)^{-\frac{d(x,y)}2}\,,\quad x,y\in \bg^q\,.
$$
The choice of the normalization at $1$ on a fixed root (which is denoted by $0$ in our framework) selects in a unique way the Perron--Frobenius vector among the class previously described. For the networks $\bg^{2}\equiv\bz$ and 
$\bg^{Q,2}$, the Perron--Frobenius eigenvector for the Adjacency is unique (up to a multiplicative constant) thanks to the fact that the Adjacency is recurrent. It is expected that it is unique also for $\bh^Q$, thanks to the fact that $S\sim\bn$ has a unique end at infinity.

\section{statistical mechanics on infinitely extended networks}

In order to investigate the statistical properties of the pure hopping model describing Bardeen Cooper pairs on arrays of Josephson junctions on non homogeneous networks we report some standard notions useful in the sequel.

Let $(\ga,\a)$ be a dynamical system consisting of a non Abelian $C^*$--algebra and a one parameter group of $*$--automorphism $\a$.\footnote{No a--priori regularity assumption is made on the automorphisms group $\{\a_t\}_{t\in\br}$ .} The  state $\f$ on the $C^{*}$--algebra 
$\gb$ satisfies the Kubo--Martin--Schwinger (KMS for short)
boundary condition at inverse temperature $\b\in\br\backslash\{0\}$ w.r.t the group of
automorphisms $\{\t_{t}\}_{t\in\br}$ if
\begin{itemize}
\item[(i)] for every $A,B\in\gb$, $t\mapsto\f(A\a_{t}(B))$, $t\mapsto\f(\a_{t}(A)B)$ are continuous;
\item[(ii)] for each $f\in\widehat{\cd}$,
\begin{equation}
\label{mmodgnss}
\int\f(A\a_{t}(B))f(t)\di t=\int\f(\a_{t}(B)A)f(t+i\b)\di t\,,
\end{equation}
where  '' $\widehat{}$ '' stands for the Fourier transform and $\cd$ is the space  of the infinitely often differentiable, compactly supported functions on $\br$.
\end{itemize}
The following facts are well known. First, a KMS is automatically invariant w.r.t. the automorphism group $\a_t$. Second, the cyclic vector 
$\Om_{\f}$ of the Gelfand--Naimark--Segal (GNS for short) 
quadruple $\big(\pi_{\f},\ch_\f,U_\f,\Om_\f\big)$ is also separating for
$\pi_{\f}(\gb)''$. Denote $\s^{\f}$ its modular group.
According to the definition of KMS boundary condition, we have
\begin{equation*}
\s^{\f}_{t}\circ\pi_{\f}=\pi_{\f}\circ\a_{-\b t}\,.
\end{equation*}
We refer the reader to \cite{BR2}, Section 5.3 for equivalent formulations of the KMS condition,  proofs and further details. 

The $C^*$--algebras considered here are those arising from the {\it Canonical Commutation Relations} (CCR for short). Namely, let $\gph$ be a pre--Hilbert space equipped with the non degenerate inner product $\langle\,{\bf\cdot}\,,\,{\bf\cdot}\,\rangle$. Denote $\bar\gph$ its completion. Consider the following (formal) relations,
\begin{equation}
\label{cccrr}
a(f)a^{\dagger}(g)-a^{\dagger}(g)a(f)=\langle g,f\rangle\,\quad f,g\in\gph\,.
\end{equation}
It is well known that the relations \eqref{cccrr} cannot be realized by bounded operators. A standard way to realize them is to look at the {\it symmetric Fock space} $\cf_+(\bar\gph)$ on which the annihilators $a(f)$ and creators $a^{\dagger}(f)$ naturally act as unbounded closed, adjoint each other (i.e. $a(f)^*=a^{\dagger}(f)$) operators. This concrete representation of the CCR is called the {\it Fock representation}. 
An equivalent description for the CCR is to put on $\cf_+(\bar\gph)$,
\begin{equation}
\label{phop}
\Phi(f):=\overline{\frac{a(f)+a^{\dagger}(f)}{\sqrt{2}}}
\end{equation}
and define the {\it Weyl operators}
$W(f):=\exp{i\Phi(f)}$. The Weyl operators are unitary and satisfy the rule
\begin{equation}
\label{cccrr1}
W(f)W(g)=e^{i\frac{\im(f,g)}{2}}W(f+g)\,,\quad f,g\in\gph\,.
\end{equation}
It is well known that the representation on $\cf_+(\bar\gph)$ of $\gw(\gph)$ described above, is faithful.
We adopt the {\it Weyl algebra} as the definition of the CCR algebra, the last being the abstract 
$C^*$--algebra $\gw(\gph)$ generated by the operators $\{W(f)\}_{f\in\gph}$ satisfying the relations \eqref{cccrr1}. Sometimes, it is also useful to compute directly the expectation values of combinations of products of annihilators and creators (equivalently field operators \eqref{phop}) in a given state.

Let $H$ be a self adjoint operator acting on $\bar\gph$. Suppose that $e^{itH}\gph\subset\gph$. Then  
the one parameter
group of automorphisms $T_tf:=e^{itH}f$ defines a one parameter group of $*$--automorphisms 
$\a_t$ of $\gw(\gph)$ by putting $\a_t(W(f)):=W(T_t f)$. The latter is called the one parameter group of {\it Bogoliubov automorphisms} generated by $T_t$.

A representation $\pi$ of the Weyl algebra $\gw(\gph)$ is {\it regular} if
the unitary group $\l\in\br\mapsto\pi(W(\l f))$ is continuous in the
strong operator topology, for any $f\in \gph$.  A state $\f$ on $\gw(\gph)$
is regular if the associated GNS representation is regular. 

The {\it quasi--free} states on the Weyl algebra are those of interest for our purposes. Such states 
$\om$ are uniquely determined by the two--point functions $\om(a^{\dagger}(f)a(g))$, $f,g\in\gph$.
A standard textbook for CCR is \cite{BR2} (cf. Section 5.2) to which the reader is referred for proofs, literature and further details.

Let $G$ be any graph. To simplify notations, we denote the sums on the vertices $VS\subset VG$ of subgraphs 
$S\subset G$, and of $VG$ itself,  directly by sums on
$S$ or $G$, respectively. 
Consider a subspace $\gph$ of $\ell^2(G)$, which contains the indicator functions $\{\d_x\mid x\in G\}$.  A representation $\pi$ of the Weyl algebra $\gw(\gph)$ is said to be {\it
locally normal (w.r.t. the Fock representation)} if
$\pi\lceil_{\gw(\ell^2(\La))}$ is quasi equivalent to the Fock
representation of $\gw(\ell^2(\La))$.  A state on $\gw(\gph)$ is said to be locally
normal if the associated GNS representation is locally normal.  A locally normal
state $\f$ does have finite local density
\begin{equation*} 
\r_{\La}(\f):=\frac{1}{|\La|}
\sum_{j\in\La}\f(a^{\dagger}(\d_{j})a(\d_{j}))
\end{equation*}
even if the mean density  might be infinite (e.g. 
$\limsup_{\La\uparrow G}\r_\La(\f)=+\infty$). 
Let $\La_{n}\uparrow G$ be an {\it exhaustion} of $G$, that is
a sequence of finite regions invading the graph $G$, together with
a sequence of states $\{\om_{\La_{n}}\}$ on $\gw(\ell^2(\La_n))$. 
\begin{Prop} (Lemma 3.2 of \cite{FGI1})
\label{add2}
 Suppose that    
\begin{equation*} 
\lim_{n}\om_{\La_{n}}(a^{\dagger}(\d_{j})a(\d_{j}))=+\infty
\end{equation*}
for some $j\in G$. 
Then $\om(W(v)):=\lim_n\om_{\La_{n}}(W(v))$ does not define any locally normal state on $\ccr(\gph)$, where $\gph\subset\ell^2(G)$ is any subspace containing the finite supported sequences.
\end{Prop}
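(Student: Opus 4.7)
The plan is to obtain a contradiction by showing that the candidate limit $\om$ cannot be regular, whereas local normality automatically implies regularity. In the setting of the paper the finite volume states $\om_{\La_n}$ are grand canonical Gibbs states of a bilinear Bose Hamiltonian on $\La_n$, hence gauge invariant and quasi--free on the Weyl algebra $\gw(\ell^2(\La_n))$. Consequently, their generating functionals are Gaussian,
\[
\om_{\La_n}(W(f))=\exp\left(-\frac{1}{4}\langle f,(2 S_{\La_n}+\idd)f\rangle\right),\qquad f\in\ell^2(\La_n),
\]
where $S_{\La_n}\geq 0$ is the one--particle symbol of the state, whose diagonal entry satisfies $\langle\d_i,S_{\La_n}\d_i\rangle=\om_{\La_n}(a^{\dagger}(\d_i)a(\d_i))$.

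Specialising to $f=\lambda\d_j$, with $j$ the vertex fixed in the hypothesis, the Gaussian identity reduces to
\[
\om_{\La_n}(W(\lambda\d_j))=\exp\left(-\frac{\lambda^2}{4}\bigl(2\om_{\La_n}(a^{\dagger}(\d_j)a(\d_j))+1\bigr)\right),
\]
and the hypothesis $\om_{\La_n}(a^{\dagger}(\d_j)a(\d_j))\to+\infty$ forces the pointwise limit to collapse to $\idd_{\{0\}}(\lambda)$. Hence, if the prescription $\om(W(v)):=\lim_n\om_{\La_n}(W(v))$ were to define a state on $\ccr(\gph)$, one would have $\om(W(\lambda\d_j))=0$ for every $\lambda\neq 0$ whereas $\om(W(0))=1$. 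In particular, the one parameter unitary subgroup $\lambda\mapsto\pi_\om(W(\lambda\d_j))$ could not be strong operator continuous at the origin in the GNS representation of $\om$, i.e.\ $\om$ would fail to be regular.

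To conclude I would invoke the standard implication ``locally normal $\Rightarrow$ regular'': the restriction of a locally normal $\om$ to $\gw(\ell^2(\La))$, for $\La$ any finite region containing $j$, is by definition of local normality a normal state of the Fock representation; since the Fock representation is regular, the composition with any Weyl one--parameter subgroup is automatically strong operator continuous. This rules out the discontinuity exhibited in the second step and proves the assertion.

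The main technical input is the quasi--free Gaussian formula used at the outset. It is a classical fact for grand canonical Gibbs states of a quadratic Hamiltonian (cf.\ Section 5.2.5 of \cite{BR2}) and applies uniformly to all the sequences $\{\om_{\La_n}\}$ considered in the paper, so it requires no further justification. Removing the quasi--freeness assumption would be considerably more delicate: one would have to bound the characteristic function directly using the identity $\Phi(\d_j)^2=a^{\dagger}(\d_j)a(\d_j)+\frac12(a(\d_j)^2+a^{\dagger}(\d_j)^2)+\frac12$ on the GNS space, exploit gauge invariance to control the cross--term, and then invoke tightness of characteristic functions to recover the required discontinuity; this is where the argument becomes subtle.
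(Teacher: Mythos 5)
The paper itself gives no proof of this proposition --- it simply cites Lemma 3.2 of \cite{FGI1} --- so the comparison can only be with the standard argument, which is exactly the one you give: the Gaussian formula for gauge--invariant quasi--free states forces $\om_{\La_n}(W(\l\d_j))\to 0$ for $\l\neq0$ while $\om(W(0))=1$, so any limit state fails to be regular on $\gw(\ell^2(\La))$ for $\La\ni j$, and local normality would force regularity there because a normal state of the (irreducible) Fock representation of $\gw(\ell^2(\La))$ is given by a density matrix, for which $\l\mapsto\tr(\r\,W_F(\l\d_j))$ is continuous. Your proof is correct, and it is consistent with the paper's own formula $\om_D(W(u))=e^{-\|u\|^2/4}e^{-\om_D(a^\dagger(u)a(u))/2}$. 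One point deserves emphasis, though: the quasi--freeness of the $\om_{\La_n}$ is not a removable convenience but a necessary hypothesis. As stated, for an arbitrary sequence of states the claim is false: take $\om_{\La_n}=(1-\tfrac1n)\om_F+\tfrac1n\psi_n$ with $\om_F$ the Fock vacuum and $\psi_n$ an $n^2$--particle vector state at the site $j$; then $\om_{\La_n}(a^\dagger(\d_j)a(\d_j))=n\to+\infty$, yet $\om_{\La_n}(W(v))\to\om_F(W(v))$, which is the (locally normal) Fock state. So the ``considerably more delicate'' general argument you sketch at the end cannot succeed; the proposition must be read as applying to the quasi--free finite--volume Gibbs states of the form \eqref{sato} actually used in the paper, which is precisely the setting in which your proof operates.
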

Our purposes concerns the pure hopping model on the networks $\bg^{Q,q}$ and $\bh^Q$ or, more generally
on a uniformly bounded connected network $X$ equipped with an exhaustion $\{X_n\}$ for which the Adjacency admits the IDS (cf. \cite{F1}).
As previously explained, all the physical properties of the emerging quasi--free multi--particle model are encoded in the one--particle pure hopping Hamiltonian. Namely, the graph $X$ will be one of such networks, $Y$ a density zero additive perturbation of $X$ as described above, equipped with exhaustions $\{X_n\}_{n\in\bn}$, $\{Y_n\}_{n\in\bn}$. We denote by $G$ one of such graphs, equipped with the finite volume exhaustion $\{\La_n\}_{n\in\bn}$. The pure hopping Hamiltonian, after normalizing such that the bottom of the energy is zero, assumes the form
\begin{equation*}
\idd\|A_G\|-A_G\,.
\end{equation*}
The finite volume Hamiltonians will be 
$H_n:=P_{n}HP_{n}$ acting on $\ell^2(\La_n)$, $P_{n}\equiv P_{\ell^2(\La_n)}$ being the selfadjoint projection onto $\ell^2(\La_n)$. 

Fix a general bounded positive Hamiltonian $H\in\cb(\ell^2(G))$ admitting the IDS for a fixed exhaustion as for the pure hopping 
one.\footnote{As usual we normalize $H$ such that $0\in\s(H)$ is the bottom of the spectrum $\s(H)$ of the Hamiltonian $H$.}
Consider the finite volume densities of the states $N_n$ relative to $H_n$, and that $N$ relative to $H$. 
Define
\begin{align*}
E_{0}(H):=&\lim_{\La_{n}\uparrow G}
 \bigg(\inf\supp\big(N_n)\bigg)=0\,\\
E_{m}(H):=&\inf\supp\bigg(\lim_{\La_{n}\uparrow G}
N_n\bigg) \equiv\inf\supp\big(N\big)\,,\nn
\end{align*}
where the limit in the first equation exists by Lemma 3.4 of \cite{FGI1}, and that in the second one is meaningful directly by the definition of the IDS.
We always get $E_m(H)\geq E_0(H)=0$. 
\begin{Dfn} 
\label{hyspe}
If $E_{0}(H)< E_{m}(H)$ we say
that there is a {\it (low energy) hidden spectrum}, see e.g.
\cite{BCRSV}.
\end{Dfn}
The following assertions can be shown for all the situations under consideration. First, $H$ is continuous in $0$ (see also \cite{FGI1}, Proposition 3.9) in all the situation considered here. Second, in the unperturbed situation the hidden spectrum does not occur (by direct inspection of \eqref{3} for $\bg^Q$, or by Theorem 5.2 of \cite{FGI1} in the case of periodic amenable graphs).
If $Y$ is any perturbation of $X$, we have for the $\d$ appearing in \eqref{ddee}, 
$$
E_{0}(H_Y)-E_{m}(H_Y)=-E_m(H)=\|A_X\|-\|A_Y\|=\d\,.
$$
The opposite of $E_m(H)$ then plays the role of a chemical potential in determining the critical density of the perturbed graphs under consideration. Indeed, consider the formula for the mean density at the inverse temperature $\b$ of the particles (cf. \cite{BR2}) for the model with Hamiltonian 
$H\geq0$, and chemical potential $\m\leq0$,
\begin{equation*} 
\r_{H}(\b,\m):=\int\frac{dN_{H}(h)}{e^{\b(h-\m)}-1}\,,
\end{equation*}
and the corresponding {\it critical density}
\begin{equation*} 
\r^{H}_{c}(\b):=\int\frac{dN_{H}(h)}{e^{\b h}-1}\equiv \r_{H}(\b,0)\,.
\end{equation*}
By taking into account  Proposition \ref{6}, we get
$$
\r^{H_Y}_{c}(\b)=\r_{H_X}(\b,-E_m(H_Y))\,.
$$
This means $\r^{H_Y}_{c}(\b)\leq\r^{H_X}_{c}(\b)$. In addition, if there is hidden spectrum, $\r^{H_Y}_{c}(\b)$ is always finite and 
$\r^{H_Y}_{c}(\b)<\r^{H_X}_{c}(\b)$. It can happen that, after perturbing a graph whose critical density is infinite, we get finite critical density. As for the networks under consideration in the present paper, the unperturbed critical density is always finite (cf. \cite{BDP}), the only effect of the appearance of the hidden spectrum on the critical density is that it decreases w.r.t. the unperturbed one. 

It is well known that the condensation regime is described by $\m=0$, see e.g. Section 5.2 of \cite{BR2}, or Section 3 of \cite{FGI1}. Indeed, 
one starts from the Gibbs grand--canonical ensemble at finite volume $\La\subset G$ and fixed density. 
One fixes an inverse temperature $\b>0$ together with the particle density $\r$, and determines the
finite volume chemical potential $\m(\La)$ by solving
\begin{equation}
\label{fvden}
 \r_{H_{\La}}(\b,\m(\La))=\r\,,
 \end{equation}
 see e.g. Fig. \ref{figg1}.\footnote{It is also customary to fix the {\it  activity} $z:=e^{\b\m}$, instead of the chemical potential.}
 \begin{figure}[ht]
     \centering
     \psfig{file=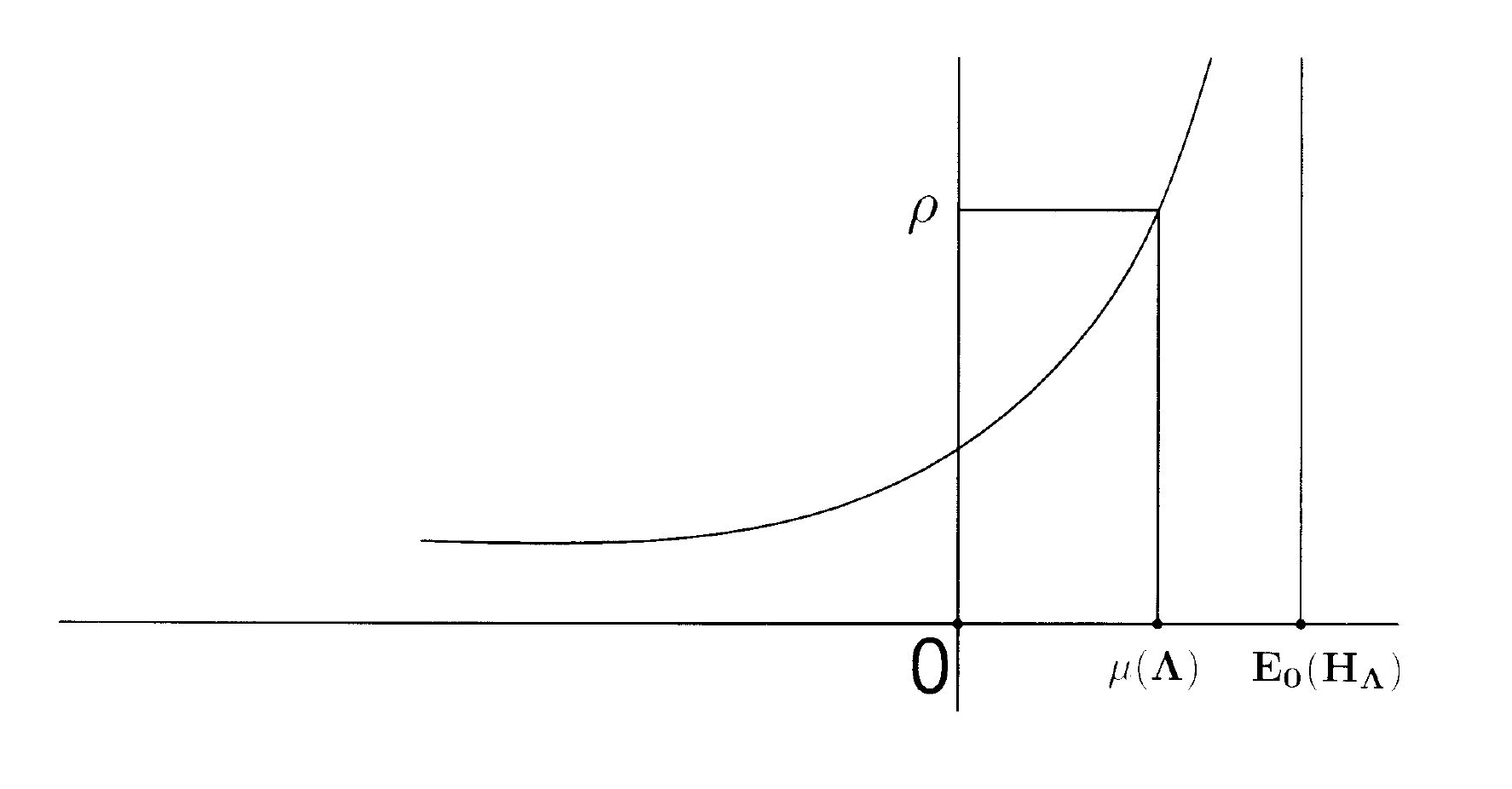,height=2.5in} 
     \caption{{}}
     \label{figg1}
     \end{figure}
The infinite volume limit can be investigated by means of the reference exhaustion 
$\{\La_n\}_{n\in\bn}$. To take into account also the very different situation appearing in non homogeneous situation, we start by fixing any general sequence of chemical potential 
$$
\m_n<\|A_G\|-\|A_{\La_n}\|\equiv E_{0}(H_{\La_n})\,,
$$
which we can suppose to converge (eventually passing to a subsequence) to some $\m$. In the case when such a sequence is recovered by using \eqref{fvden}, we put $\m_n:=\m(\La_n)$. Since $E_{0}(H_{\La_n})\to E_{0}(H)=0$, we get $\m\leq0$.
 The finite volume state with chemical density $\rho$ at $\b$ is described by
 the two point function
 \begin{equation} \label{sato}
     \om_{\La_{n}}(a^{\dagger}(\xi)a(\eta))=\big\langle  
     (e^{\b(H_{\La_{n}}-\m_{n}\idd)}-\idd)^{-1}\xi,\eta\big\rangle,
 \end{equation}
 where $\xi,\eta\in\ell^{2}(\La_n)$.
 Thermodynamical states are then
 described as limits of the finite volume states as above. 
Concerning the infinite volume limit of the finite volume density, we report a result (cf. \cite{BCRSV}), whose proof relies upon the analogous one, Proposition 3.8 of \cite{FGI1}. It describes in a rigorous way how it is possible to define the amount of the condensate in the infinite volume limit for an arbitrary sequence of finite volume chemical potential $\m_n\to0$. Due to the general character of such a result, we suppose that $H$ is a bounded positive Hamiltonian on a network $G$ equipped with an exhaustion $\La_n$ such that $H$ admits the IDS $N$ (in the situation under consideration 
$H=\|A_G\|\idd-A_G$).  Denote as usual $P_n:=P_{\ell^2(\La_n)}$, and 
$H_n:=P_nHP_n\equiv H_{\La_n}$. To avoid the almost trivial situation $\m<0$, we restrict the matter to the condensation regime $\m=0$.
\begin{Prop} 
\label{ennz}
Under the conditions listed above, the following assertions hold true for each sequence $\m_n<E_0(H_n)$ satisfying $\m_n\to0$.
\begin{itemize}
\item[(i)] Suppose that $H$ admits low energy hidden spectrum. Then for each continuous mollifier $F$ which is identically $0$ in a neighborhood of $0$, and $1$ 
in a neighborhood of $\supp N$, we have
$$
\lim_{\La_{n}\uparrow G}
\int\frac{F(x)}{e^{\b(x-\m_{n})}-1}\di N_{H_{\La_{n}}}(x)
=\r_c^H(\b)<+\infty\,.
$$
\item[(ii)] Suppose that $\lim_{x\downarrow0}N(x)=0$. Consider a sequence $\{F_\eps\}_{\eps>0}$ of increasing continuous mollifiers, all vanishing in a neighborhood of $0$ and converging almost everywhere (w.r.t the measure determined by $N$) to $1$. Then
$$
\lim_{\eps\downarrow0}\lim_{\La_{n}\uparrow G}
\int\frac{F_\eps(x)}{e^{\b(x-\m_{n})}-1}\di N_{H_{\La_{n}}}(x)
=\r_c^H(\b)\,.
$$
In addition, if $\r_c^H(\b)<+\infty$, one can replace for the continuous mollifiers 
$\{F_\eps\}_{\eps>0}$, the condition of monotony with uniform boundedness.
\end{itemize}
\end{Prop}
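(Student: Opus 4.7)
The plan is to exploit the defining property of the IDS, namely that the Borel probability measures $\di N_{H_{\La_n}}$ on $[0,\|H\|]$ converge weakly to $\di N$. The singular kernel $(e^{\b(x-\m_n)}-1)^{-1}$ blows up at $x=0$, but multiplying by any mollifier that vanishes near $0$ produces a family of continuous, uniformly bounded integrands. The analytic input I would isolate first is the observation that, if $g_n\to g$ uniformly on $[0,\|H\|]$ and $\di N_{H_{\La_n}}\to\di N$ weakly, then
$$
\left|\int g_n\,\di N_{H_{\La_n}}-\int g\,\di N\right|\leq\|g_n-g\|_\infty+\left|\int g\,\di N_{H_{\La_n}}-\int g\,\di N\right|\longrightarrow 0\,.
$$

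For part (i) the hidden-spectrum hypothesis places $\supp N\subset[E_m(H),\|H\|]$ with $E_m(H)>0$, so I would pick $F$ vanishing on $[0,\d]$ for some $\d\in(0,E_m(H))$ and identically $1$ on a neighborhood of $\supp N$. For $n$ large enough that $|\m_n|<\b\d/2$, the functions $g_n(x):=F(x)(e^{\b(x-\m_n)}-1)^{-1}$ are continuous on $[0,\|H\|]$, uniformly bounded by $(e^{\b\d/2}-1)^{-1}$, and converge uniformly to $g(x):=F(x)(e^{\b x}-1)^{-1}$. Applying the weak-convergence observation gives $\int g_n\,\di N_{H_{\La_n}}\to\int g\,\di N$, and since $F\equiv 1$ on $\supp N$ this limit equals $\r_c^H(\b)$. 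Finiteness is automatic because the denominator is bounded below by $e^{\b E_m(H)}-1$ on $\supp N$.

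For part (ii) there is no spectral gap, but for each fixed $\eps>0$ the mollifier $F_\eps$ still vanishes in a neighborhood of $0$, so exactly the same argument as in (i) yields
$$
\lim_n\int\frac{F_\eps(x)}{e^{\b(x-\m_n)}-1}\,\di N_{H_{\La_n}}(x)=\int\frac{F_\eps(x)}{e^{\b x}-1}\,\di N(x)\,.
$$
Taking $\eps\downarrow 0$, the hypothesis $\lim_{x\downarrow 0}N(x)=0$ ensures that $\di N$ carries no mass at $\{0\}$, hence $F_\eps(x)(e^{\b x}-1)^{-1}\nearrow(e^{\b x}-1)^{-1}$ almost everywhere w.r.t.\ $\di N$. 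Monotone convergence then produces $\r_c^H(\b)$, finite or infinite. For the final variant, where the monotony of $\{F_\eps\}$ is relaxed to a common uniform bound $C$ under the additional assumption $\r_c^H(\b)<+\infty$, I would replace monotone convergence by dominated convergence, using $C(e^{\b x}-1)^{-1}$ as the integrable dominant.

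The main obstacle, and the reason the statement is non--trivial, is the simultaneous movement of the measure and the integrand: boundary effects may push the support of $\di N_{H_{\La_n}}$ below $E_m(H)$, precisely into the region where the unmollified kernel explodes, while $\m_n$ is allowed to approach $0$ at an arbitrary rate. The role of $F$ (resp.\ $F_\eps$) is precisely to insulate the calculation from these two competing degeneracies just long enough for weak convergence of the IDS to take effect; once the mollifier is in place, the verification reduces to standard measure--theoretic bookkeeping.
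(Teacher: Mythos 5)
Your argument is correct and follows essentially the same route as the paper's proof: split the error into the uniform convergence of the kernel $(e^{\b(x-\m_n)}-1)^{-1}\to(e^{\b x}-1)^{-1}$ on the set where the mollifier is supported, plus the weak convergence $\di N_{H_{\La_n}}\to\di N$ coming from the definition of the IDS, and then conclude (ii) by monotone or dominated convergence using the absence of an atom at $0$. The only blemish is the condition ``$|\m_n|<\b\d/2$,'' which should read $|\m_n|<\d/2$ so that the singularity at $x=\m_n$ falls inside the region where $F$ vanishes; this is a typographical slip and does not affect the argument.
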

\begin{proof}
We start by noticing that $\frac{F(x)}{e^{\b x}-1}$, and all the $\frac{F_\eps(x)}{e^{\b x}-1}$ are continuous in a common neighborhood of all the supports of the measures determined by the cumulative functions $N_{H_n}$ and $N_H$.

(i)  By the existence of the hidden spectrum, we have
     \begin{align*}
	 &\bigg|\int\frac{F(x)}{e^{\b(x-\m_n)}-1}
	 \di N_{H_n}(x)
	 -\int\frac{ \di N_{H}(x)}{e^{\b x}-1}
	\bigg|\\
	 \leq&\bigg|\int\bigg(\frac{1}{e^{\b(x-\m_{n})}-1}
	 -\frac{1}{e^{\b x}-1}\bigg)F(x)
	 \di N_{H_n}(x)\bigg|\\
	 +&\bigg|\int\frac{F(x)}{e^{\b x}-1}\di N_{H_n}(x)
	 -\int\frac{F(x)}{e^{\b x}-1}\di N_{H}(x)\bigg|\to0,
     \end{align*}
     since
     $\frac{1}{e^{\b(x-\m_{n})}-1}\to\frac{1}{e^{\b x}-1}$,
     uniformly on the support of $F$, and the second summand
     goes to zero because $\int f(x) \di N_{H_n}(x)\to\int f(x) \di N_{H}(x)$ for each continuous function $f$
     in a common neighborhood of all the supports of the involved measures $\{\di N_n\}_{n\in\bn}$ and $\di N$.
 
 (ii) The same computation as before leads for each $\eps$,
$$     
\int\frac{F_\eps(x)}{e^{\b(x-\m_n)}-1}
	 \di N_{H_n}(x)\to
	 \int\frac{ F_\eps(x)}{e^{\b x}-1}\di N_{H}(x)\,.
$$
The proofs follows either by the Monotone or the Dominated Convergence Theorem, as 
$0=N_H(0)=\lim_{x\downarrow0}N_H(x)$.
\end{proof}
Thanks to the above result, the quantity
 $$
 n_{0}:=\lim_{\eps\downarrow0}\lim_{\La_{n}\uparrow X}
 \int\frac{1-F_{\eps}(h)}{e^{\b(h-\m_{n})}-1}dN_{H_{\La_{n}}}(h)
 $$
 is well--defined and independent of the particular choice of the
 mollifiers as above. Such a quantity describes the amount of the condensate in the ground state.  Indeed, if the sequence of chemical potentials are obtained through \eqref{fvden} by fixing the mean density $\r$, we have
 \begin{equation*}
     \r=n_{0}+\r(\b,\m)\,.
 \end{equation*}
 \begin{Rem}
 \itm{i} $\m<0$ if and only if $\r<\r_{c}(\b)$. 
  \itm{ii} For any $\b>0$, $\m<0$, the sequence \eqref{sato}
     converges pointwise to a state $\om$, whose two--point function
     is given by
     $$
     \om(a^{+}(\xi)a(\eta))=\big\langle(e^{\b(H-\m
    \idd)}-\idd)^{-1}\xi,\eta \big\rangle\,.
     $$
     Moreover, the {\it density} $\r(\om)$ of the state $\om$, defined by 
     \begin{equation*} 
	 \r(\om):=\limsup_{\La_{n}\uparrow X}\frac{1}{|\La_{n}|}
	 \sum_{j\in\La_{n}}\om_{\La_n}(a^{\dagger}(\d_{j})a(\d_{j}))\,,
     \end{equation*}
     satisfies $\r(\om)=\r(\b,\m)$.     
\itm{iii} The transience of $A$ is a necessary and sufficient condition (see Proposition \ref{add4} and Remark \ref{suftra})
for  the existence of locally normal states on $\gw(\gph)$ describing condensation phenomena, that is when for $\m=0$, for the pure hopping model.  
   \itm{vi} Observe that $n_{0}>0$ if and only if
  $\r>\r_{c}(\b)$.  Indeed, $n_0=0$ whenever $\m<0$,
  hence $n_0>0\Rightarrow \m=0$.  As a consequence,
  $0<n_0=\r-\r(\b,\m)=\r-\r_c(\b)$, i.e. $\r>\r_c(\b)$.  Conversely,
  $\r>\r_c(\b)$ implies $\m=0$, hence
  $n_0=\r-\r(\b,\m)=\r-\r_c(\b)>0$.
   \itm{v} Observe that $n_{0}>0$ can be obtained only if
  $\m_{n}\to0$.  Indeed, $\mu=0$ is a necessary condition for the
  occurrence of BEC.
 \itm{vi} Observe that $n_0>0$ can occur also when the infinite volume limit of the two--point function leads to non locally normal states i.e.
$\om_{\La_{n}}(a^{\dagger}(\d_{j})a(\d_{j}))\to+\infty$.
  \itm{vii} Observe that it is possible to have locally normal states exhibiting a non trivial amount of condensate even if $\r=\r_c(\b)$, see Section 10 of \cite{FGI1}, or Sections \ref{dynkmss}, \ref{dynkmss1} below.
 \end{Rem} 
To simplify the notations, from now on we put $\b=1$ throughout the paper if it is not otherwise specified. We also put $\r(\m):=\r^H(1,\m)$,  
$\r_{\La}(\m):=\r^{H_{\La}}(1,\m)$, and finally $\r_c:=\r^H(1,0)$, $H$ being the Hamiltonian of the model.

\section{the existence of the dynamics and the KMS states exhibiting BEC}
\label{dynkmss}
 
Let $Y$ be any of the perturbed graphs under investigation. In order to study states describing local properties, and in particular those which are locally normal, we start to consider a subspace 
$\gph\subset\ell^2(Y)$ which contains all the canonical elements $\d_j$, and which is stable under the dynamics generated by 
$H:=\|A_Y\|-A_Y$. To this end, we define
\begin{equation}
\label{hidyna}
\gph:=\ssv\left\{e^{\imath tH}\d_j\mid t\in\br\,,j\in Y\right\}
\end{equation}
for the algebraic span (no closure in $\ell^2(Y)\equiv\ell^2(\bg^Q)$) of all the elements of the form $e^{\imath tH}\d_j$. By construction $e^{\imath tH}\gph\subset\gph$. For our aims, we need also that the pairing
$h\in\gph\mapsto\langle h,v\rangle$ is well defined, when $v$ is the Perron Frobenius vector, described in \cite{F}, for the adjacency matrix $A_Y$ in each of the cases under consideration, see \eqref{3a3b} and \eqref{weign}. For $u\in\ell^2(Y)$ we also define $|u|\in\ell^2(Y)$ by putting $|u|(x):=|u(x)|$, $x\in Y$.
\begin{Lemma}
\label{rispf}
Let $f:=\sum_{n=0}^{+\infty}a_nz^n$ be a function analytic in a neighborhood of $\s(A_Y)$ such that $a_n\geq0$, $n\in\bn$. Then 
$f(A_Y)v=f(\|A_Y\|)v$.
\end{Lemma}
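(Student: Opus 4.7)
The plan is to argue pointwise on $Y$, exploiting the sign conditions $a_n\ge 0$, $(A_Y)_{xy}\ge 0$ and $v(x)>0$, and then to reconcile the resulting pointwise identity with the operator meaning of $f(A_Y)$ in the way needed for the construction of the dynamics below.

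First, the Perron--Frobenius identity $(A_Y v)(x)=\sum_y (A_Y)_{xy}v(y)=\|A_Y\|v(x)$ holds by definition of $v$ as a pointwise relation at every vertex, the sum being finite because $A_Y$ has uniformly bounded row sums. A straightforward induction on $n$ (at each step only absolutely convergent positive sums are shuffled) yields $A_Y^n v=\|A_Y\|^n v$ pointwise on $Y$ for every $n\in\bn$. Hence for any polynomial $p(z)=\sum_{n=0}^N a_n z^n$ with $a_n\ge 0$ one has $p(A_Y)v=p(\|A_Y\|)v$ pointwise on $Y$.

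Second, denote by $f_N$ the $N$-th Taylor polynomial of $f$. Since $f$ is analytic in a neighborhood of $\sigma(A_Y)\ni\|A_Y\|$, the scalar series $\sum_n a_n\|A_Y\|^n$ converges absolutely to $f(\|A_Y\|)$. The partial sums
$(f_N(A_Y)v)(x)=f_N(\|A_Y\|)v(x)$
are then non-negative and monotonically increasing in $N$, bounded by $f(\|A_Y\|)v(x)<+\infty$; interpreting $f(A_Y)v$ pointwise as $\lim_N f_N(A_Y)v$, monotone convergence yields $f(A_Y)v=f(\|A_Y\|)v$ pointwise on $Y$, which is the asserted identity.

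Third, to match this pointwise identity with the operator definition of $f(A_Y)$ and make it usable against elements of $\gph$, one checks that for any $h\in\gph$
\begin{equation*}
\langle h,f(A_Y)v\rangle=\langle f(A_Y)h,v\rangle=\sum_{n\ge 0}a_n\langle A_Y^n h,v\rangle=f(\|A_Y\|)\langle h,v\rangle\,,
\end{equation*}
where self-adjointness of $A_Y$ with $a_n\in\br_+$ gives the first equality, norm convergence in $\ell^2(Y)$ of $f(A_Y)h=\sum_n a_n A_Y^n h$ together with the well-posedness of the pairing $\langle\,\cdot\,,v\rangle$ on $\gph$ (hypothesized right before the statement of the lemma) allows the interchange of sum and pairing in the second, and the polynomial step combined with the same pairing hypothesis gives the third. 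The only real obstacle is interpretational: $v$ is in general a weight and not an $\ell^2$-vector, so $f(A_Y)v$ is not defined a priori by the bounded Borel functional calculus. The hypothesis $a_n\ge 0$ is exactly what precludes cancellations and makes the pointwise/monotone definition unambiguous and compatible with pairing against $\gph$.
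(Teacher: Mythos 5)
Your proposal is correct and follows essentially the same route as the paper: the paper's proof is precisely the combination of the term-by-term Perron--Frobenius identity $A_Y^n v=\|A_Y\|^n v$ with the Monotone Convergence Theorem, justified by the positivity of the entries of $A_Y$, of the coefficients $a_n$, and of $v$. Your additional third step (compatibility with the pairing against $\gph$) is a reasonable interpretational supplement but is not part of, nor needed for, the paper's argument.
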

\begin{proof}
As the adjacency matrix has positive entries, all the involved series have positive entries. Thus, by the Monotone Convergence Theorem we get
$$
f(A_Y)v=\sum_{n=0}^{+\infty}a_nA_Y^nv=\sum_{n=0}^{+\infty}a_n\|A_Y\|^nv
=\left(\sum_{n=0}^{+\infty}a_n\|A_Y\|^n\right)v=f(\|A_Y\|)v\,.
$$
\end{proof}
The following functions, defined in $\br\backslash[-2\sqrt{Q-1},2\sqrt{Q-1}]$,
\begin{equation}
\label{000}
a(\l):=\frac{1-\sqrt{1-\frac{4(Q-1)}
{\l^2}}}{\frac{2(Q-1)}{\l}}\,,
\end{equation}
\begin{equation}
\label{333}
\m(\l):=\frac{Q-2+Q\sqrt{1-\frac{4(Q-1)}{\l^2}}}
{\frac{2(Q-1)}{\l}}
\end{equation}
are useful in the sequel. 
If $S\subset \bg^Q$, it was shown in \cite{F} that, for the situation under consideration, the operator $S(\l)$ in \eqref{s2s} describing the Resolvent of $A_Y$ through the Krein formula \eqref{k2s} assumes the form
$$
S(\l)=P_{\ell^2(S)} R_{A_{\bg^{Q}}}(\l)P_{\ell^2(S)}\,,
$$
eventually acting on $\ell^2(S)$.\footnote{With an abuse of notations, we are dropping the dependence on the subset $S$ on which is allocated the perturbation.}
In the cases under consideration when $Y$ is $\bg^{Q,q}$, $\bh^Q$, $S\sim \bg^q$ (where $\bg^2=\bz$) and $S\sim \bn$, respectively. In all such situations, \eqref{s2s} is proportional to the convolution by \eqref{000}, and for $z\in\bc\backslash[-2\sqrt{Q-1},2\sqrt{Q-1}]$ it assumes the form
\begin{equation}
\label{cvskr}
(S(z)u)(x)=\frac{\sum_{y\in S}a(z)^{d(x,y)}u(y)}{\m(z)}\,,
\end{equation}
provided at least that the series in \eqref{cvskr} converges pointwise.
Here, $d$ is the distance function given in \eqref{disteqa}, and $a(z)$ and $\m(z)$ are the analytic continuations in 
$\bc\backslash[-2\sqrt{Q-1},2\sqrt{Q-1}]$ of the corresponding objects given in \eqref{000} and \eqref{333}, respectively. It was shown in \cite{F} that \eqref{cvskr} is meaningful and defines an element of $\ell^2(\bg^Q)$, at least when $\l>\l_*\equiv\|A_Y\|$. Now we show that the series in 
\eqref{cvskr} converges and defines an element in $\ell^2(\bg^Q)$ for $z\in\bc$ such that $|z|$ is sufficiently large. 
\begin{Lemma}
\label{rispf1}
Let $a(z)$ and $\m(z)$ as above, and $r>\|A_Y\|>2\sqrt{Q-1}$. Then
$$
\G:=\max_{\{z\mid|z|=r\}}\frac{|a(z)|}{|\m(z)|}=\frac{a(r)}{\m(r)}\,.
$$
\end{Lemma}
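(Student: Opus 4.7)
The plan is to reduce the claim to the positivity of the Laurent coefficients at infinity of the ratio $g(z):=a(z)/\mu(z)$; once that positivity is established, the desired bound on $|z|=r$ is an immediate consequence of the triangle inequality, with equality attained at the positive real point $z=r$.

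First, I would establish the algebraic identity
\[
\mu(\lambda)=\lambda-Q\,a(\lambda),
\]
which follows by direct substitution from the definitions \eqref{000} and \eqref{333}. Equivalently, $a(\lambda)$ is characterized by $(Q-1)a^{2}-\lambda a+1=0$, i.e.\ by $a=1/(\lambda-(Q-1)a)$; iterating this identity one recovers the Catalan expansion
\[
a(z)=\sum_{n\ge 0}C_{n}(Q-1)^{n}z^{-(2n+1)},
\]
absolutely convergent for $|z|>2\sqrt{Q-1}$ and with all coefficients nonnegative. In particular $a(z)/z$ has a Laurent expansion at infinity in the variable $1/z^{2}$ with nonnegative coefficients.

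Using the identity $\mu=z-Qa$, I rewrite
\[
g(z)=\frac{a(z)/z}{1-Q\,a(z)/z}.
\]
For $r>\|A_{Y}\|>2\sqrt{Q-1}$ one has $\mu(r)>0$, i.e.\ $Q\,a(r)/r<1$. The nonnegativity of the Laurent coefficients of $a(z)/z$ then yields $|a(z)/z|\le a(r)/r$ on $|z|=r$, so $|Q\,a(z)/z|<1$ uniformly there, and the geometric series
\[
g(z)=\sum_{k\ge 1}Q^{k-1}\bigl(a(z)/z\bigr)^{k}
\]
converges absolutely and uniformly on $|z|=r$. Each $(a(z)/z)^{k}$ is a Laurent series in $1/z^{2}$ with nonnegative coefficients, and therefore so is $g$:
\[
g(z)=\sum_{n\ge 1}g_{n}\,z^{-2n},\qquad g_{n}\ge 0.
\]

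The conclusion is immediate: for $|z|=r$,
\[
|g(z)|\le\sum_{n\ge 1}g_{n}\,r^{-2n}=g(r)=\frac{a(r)}{\mu(r)},
\]
with equality at $z=r$. There is no real obstacle; the only point to double‑check is that $r>\|A_{Y}\|$ lies inside the annulus of absolute convergence of the Laurent expansion of $a(z)/z$ at infinity and that the geometric‑series step is legitimate, both of which reduce to the positivity of $\mu$ on the positive real axis beyond $2\sqrt{Q-1}$, guaranteed by $\|A_{Y}\|>2\sqrt{Q-1}$ (i.e.\ by the presence of the hidden spectrum).
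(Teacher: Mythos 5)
Your proof is correct, but it follows a genuinely different route from the paper's. The paper substitutes $w=2\sqrt{Q-1}/z$, writes $|1-\sqrt{1-w^2}|^2$ and $|\tfrac{Q-2}{Q}+\sqrt{1-w^2}|^2$ as explicit trigonometric functions $f(\th)$, $g(\th)$ on the circle $|w|=\r$, checks that $f'$ and $g'$ do not vanish on $(0,\pi/2)$, and compares the endpoint values to conclude that the numerator is maximized and the denominator minimized at the same point $\th=0$, i.e.\ at $z=r$. You instead exploit structure: the identity $\mu(z)=z-Qa(z)$ (which does follow by direct computation from \eqref{000} and \eqref{333}, via $(Q-1)a^2-za+1=0$), the Catalan expansion of $a$ at infinity with nonnegative coefficients, and a geometric series for $a/\mu=\frac{a/z}{1-Qa/z}$, so that the bound $|a(z)/\mu(z)|\le a(r)/\mu(r)$ on $|z|=r$ drops out of the triangle inequality. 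Two small remarks: the interchange of summations needed to assert that $g$ itself has nonnegative Laurent coefficients is legitimate (Tonelli for nonnegative terms, evaluated at $z=r$), but you can bypass it entirely by estimating $|g(z)|\le\frac{|a(z)/z|}{1-Q|a(z)/z|}\le\frac{a(r)/r}{1-Qa(r)/r}=g(r)$ directly, using that $t\mapsto t/(1-Qt)$ is increasing on $[0,1/Q)$; and you should note that $|1-Qa(z)/z|\ge 1-Qa(r)/r>0$ also shows $\mu$ has no zeros on $|z|=r$, so the maximum in the statement is finite and attained. Your argument is shorter, avoids the paper's ``tedious computations,'' and makes transparent \emph{why} the extremum sits on the positive real axis (positivity of the walk-generating coefficients of the adjacency), at the cost of invoking the algebraic identity and the power-series expansion rather than staying with elementary calculus on the circle.
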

\begin{proof}
Put $w:=2\sqrt{Q-1}/ z$. Disregarding the constants and the quantities whose modulus is fixed, we have to maximize and minimize on circles
of fixed radii less than $1$, the functions
$$
F(w):=\left|1-\sqrt{1-w^2}\right|^2\,, \quad G(w):=\left|b+\sqrt{1-w^2}\right|^2\,,
$$
where $w$ is in the open punctured disk $\Int(D_1)\backslash\{0\}$ of radius $1$ centered in the origin of the complex plane, and 
$b=\frac{Q-2}Q$, see Fig. \ref{figab}. 
\begin{figure}[ht]
     \centering
     \psfig{file=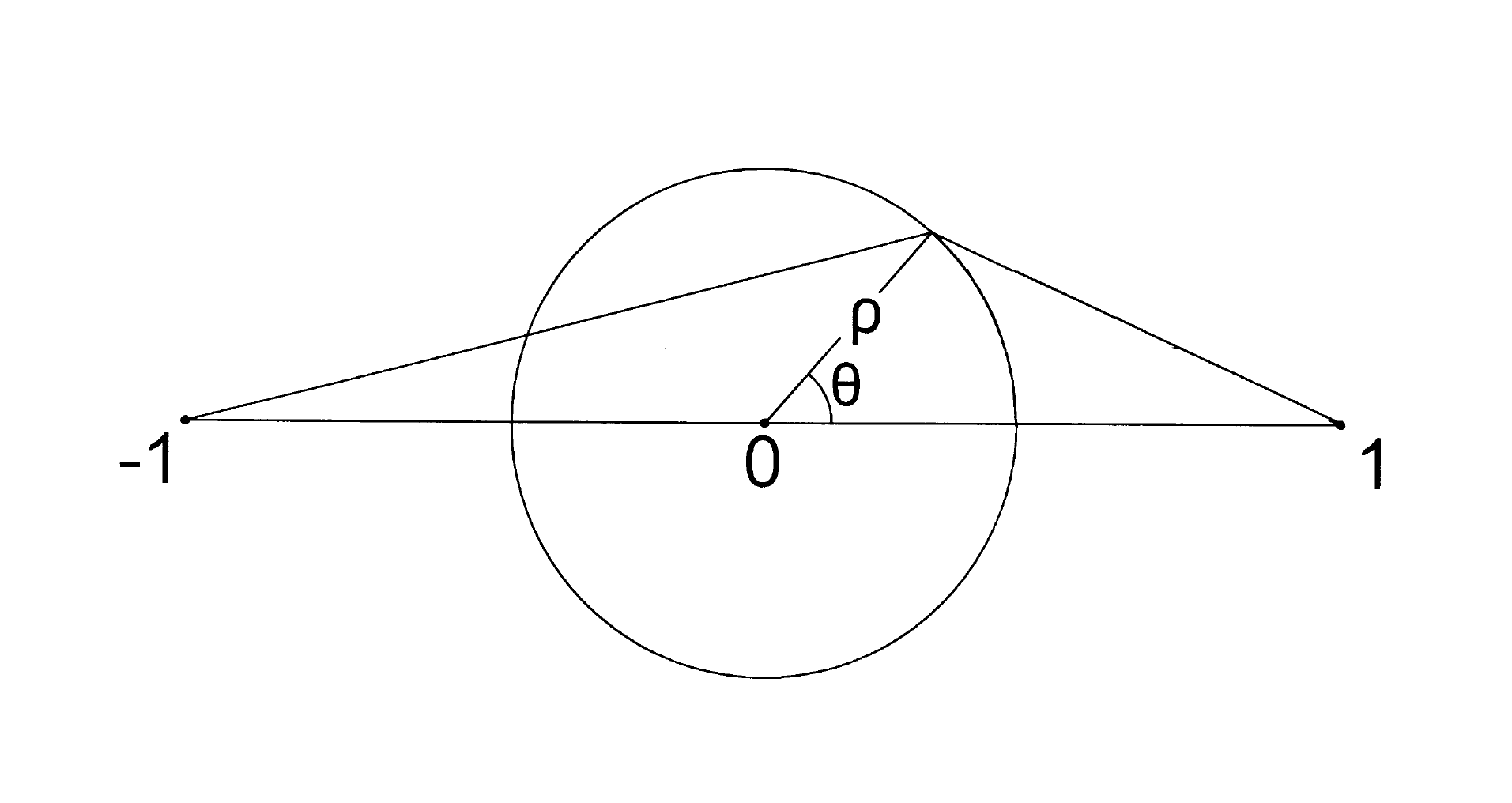,height=2.0in}
     \caption{}
     \label{figab}
     \end{figure}
First of all we note that, for the analytic continuation $\sqrt{1-w^2}$ of $\sqrt{1-x^2}$ in the cut plane
$\bc\backslash\big((-\infty,-1]\cup[1,+\infty)\big)$, we have
\begin{equation}
\label{ssyymn}
\sqrt{1-(-w)^2}=\sqrt{1-w^2}\,,\quad \sqrt{1-(\bar w)^2}=\overline{\sqrt{1-w^2}}\,.
\end{equation}
After some tedious computations, the functions $F$, $G$ are expressed by functions $f$, $g$ on the unit circle, given by
\begin{align*}
&f(\th):=1+h(\th)-\sqrt{2}\sqrt{h(\th)+1-\r^2\cos 2\th}\,, \\
&g(\th):=b^2+h(\th)+b\sqrt{2}\sqrt{h(\th)+1-\r^2\cos 2\th}\,.
\end{align*}
Here
$$
h(\th)=\sqrt{(1+\r^2)^2-4\r^2\cos^2\th}\,,
$$
$0<\r<1$ is fixed, and by \eqref{ssyymn}, we can reduce the matter to 
$0\leq\th\leq\pi/2$. After some straightforward computations, we get that
$f'(\th)$ and $g'(\th)$ are never zero for $0<\th<\pi/2$ so the the functions attain the extrema either in $0$ or in $\pi/2$. We get for $f$,
$$
\sqrt{f(0)}= 1-\sqrt{1-\r^2}>\sqrt{1+\r^2}-1=\sqrt{f(\pi/2)}\,,
$$
because $4>2(1+\sqrt{1-\r^4})$. Concerning $g$ we have,
$$
\sqrt{g(0)}=b+\sqrt{1-\r^2}< b+\sqrt{1+\r^2}=\sqrt{g(\pi/2)}\,.
$$
Collecting together, we get with $\r=2\sqrt{Q-1}/ r$,
$$
\G=\frac{1}{Q}\max_{0\leq\th\leq\pi/2}\sqrt{\frac{f(\th)}{g(\th)}}
\leq\frac{\max_{0\leq\th\leq\pi/2}\sqrt{f(\th)}}{Q\min_{0\leq\th\leq\pi/2}\sqrt{g(\th)}}=\frac{1}{Q}\sqrt{\frac{f(0)}{g(0)}}
=\frac{a(r)}{\m(r)}\leq\G
$$
and the proof follows.
\end{proof}
Denote $D_r\subset\bc$ the closed disk centered in the origin of radius $r$.
\begin{Prop}
\label{koro}
There exists $r>\|A_Y\|>2\sqrt{Q-1}$ such that for $S(z)$ in \eqref{cvskr}, we get $\|S(z)\|<1$ for each $z\in\bc\backslash D_r$. Thus, for 
$z\in\bc\backslash D_r$, the resolvent of the Adjacency of $Y$ can be written by the Neumann series
\begin{equation}
\label{koro1}
R_{A_Y}(z)=R_{A_{\bg^{Q}}}(z)
+ R_{A_{\bg^{Q}}}(z)\left(\sum_{k=0}^{+\infty}S(z)^k\right)P_{\ell^2(S)}R_{A_{\bg^{Q}}}(z)\,.
\end{equation}
\end{Prop}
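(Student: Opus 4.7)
The plan is to use Lemma \ref{rispf1} to control $|a(z)|$ and $|\mu(z)|$ pointwise on circles $|z|=r'$, combine this with Schur's test applied to the explicit matrix form \eqref{cvskr} of $S(z)$, and conclude that $\|S(z)\|\to 0$ as $|z|\to +\infty$. Once $\|S(z)\|<1$ is established on the complement of a sufficiently large disk, the Neumann series identity \eqref{koro1} follows from the Krein formula \eqref{k2s}.

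The argument in the proof of Lemma \ref{rispf1} actually establishes more than the statement: separately, $\max_{|z|=r'}|a(z)|=a(r')$ and $\min_{|z|=r'}|\mu(z)|=\mu(r')$, both extrema being attained at the real point $z=r'$ (i.e.\ at $\th=0$ in the notation of that proof, since $\sqrt{f(\th)}$ and $-\sqrt{g(\th)}$ are both maximized there). For $|z|=r'$ and $x,y\in S$ it then follows that
$$
|S(z)_{x,y}|=\frac{|a(z)|^{d(x,y)}}{|\mu(z)|}\leq \frac{a(r')^{d(x,y)}}{\mu(r')}=S(r')_{x,y}\,.
$$
Since a bounded operator dominated entry-wise in absolute value by a positive operator has smaller $\ell^2$--operator norm, this yields $\|S(z)\|\leq \|S(r')\|$. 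By Schur's test applied to the self-adjoint, positive matrix $S(r')$,
$$
\|S(r')\|\leq \sup_{x\in S}\sum_{y\in S}S(r')_{x,y}=\frac{1}{\mu(r')}\sup_{x\in S}\sum_{k=0}^{+\infty}N_k(x)\,a(r')^{k}\,,
$$
where $N_k(x)$ denotes the number of vertices of $S$ at distance $k$ from $x$. In each of the cases $S\sim\bg^q$, $S\sim\bn$, $S\sim\bz$ one has the uniform bound $N_k(x)\leq C(q-1)^k$ (with $q=2$ for $\bn$ and $\bz$). Reading off the asymptotics $a(\l)\sim 1/\l$ and $\m(\l)\sim \l$ as $\l\to+\infty$ from \eqref{000} and \eqref{333}, the geometric series converges for $r'$ large enough and $\|S(r')\|\to 0$. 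Moreover $a$ is decreasing and $\m$ is increasing on $(2\sqrt{Q-1},+\infty)$, so $S(r')$ decreases entry-wise in $r'$ and the bound on $\|S(r')\|$ is non-increasing; picking any $r>\|A_Y\|$ with $\|S(r)\|<1$, one obtains $\|S(z)\|\leq\|S(|z|)\|\leq\|S(r)\|<1$ for every $z\in\bc\setminus D_r$.

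With $\|S(z)\|<1$ on $\bc\setminus D_r$, the Neumann series $\big(\idd_{P\ell^2(VX)}-S(z)\big)^{-1}=\sum_{k\geq 0}S(z)^k$ converges in operator norm, and inserting it into the Krein formula \eqref{k2s}, specialized to the form $S(z)=P_{\ell^2(S)}R_{A_{\bg^Q}}(z)P_{\ell^2(S)}$ valid for the perturbed Cayley trees under consideration, delivers \eqref{koro1}. The main obstacle is really the componentwise domination $|S(z)_{x,y}|\leq S(r')_{x,y}$: the statement of Lemma \ref{rispf1} alone controls only the quotient $|a(z)/\m(z)|$ and is not sufficient for the distances $d(x,y)\geq 2$ that actually appear in \eqref{cvskr}; one has to go back inside the proof of that lemma to extract the separate bounds on $|a(z)|$ and $|\m(z)|$, after which the operator-theoretic pieces (entry-wise domination, Schur's test, monotonicity of $a$ and $\m$) combine cleanly.
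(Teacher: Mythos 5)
Your proof is correct and follows essentially the same strategy as the paper's: reduce $\|S(z)\|$ for $|z|=r$ to the value $\|S(r)\|$ on the positive real axis via the extremal properties of $|a(z)|$ and $|\m(z)|$ on circles established inside the proof of Lemma \ref{rispf1}, show that $\|S(r)\|$ is small for $r$ large, and feed the Neumann series into the Krein formula. Your observation that the \emph{statement} of Lemma \ref{rispf1} controls only the ratio $|a(z)|/|\m(z)|$ and does not by itself suffice for $d(x,y)\geq 2$, so that one must extract the separate bounds $\max_{|z|=r}|a(z)|=a(r)$ and $\min_{|z|=r}|\m(z)|=\m(r)$ from its proof, is exactly right and is implicitly what the paper relies on when it passes from $|S(z)u_i|$ to $S(|z|)u_i$. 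The two tactical differences are in your favor or at worst neutral: (i) the paper decomposes $u\in\ell^2(S)$ into four positive parts and only obtains $\|S(z)\|\leq 4\|S(|z|)\|$, so it must arrange $\|S(r)\|\leq 1/5$, whereas your entrywise domination $|S(z)_{x,y}|\leq S(|z|)_{x,y}$, combined with the standard fact that a kernel dominated entrywise by a positive kernel has smaller $\ell^2$--operator norm, gives the sharper $\|S(z)\|\leq\|S(|z|)\|$; (ii) the paper invokes Lemma 3.1 of \cite{F} for the continuity and monotone decay of $\|S(\l)\|$ to $0$ as $\l\to+\infty$, whereas your Schur--test estimate using the exponential volume growth $N_k(x)\leq C(q-1)^k$ of $S$ together with the asymptotics $a(\l)\sim 1/\l$ and $\m(\l)\sim\l$ is self--contained and reaches the same conclusion. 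The concluding step, inserting the convergent Neumann series for $(\idd_{\ell^2(S)}-S(z))^{-1}$ into the Krein formula \eqref{k2s}, is identical in both arguments.
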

\begin{proof}
Let $u=u_1-u_2+\imath(u_3-u_4)$ be the decomposition of $u\in\ell^2(S)$ in positive elements. One has $\|u_j\|\leq\|u\|$, $j=1,\dots,4$. By  taking into account \eqref{cvskr} and Lemma \ref{rispf1}, we get
\begin{align*}
&\langle S(z)u,S(z)u\rangle\leq\sum_{i,j=1}^4\langle |S(z)u_i|,|S(z)u_j|\rangle\\
\leq&\sum_{i,j=1}^4\langle S(|z|)u_i,|S(|z|)u_j\rangle
\leq16 \|S(|z|)\|^2\|u\|^2
\end{align*}
As $\|S(\l)\|$ is continuous and decreasing (cf. Lemma 3.1 in \cite{F}) with $\lim_{\l\uparrow+\infty}\|S(\l)\|=0$, there exists $r>\|A_Y\|$ such that
$\|S(r)\|\leq1/5$. Collecting together, we have in the complement of $D_r$,
$$
\|S(z)\|\leq4\|S(r)\|\leq4/5<1\,.
$$
In any simply connected open subset of $\bc\cup\{\infty\}$ containing the point at infinity for which 
$\idd_{\ell^2(S)}-S(\l)$ is invertible, $R_{A_Y}(z)$ assumes the form
$$
R_{A_Y}(z)=R_{A_{\bg^{Q}}}(z)
+ R_{A_{\bg^{Q}}}(z)R_{S(z)}(1)P_{\ell^2(S)}R_{A_{\bg^{Q}}}(z)\,,
$$
and the proof follows from the first half, by using the Neumann expansion of $R_{S(z)}(1)$.
\end{proof}
Consider the function
\begin{equation}
\label{resam}
f(x)= 
     \begin{cases}
     -\frac12\,,&x=0\,,\\
\frac{1}{e^x-1}-\frac1x\,,&x>0\,.
     \end{cases}
\end{equation}
It is bounded on $[0,+\infty)$. In addition, $\frac{1}{e^x-1}=f(x)+\frac1x$. It provides the comparison between the resolvent of $A$ and the functional calculus of $(e^H-\idd)^{-1}$ associated to the Bose--Gibbs occupation number for the pure hopping Hamiltonian.
\begin{Prop}
\label{add4}
Under the above notations, the following assertions hold true.
\begin{itemize}
\item[(i)] $A_Y$ is transient if and only if $\gph\subset\cd\left((e^H-\idd)^{-1/2}\right)$.
\item[(ii)] For $h\in\gph$ we have  $\sum_{x\in Y}|h(x)|v(x)<+\infty$.
\end{itemize}
\end{Prop}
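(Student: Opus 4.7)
My plan for part (i) is to use spectral calculus together with the decomposition \eqref{resam} to reduce the statement to a statement about the Resolvent of $A_Y$. The key observation is that, writing $\frac{1}{e^x-1}=\frac{1}{x}+f(x)$ with $f$ bounded on $[0,\infty)$, the two spectral integrals
\begin{equation*}
\int\frac{1}{e^x-1}\,\di\langle E_H(x)\xi,\xi\rangle\,,\qquad \int\frac{1}{x}\,\di\langle E_H(x)\xi,\xi\rangle
\end{equation*}
are finite or infinite simultaneously, because their integrands differ by a bounded function and the total mass of the spectral measure is $\|\xi\|^2<\infty$. Hence $\cd\big((e^H-\idd)^{-1/2}\big)=\cd(H^{-1/2})$.

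Next, since $e^{\imath tH}$ commutes with every function of $H$, the domain $\cd(H^{-1/2})$ is invariant under the one--parameter group generating $\gph$. Therefore $\gph\subset\cd(H^{-1/2})$ if and only if $\d_j\in\cd(H^{-1/2})$ for every $j\in VY$. Finally, by the monotone convergence theorem applied to the spectral measure of $H$, together with the change of variable $\l=\|A_Y\|+\m$ turning $(H+\m\idd)^{-1}$ into $R_{A_Y}(\l)$, one has
\begin{equation*}
\langle H^{-1}\d_j,\d_j\rangle=\lim_{\m\downarrow 0}\langle (H+\m\idd)^{-1}\d_j,\d_j\rangle=\lim_{\l\downarrow\|A_Y\|}\langle R_{A_Y}(\l)\d_j,\d_j\rangle\,,
\end{equation*}
whose finiteness at some (equivalently every, by \cite{S}, Section~6) vertex $j$ is precisely the definition of transience of $A_Y$.

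For part (ii), by linearity and the triangle inequality it is enough to prove $\sum_{x\in Y}|(e^{\imath tH}\d_j)(x)|v(x)<+\infty$ for fixed $t\in\br$ and $j\in VY$. Writing $e^{\imath tH}=e^{\imath t\|A_Y\|}e^{-\imath tA_Y}$ and expanding $e^{-\imath tA_Y}=\sum_{n\geq 0}\frac{(-\imath t)^n}{n!}A_Y^n$, the non--negativity of the entries of each power $A_Y^n$ yields the pointwise majorization
\begin{equation*}
\big|(e^{-\imath tA_Y}\d_j)(x)\big|\leq\sum_{n=0}^{+\infty}\frac{|t|^n}{n!}(A_Y^n)_{xj}\,.
\end{equation*}
Since all summands are non--negative, Tonelli permits interchanging $\sum_x$ with $\sum_n$, producing $\sum_{n\geq 0}\frac{|t|^n}{n!}(A_Y^n v)(j)$. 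Now Lemma~\ref{rispf} applied to the entire function $z\mapsto e^{|t|z}$, which has non--negative Taylor coefficients, gives $\sum_{n\geq 0}\frac{|t|^n}{n!}A_Y^n v=e^{|t|\|A_Y\|}v$, from which the sum equals $e^{|t|\|A_Y\|}v(j)<+\infty$.

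The only delicate step I anticipate is the domain identification in (i): one might worry that an additive bounded perturbation inside an unbounded inverse square root could move vectors in or out of the domain. The clean way to avoid any such subtlety is to work directly with the scalar spectral integrals as above, where boundedness of $f$ makes the comparison immediate. Part (ii), once the positivity trick is in place, reduces to applying the eigenvector relation of Lemma~\ref{rispf} term by term.
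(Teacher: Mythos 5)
Your proof is correct. Part (i) follows essentially the paper's own route: the splitting \eqref{resam} identifies $\cd\big((e^H-\idd)^{-1/2}\big)$ with $\cd\big(R_{A_Y}(\|A_Y\|)^{1/2}\big)$, the commutation of $e^{\imath tH}$ with functions of $H$ reduces the question to the generators $\d_j$, and monotone convergence links $\langle H^{-1}\d_j,\d_j\rangle$ to $\lim_{\l\downarrow\|A_Y\|}\langle R_{A_Y}(\l)\d_j,\d_j\rangle$, i.e.\ to transience; the paper phrases the reduction as the identity $\langle R_{A_Y}(\l)e^{\imath tH}\d_j,e^{\imath tH}\d_j\rangle=\langle R_{A_Y}(\l)\d_j,\d_j\rangle$, which is the same observation.

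For part (ii) you take a genuinely different and more elementary route. The paper represents $e^{\imath tH}\d_j$ by the Cauchy integral $\frac1{2\pi\imath}\oint_{C_R}e^{\imath t(\|A_Y\|-z)}R_{A_Y}(z)\d_j\,\di z$ over a circle of radius $R>r$, with $r$ supplied by Proposition \ref{koro}, dominates the entries of $R_{A_Y}(z)$ on $C_R$ by those of $R_{A_Y}(R)$ via the Neumann/convolution structure of the Krein formula (Lemma \ref{rispf1}), and concludes from $R_{A_Y}(R)v=v/(R-\|A_Y\|)$, which is Lemma \ref{rispf} for the resolvent function. You instead expand $e^{-\imath tA_Y}$ in its entire power series, use positivity of the entries of $A_Y^n$ for the pointwise majorization, interchange sums by Tonelli, and invoke Lemma \ref{rispf} for $e^{|t|z}$. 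This bypasses Proposition \ref{koro} and Lemma \ref{rispf1} entirely and uses only boundedness of $A_Y$, positivity of its matrix entries, and the Perron--Frobenius relation; it is a cleaner argument for this particular statement. The only trade-off is the quality of the resulting bound: you get $e^{|t|\|A_Y\|}v(j)$, growing exponentially in $|t|$, whereas the contour method aims at a bound of order $\frac{R}{R-\|A_Y\|}v(j)$. Since the proposition only asserts finiteness for each fixed generator, and since the later applications (boundedness and continuity in $t$ of the two-point functions, checked by dominated convergence on compact $t$-intervals) only need local uniformity in $t$, your bound is entirely sufficient.
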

\begin{proof}
It is enough to prove the assertions for the generators $h=e^{\imath tH}\d_j$ of $\gph$.

$(i)$ By using the splitting described by \eqref{resam} we get that
$w$ is in $\cd\left((e^H-\idd)^{-1/2}\right)$ if and only if $w$ is in $\cd(R_A(\|A\|)^{1/2})$. Let $j$ be any element of $Y$. Obviously,
$\d_j\in\gph$ and, by hypothesis, $\d_j\in\cd\left((e^H-\idd)^{-1/2}\right)$. So $\d_j\in\cd(R_A(\|A\|)^{1/2})$ and $A_Y$ is transient.
For the reverse implication, we compute for $\l>\|A_Y\|$,
\begin{align*}
\left\langle R_{A_Y}(\l)e^{\imath tH}\d_j,e^{\imath tH}\d_j\right\rangle
=&\left\langle R_{A_Y}(\l)e^{\imath t(\|A_Y\|\idd-A_Y)}\d_j,e^{\imath t(\|A_Y\|\idd-A_Y)}\d_j\right\rangle\\
=&\left\langle e^{\imath t(\|A_Y\|\idd-A_Y)}R_{A_Y}(\l)\d_j,e^{\imath t(\|A_Y\|\idd-A_Y)}\d_j\right\rangle\\
=&\left\langle R_{A_Y}(\l)\d_j,\d_j\right\rangle\uparrow\left\langle R_{A_Y}(\|A_Y\|)\d_j,\d_j\right\rangle
\end{align*}
which is finite if $A_Y$ is transient. Thus, $e^{\imath tH}\d_j\in\cd\left((e^H-\idd)^{-1/2}\right)$.

$(ii)$ 
Fix $R>r$, where $r$ is the number appearing in Proposition \eqref{cvskr}, together with the circle $C_R\subset\bc$ of radius $R$ centered in the origin. By using the Neumann expansion of $R_{S(z)}(1)$ on $C_R$ and the fact that $R_{A_{\bg^Q}}(z)$ itself is the convolution of the function $a(z)^k/\m(z)$ (cf. (7.6) in \cite{MW}), we show that the corresponding resolvent $R_{A_Y}(z)$ is a combinations of convolutions and infinite (absolutely converging) sums of powers of the convolution by the function
$a(z)^k/\m(z)$, see \eqref{koro1} and \eqref{cvskr}. Let
$C_R$ be counterclockwise oriented. Thus, thanks to Proposition \eqref{cvskr} as explained above, together with Lemma \ref{rispf}, we get
\begin{align*}
\left\langle\left|e^{\imath tH}\d_j\right|,v\right\rangle=&
\left\langle\left|\frac{1}{2\pi\imath}\oint_{C_R}e^{\imath t(\|A_Y\|\idd-z)}R_{A_Y}(z)\d_j\, \di\l
\right|,v\right\rangle\\
\leq&R\left\langle\left|R_{A_Y}(z)\d_j\right|,v\right\rangle
\leq R\left\langle R_{A_Y}(R)\d_j,v\right\rangle\\
=&R\left\langle\d_j,R_{A_Y}(R)v\right\rangle
=\frac{R\left\langle\d_j,v\right\rangle}{R-\|A_Y\|}
=\frac{Rv(j)}{R-\|A_Y\|}\,.
\end{align*}
\end{proof}
Now we pass to describe a class of locally normal states exhibiting BEC, which are indeed KMS for the natural dynamics generated on $\gw(\gph)$, $\gph$ given in \eqref{hidyna}, by the Bogoliubov automorphisms whose generator is the (second quantization of) the pure hopping Hamiltonian. To this end, we consider the dynamical system 
$(\gw(\gph),\a_t)$, where $\gw(\gph)$ is the $C^*$--algebra made of the Weyl CCR algebra on $\gph$, and $\a_t$ is generated by the Bogoliubov automorphisms $T_t u:=e^{\imath tH}u$,
$$
\a_t(W(u))=W(T_t u)\,,\quad u\in\gph\,,
$$
which are well defined as $e^{\imath tH}\gph=\gph$ by construction.

By taking into account Proposition \ref{add2}, the necessary condition for the existence of locally normal states exhibiting BEC is the transience of $A_Y$. For the pure hopping model, the transience of the Adjacency is indeed also sufficient as we are going to see. For the rest of the present section we limit our analysis when $Y$ is either $\bh^Q$, $3\leq Q\leq7$, or $\bg^{Q,q}$ where $q\geq3$, and $q<Q\leq Q(q)$ with $Q(q)$ given in \eqref{ub}, see Theorems 4.1, 5.3, 6.5 of \cite{F}, together with Proposition \ref{add2}. Fix 
$D\geq0$ and define states on the Weyl CCR algebra
$\gw(\gph)$ uniquely determined by the two--point function 
\begin{equation}
\label{kmsbecd}
\om_D(a^\dagger(u_1)a(u_2)):=\left\langle(e^H-\idd)^{-1}u_1,u_2\right\rangle+D\langle u_1,v\rangle\langle v, u_2\rangle\,,
\quad u_1,u_2\in\gph\,.
\end{equation}
The states $\om_D$ are well--defined (cf. \cite{BR2}, Section 5.2) because of Proposition \ref{add2}. In the case of quasi--free states on 
$\gw(\gph)$, it is also customary to provide the description directly in terms of their values on the Weyl unitaries. We get (cf. Section 5.2.5 of \cite{BR2}),
$$
\om_D(W(u))=e^{-\frac{\|u\|^2}{4}}e^{-\frac{\om_D(a^\dagger(u)a(u))}{2}}\,,\quad u\in\gph\,.
$$
Note that for quasi--free states, and then in our situation, the KMS condition can be checked directly in terms of $\om_D(a^\dagger(u_1)a(u_2))$, at least when 
$$
t\mapsto\om_D(a^\dagger(u_1)a(T_tu_2))
$$
is bounded and continuous. The boundedness automatically follows by the definition of the two--point function thanks to transience. The continuity can be easily checked by Lebesgue Dominated Convergence Theorem by taking into account again Proposition \ref{add2}.
\begin{Thm}
For each $D\geq0$, the states $\om_D$ given in \eqref{kmsbecd} are KMS (at inverse temperature $\b=1$) for the dynamics generated by the Bogoliubov automorphisms $\{e^{\imath Ht}\mid t\in\br\}$.
\end{Thm}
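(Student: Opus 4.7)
The strategy is to exploit the quasi--free character of $\om_D$ to reduce the KMS property on $\gw(\gph)$, as noted just before the statement, to a check on the two--point function. Concretely, for $u_1, u_2\in\gph$ I would set
\begin{equation*}
F(t):=\om_D\big(a^\dagger(u_1)\,\a_t(a(u_2))\big)=\big\langle(e^H-\idd)^{-1}u_1,T_tu_2\big\rangle+D\langle u_1,v\rangle\langle v,u_2\rangle
\end{equation*}
(the condensate summand is already $t$--independent, since $Hv=0$ forces $T_tv=v$ and hence $\langle v,T_tu_2\rangle=\langle v,u_2\rangle$), and aim to show that $F$ admits a bounded continuous extension to the strip $\{0\leq\im z\leq1\}$, analytic in the interior, whose boundary value on $\im z=1$ equals $G(t):=\om_D(\a_t(a(u_2))\,a^\dagger(u_1))$. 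Boundedness and continuity of $F$ on $\br$ follow from Proposition \ref{add4}(i), which places $u_1,u_2\in\cd\big((e^H-\idd)^{-1/2}\big)$ and thus makes the thermal summand a bounded continuous quadratic form, together with Proposition \ref{add4}(ii), which takes care of the condensate pairing.

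The plan for the analytic extension is to work in the spectral representation of $H$. Writing $\mu\equiv\mu_{u_1,u_2}$ for the spectral complex measure associated to $H$ and the pair $u_1, u_2$, one has
\begin{equation*}
\big\langle(e^H-\idd)^{-1}u_1,T_tu_2\big\rangle=\int_{\s(H)}\frac{e^{-\imath tx}}{e^x-1}\,\di\mu(x).
\end{equation*}
Since $\s(H)\subset[0,\|H\|]$ is compact, the integrand $e^{-\imath zx}/(e^x-1)$ is entire in $z$, and for $z$ in the closed strip its modulus is dominated by $e^{\|H\|}/(e^x-1)$, which is $|\mu|$--integrable by transience together with a Cauchy--Schwarz estimate for spectral measures. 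Dominated convergence and differentiation under the integral sign then produce the required bounded continuous extension, analytic in the interior.

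The KMS identity $F(t+i)=G(t)$ finally collapses to the Planck factor identity $e^x/(e^x-1)-1/(e^x-1)=1$: subtracting the integral representations at $z=t+i$ and at $z=t$ yields
\begin{equation*}
F(t+i)-F(t)=\int e^{-\imath tx}\,\di\mu(x)=\langle u_1,T_tu_2\rangle,
\end{equation*}
while the CCR $[a(f),a^\dagger(g)]=\langle g,f\rangle$ gives $G(t)-F(t)=\langle u_1,T_tu_2\rangle$ at once.

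I expect the main obstacle to lie not in this Planck--factor bookkeeping but in the careful management of the domain of $(e^H-\idd)^{-1}$, whose singularity at $0$ sits exactly on top of the condensate. The decomposition of $\om_D$ into a thermal and a condensate piece is what keeps the two apart cleanly: under the transience hypothesis the thermal spectral measure $\mu$ puts no atom at $\{0\}$ and makes $1/(e^x-1)$ integrable against $|\mu|$, while the projection onto the zero--mode $v$ is absorbed into a finite, $t$--invariant scalar by Proposition \ref{add4}(ii). Once this separation is made rigorous, the remainder is the three--line functional calculus above.
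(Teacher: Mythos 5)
Your proposal is correct and follows essentially the same route as the paper: reduce the KMS property of the quasi--free state $\om_D$ to its two--point function, use Proposition \ref{add4} (transience for the thermal summand, absolute summability against $v$ for the condensate summand) to get boundedness and continuity in $t$, and conclude by the Planck--factor identity in the spectral representation of $H$. The only cosmetic difference is that the paper checks the test--function form \eqref{mmodgnss} of the KMS condition via the contour shift $t\mapsto t+i$ (its ``elementary change of variables'', legitimate since $f\in\widehat{\cd}$ is entire of exponential type), which is precisely the integrated version of your strip--analyticity argument, the two formulations being equivalent under the continuity and boundedness you both establish.
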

\begin{proof}
It is easily seen by Proposition \ref{add2} that $\om_D(a^\dagger(u_1)a(T_tu_2))$, $\om_D(a(T_tu_2)a^\dagger(u_1))$ are bounded and continuous (in $t$). The proof follows by an elementary change of variables in \eqref{mmodgnss} as $f\in\widehat\cd$.
\end{proof}
The second addendum in \eqref{kmsbecd} describes the amount of particles which condensates in the ground state whose wave function is nothing but the 
Perron--Frobenius weight $v$. In the finite region $\La$, such a density of the condensate is roughly given by
\begin{equation}
\label{kpf}
C_D(\La)\approx\frac1{|\La|}\sum_{x\in\La}D\langle\d_x,v\rangle\langle v, \d_x\rangle=D\frac{\left\|v\lceil_\La\right\|^2}{|\La|}\,.
\end{equation}
Even if the model exhibits a non negligible amount of the condensate, $C_D(\La)\to0$ as $\La\uparrow Y$, see Proposition \ref{3a3c}.
Formula \eqref{kpf} explains what happens in the non homogeneous pure hopping model. Particles condensate even in the configuration space. The amount of particles that the system can accommodate in the subnetwork made of the support of the perturbation is governed by the ratio between the volume growth of the original network and the $\ell^2$--growth of the wave function of the ground state. In all the situations into consideration in the present paper, $C_D(\La)$ goes to $0$ as $\La$ approaches to $Y$. This is due to the quite surprising fact that the 
Perron--Frobenius weight decreases exponentially far from the perturbed zone (cf. Fig. \ref{figd}), see \cite{F}, or also \cite{FGI1} for several amenable cases. 
\begin{figure}[ht]
     \centering
     \psfig{file=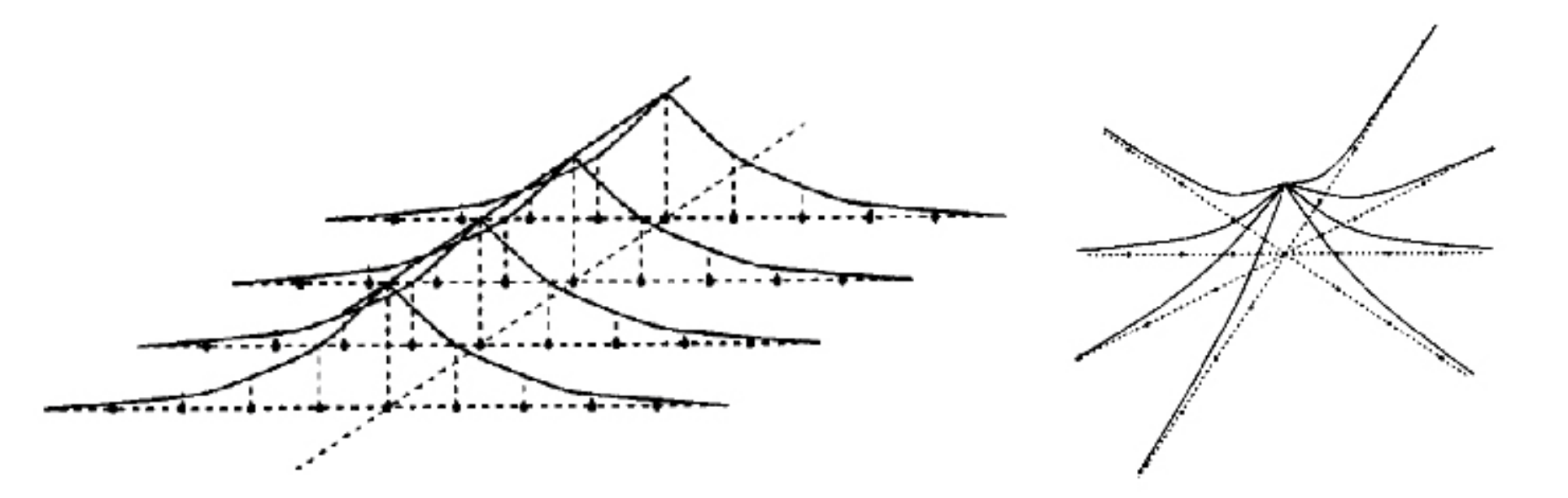,height=1.7in}
     \caption{The wave function of the ground state for the Comb and the Star Graphs.}
     \label{figd}
     \end{figure}
This is the naive explanation that the mean density of the particles associated to all the states $\om_D>0$ exhibiting BEC is the critical one 
$\r_c$. 

\section{infinite volume limits of gibbs states}
\label{dynkmss1}

We start with a simple general result explaining that, for the pure hopping model, the transience is a necessary and sufficient condition for the existence of locally normal states exhibiting BEC if one disregards the dynamics. 
\begin{Rem}
\label{suftra}
Let $G$ be a uniformly bounded degree connected network. Suppose that its Adjacency $A$ is transient. Let 
$\gph_0:=\ssv\left\{\d_x\mid x\in G\right\}$ denote 
the algebraic span of all the $\d_x$. Then there exist locally normal states on $\gw(\gph_0)$ exhibiting BEC for the pure hopping Hamiltonian
$H=\|A\|\idd-A$.
\end{Rem}
Indeed, consider any Perron--Frobenius eigenvector $u$ which exists by compactness, see e.g. Proposition 4.1 of \cite{FGI1}. Fix $a>0$ and put
\begin{equation}
\label{meanni}
\f_a(a^\dagger(u_1)a(u_2)):=\left\langle(e^H-\idd)^{-1}u_1,u_2\right\rangle+a\langle u_1,u\rangle\langle u, u_2\rangle\,,
\quad u_1,u_2\in\gph_0\,.
\end{equation}
By transience (cf. (i) of Proposition \ref{add4} which works for any network $G$ as above), \eqref{meanni} is meaningful and defines a quasi--free state on $\gw(\gph_0)$. Such a state can be interpreted as exhibiting BEC for the presence of a non trivial amount of condensate on a lowest energy wave--function $u$ described by the addendum $a\langle u_1,u\rangle\langle u, u_2\rangle$.

As explained before (see also \cite{F1}), the careful investigation of the finite volume behavior of the KMS states given in \eqref{kmsbecd} plays a relevant role because of the boundary effects which cannot be avoided in the case of non amenable graphs as in our situation. Due to such boundary effects, this analysis becomes very delicate. Yet, the emerging situation is surprisingly in accordance with previous results in 
\cite{F, FGI1}. The main aim of the present section is to investigate, technicalities permitting, the infinite volume limits of finite volume Gibbs states arising from the grand--canonical ensemble. We start by considering standard finite volume approximations of KMS states. We focus the attention to any uniformly bounded degree connected network $G$, equipped with the exhaustion $\{\La_n\}_{n\in\bn}$ made of finite volume regions. Fix a common root $0\in\La_n$, $n\in\bn$ and consider  the sequence $\{v_n\}_{n\in\bn}$ of 
Perron--Frobenius eigenvectors of the Adjacency of the finite regions $\La_n$, all normalized to $1$ at $0$. In addition, we also suppose that
$A_G$ (or equivalently the pure hopping Hamiltonian $H=\|A_G\|\idd-A_G$) admits the IDS w.r.t. the exhaustion $\{\La_n\}_{n\in\bn}$.
We reduce the matter to the $C^*$--subalgebra 
$\ga:=\gw\left(\cup_{n\in\bn}\ell^2(\La_n)\right)\equiv\gw(\gph_0)$, see Section 5.2.5 of \cite{BR2}. We denote by 
$\idd_n:=\idd\lceil_{\ell^2(\La_n)}$ the identity on $\ell^2(\La_n)$. Fix a sequence of finite volume chemical potentials $\{\m_n\}_{n\in\bn}$, with 
$$
\m_n<\|A_G\|-\|A_{\La_n}\|\,,
$$ 
converging to $\m\leq0$. The finite volume Gibbs states $\om_n$ are given, for such a sequence of chemical potential, by
\begin{equation}
\label{kmsbecd1}
\om_n(a^\dagger(u_1)a(u_2)):=\left\langle(e^{H_n-\m_n\idd_n}-\idd_n)^{-1}u_1,u_2\right\rangle\,,\quad u_1,u_2\in\gph_0\,.
\end{equation}
The case when $\m_n\to\m<0$ is quite standard (cf. \cite{BR2, BCRSV, FGI1}) and presents no further complications. Thus, we restrict ourselves to the condensation regime $\m=0$. In correspondence of such a sequence $\m_n\to0$ of chemical potential, we put
\begin{equation}
\label{lammu}
\l_n=\|A_G\|-\m_n\,.
\end{equation}
The following result takes into account what happens when the graph $G$ is recurrent.
\begin{Prop}
\label{roec}
If $G$ is recurrent then for each $x\in G$,
$$
\lim_n\om_n(a^\dagger(\d_x)a(\d_x))=+\infty\,.
$$
\end{Prop}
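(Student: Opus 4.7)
The plan is to split the Gibbs two-point function into a uniformly bounded piece plus a pure resolvent piece, and then to show that the recurrence hypothesis forces the resolvent piece to diverge along the exhaustion, via a positivity argument on the Neumann expansion.

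First I would set $\l_n:=\|A_G\|-\m_n$ as in \eqref{lammu}, so that $H_n-\m_n\idd_n=\l_n\idd_n-A_{\La_n}$ is strictly positive (the hypothesis $\m_n<\|A_G\|-\|A_{\La_n}\|$ gives exactly $\l_n>\|A_{\La_n}\|$), and $\l_n\to\|A_G\|$ because $\m_n\to 0$. The decomposition $(e^y-1)^{-1}=f(y)+y^{-1}$ from \eqref{resam}, applied via functional calculus to $H_n-\m_n\idd_n$, then gives
$$
\om_n(a^\dagger(\d_x)a(\d_x))=\langle f(H_n-\m_n\idd_n)\d_x,\d_x\rangle+\langle R_{A_{\La_n}}(\l_n)\d_x,\d_x\rangle.
$$
Since $f$ is continuous and bounded on $[0,+\infty)$, the first summand is uniformly bounded in $n$, so the claim reduces to proving that $\langle R_{A_{\La_n}}(\l_n)\d_x,\d_x\rangle\to+\infty$.

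Because $\l_n>\|A_{\La_n}\|$, the Neumann series converges and has non-negative entries,
$$
\langle R_{A_{\La_n}}(\l_n)\d_x,\d_x\rangle=\sum_{k=0}^{+\infty}\frac{(A_{\La_n}^k)_{xx}}{\l_n^{k+1}}\geq\sum_{k=0}^{K}\frac{(A_{\La_n}^k)_{xx}}{\l_n^{k+1}}
$$
for every $K\in\bn$. For each fixed $k$, the entry $(A_{\La_n}^k)_{xx}$ counts closed walks of length $k$ at $x$ that remain in $\La_n$; since $G$ has uniformly bounded degree the ball $B_G(x,k)$ is finite, hence contained in $\La_n$ for $n$ large, so $(A_{\La_n}^k)_{xx}=(A_G^k)_{xx}$ eventually. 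Combined with $\l_n\to\|A_G\|$ this yields, for every $K$,
$$
\liminf_n\,\langle R_{A_{\La_n}}(\l_n)\d_x,\d_x\rangle\geq\sum_{k=0}^{K}\frac{(A_G^k)_{xx}}{\|A_G\|^{k+1}}.
$$

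Finally I would invoke recurrence: monotone convergence applied to the Neumann series of $R_{A_G}(\l)$ as $\l\downarrow\|A_G\|$ identifies $\sum_{k=0}^{+\infty}(A_G^k)_{xx}/\|A_G\|^{k+1}$ with $\lim_{\l\downarrow\|A_G\|}\langle R_{A_G}(\l)\d_x,\d_x\rangle$, which equals $+\infty$ by \eqref{caz}. Letting $K\to+\infty$ in the previous display completes the proof. The main obstacle is conceptual rather than technical: no sign assumption on $\m_n$ is available, so $\l_n$ may approach $\|A_G\|$ from either side, and one cannot simply dominate $R_{A_{\La_n}}(\l_n)$ by $R_{A_G}(\l_n)$ (the latter need not be defined). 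Exploiting positivity of the entries of $A_{\La_n}^k$ together with the truncated Neumann bound bypasses this issue cleanly, and it is also what makes the local comparison $(A_{\La_n}^k)_{xx}\to (A_G^k)_{xx}$ lift to a lower bound on the full series.
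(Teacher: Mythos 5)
Your proof is correct and follows essentially the same route as the paper's: the splitting \eqref{resam}, positivity of the Neumann series for $R_{A_{\La_n}}(\l_n)$, termwise convergence of its coefficients to those of the series for $R_{A_G}$ evaluated at $\|A_G\|$, and recurrence to conclude divergence. Your explicit truncation at $K$ is just a hands-on version of the Fatou-type inequality the paper invokes, and your closing remark that $\l_n$ need not exceed $\|A_G\|$ correctly identifies why one must argue termwise rather than by dominating with $R_{A_G}(\l_n)$.
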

\begin{proof}
We consider the splitting \eqref{resam}, and $\l_n$ given in \eqref{lammu}. It is readily seen that  $\l_n>\|A_{\La_n}\|$ and $\l_n\to\|A_G\|$.
By recurrence and the monotone convergence theorem, we get
$$
\sum_{k=0}^{+\infty}\frac{\left\langle A_{G}^k\d_x,\d_x\right\rangle}{\|A_G\|^{k+1}}
=\lim_{\l\downarrow\|A_G\|}\left\langle R_{A_G}(\l)\d_x,\d_x\right\rangle=+\infty
$$
On the other hand, as $A_{\La_n}\to A_{G}$ in the strong operator topology,
$$
\frac{\left\langle A_{\La_n}^k\d_x,\d_x\right\rangle}{\l_n^{k+1}}\to\frac{\left\langle A_{G}^k\d_x,\d_x\right\rangle}{\|A_G\|^{k+1}}\,.
$$
Then, again by the monotone convergence theorem and Fatou Lemma,
\begin{align*}
&+\infty=\lim_{\l\downarrow\|A_G\|}\left\langle R_{A_G}(\l)\d_x,\d_x\right\rangle
=\sum_{k=0}^{+\infty}\frac{\left\langle A_{G}^k\d_x,\d_x\right\rangle}{\|A_G\|^{k+1}}\\
\leq&\lim_{\l_n\to\|A_G\|}\sum_{k=0}^{+\infty}\frac{\left\langle A_{\La_n}^k\d_x,\d_x\right\rangle}{\l_n^{k+1}}
=\lim_{\l_n\to\|A_G\|}\left\langle R_{\La_n}(\l)\d_x,\d_x\right\rangle
\end{align*}
Then we have just shown that the two--point function $\om_n(a^\dagger(\d_x)a(\d_x))$ diverges in the infinite volume limit.
\end{proof}
The last result together with  Proposition \ref{add2}, tells us that, even when the critical density is finite (which certainly happens in presence of hidden spectrum), it is impossible to construct locally normal states exhibiting BEC when the graph $G$ under consideration is recurrent (compare with the similar results in \cite{FGI1}). In our situation, Proposition \ref{roec} covers the case 
$\bg^{Q,2}$, with $3\leq Q\leq7$. 

Now we specialize to the situation of main interest here when the network $G$ (denoted symbolically by $Y$) is $\bh^Q$ and $\bg^{Q,q}$. In this situation, $v$ denotes the Perron--Frobenius eigenvector for $A_Y$ described in \eqref{3a3b},  
\eqref{weign}, and $v_n$ the finite volume Perron--Frobenius eigenvectors of the finite regions $\La_n$. All such Perron--Frobenius eigenvectors are normalized at $1$ on the fixed common root $0$.
\begin{Prop}
\label{faxpx111}
Let $v_n$ be the (unique) Perron--Frobenius eigenvector of $A_{\La_n}$ normalized to 1 at the common root $0\in\La_n$. Then
$\lim_n v_n(x)=v(x)$. In addition $\lim_n\|v_n\|=+\infty$.
\end{Prop}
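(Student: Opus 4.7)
The plan is to combine pointwise compactness for the sequence $\{v_n\}$ with an identification of the subsequential limit, and then to deduce the divergence of norms by Fatou. I would first establish a local uniform upper bound by iterating $A_{\La_n}v_n=\|A_{\La_n}\|v_n$ at the root: for any $x\in\La_n$ at distance $d=d(0,x)$, and for any $n$ large enough that some geodesic from $0$ to $x$ lies in $\La_n$,
$$
v_n(x)\leq (A_{\La_n}^d)_{0,x}v_n(x)\leq (A_{\La_n}^dv_n)(0)=\|A_{\La_n}\|^dv_n(0)\leq\|A_Y\|^{d(0,x)}\,.
$$
A diagonal extraction then produces $v_{n_k}\to\tilde v$ pointwise, with $\tilde v\geq0$ and $\tilde v(0)=1$. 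Since $Y$ has uniformly bounded degree, at each $x$ the equation $(A_{\La_n}v_n)(x)=\|A_{\La_n}\|v_n(x)$ is eventually a finite sum that passes to the limit, and combined with $\|A_{\La_n}\|\to\|A_Y\|$ (which holds for any bounded selfadjoint operator under exhaustion by finite--rank compressions, by density of finitely supported vectors) one obtains $A_Y\tilde v=\|A_Y\|\tilde v$.

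To identify $\tilde v$ with $v$, I would invoke the structural description recalled in \eqref{3a3b}--\eqref{weign}: the infinite volume $v$ is determined by its restriction $w=v\lceil_S$, and $w$ is the positive eigenvector of $PR_{A_{\bg^Q}}(\|A_Y\|)P$ on $\ell^2(S)$, normalized to $1$ at $0$ and singled out by the explicit formula \eqref{weign}. The same decomposition should be inherited by $\tilde v$: on each branch of $\La_n$ attached to a vertex $y\in S$, the finite volume $v_n$ solves the truncated eigenvalue problem for $A_{\bg^Q}$ on a ball rooted at $y$ with boundary datum $v_n(y)$, and by the Green function computations underlying \eqref{cvskr} and Proposition \ref{koro} the ratio $v_n(x)/v_n(y)$ tends to $a(\|A_Y\|)^{d(x,y)}$ for $y=y(x)\in S$. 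Hence $\tilde v(x)=a(\|A_Y\|)^{d(x,S)}\tilde w(y(x))$ with $\tilde w:=\tilde v\lceil_S$, and substituting this back into $A_Y\tilde v=\|A_Y\|\tilde v$ along $S$ reduces to $PR_{A_{\bg^Q}}(\|A_Y\|)P\tilde w=\tilde w$. Together with the normalization $\tilde w(0)=1$ and positivity, this forces $\tilde w=w$ via the selection performed in \eqref{weign}, whence $\tilde v=v$. Since every subsequential limit coincides with $v$, the whole sequence converges pointwise.

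For the norm divergence I would apply Fatou on the counting measure on $VY$:
$$
\liminf_n\|v_n\|^2=\liminf_n\sum_{x\in VY}v_n(x)^2\geq\sum_{x\in VY}v(x)^2=\|v\|^2\,,
$$
so it suffices to check that $\|v\|^2=+\infty$. On $\bh^Q$, the linear growth of $w$ on $S\sim\bn$, combined with the summable transverse decay at rate $a(\|A_Y\|)<1/\sqrt{Q-1}$, yields $\|v\lceil_{\La_n}\|^2$ of order $n^3$; on $\bg^{Q,q}$ with $q\geq3$, the polynomial times $(q-1)^{-d/2}$ shape of $w$ balances the $(q-1)^d$ growth of the spheres in $\bg^q$ to give $\sum_{y\in S_n}w(y)^2$ of order $n^3$, and again $\|v\lceil_{\La_n}\|^2\to+\infty$.

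The main obstacle is the identification $\tilde v=v$: uniqueness of the Perron--Frobenius eigenvector of $A_Y$ itself \emph{fails} in the transient regime $\bg^{Q,q}$ with $q>2$ (see the last paragraph of Section 2), so one cannot conclude directly from the eigenvalue equation alone. What selects the correct limit is the exponential transverse decay $a(\|A_Y\|)^{d(x,S)}$ forced on each finite volume branch by the Krein/Neumann analysis of Proposition \ref{koro}, and this is the step where all the detailed resolvent information from \cite{F} enters decisively.
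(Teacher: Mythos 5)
Your overall architecture (a priori bound and compactness, identification of the cluster point, Fatou for the norms) matches the paper's, and the norm--divergence half is essentially the paper's argument. The genuine gap is exactly at the step you yourself flag as ``the main obstacle'', and your proposed resolution does not close it. For $\bg^{Q,q}$ with $q\geq3$ the non--uniqueness of the positive eigenvector lives \emph{entirely} on $S$: every member of the family $x\mapsto a(\|A_Y\|)^{d(x,S)}\,\f_{q,x_0}(y(x))/\f_{q,x_0}(0)$, $x_0\in S$ (and any convex combination thereof), is a positive eigenvector of $A_Y$ for the eigenvalue $\|A_Y\|$, is normalized to $1$ at the root $0$, and exhibits the \emph{same} transverse decay $a(\|A_Y\|)^{d(x,S)}$. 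Hence establishing that decay for $\tilde v$ and reducing to $PR_{A_{\bg^Q}}(\|A_Y\|)P\tilde w=\tilde w$ with $\tilde w\geq0$, $\tilde w(0)=1$ does not force $\tilde w=w$: positivity and normalization do not perform the ``selection'' of \eqref{weign}, and the transverse decay cannot distinguish candidates that differ only through their restriction to $S$. (For $\bh^Q$, uniqueness of the positive eigenvector on $S\sim\bn$ is only stated as ``expected'' at the end of Section 2, so it cannot be invoked either.)

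The paper identifies the cluster point by a different device: it introduces the intermediate \emph{infinite} graphs $\S_n$, namely $\bg^Q$ perturbed only along the finite set $S_n=S\cap B_n^Q$, whose Perron--Frobenius weights $v^{(n)}$ were shown in \cite{F} to converge pointwise to $v$; it then exhibits a single family of smooth functions $f_y(\l)$, $y\in S$, such that both any cluster point $r$ of $\{v_n\}$ and each $v^{(n)}$ have the form $a(\l)^{d(x,S)}f_{y(x)}(\l)$, with $\l=\|A_Y\|$ and $\l=\|A_{\S_n}\|$ respectively. Since $\|A_{\S_n}\|\uparrow\|A_Y\|$ and the $f_y$ are continuous, $r(x)=\lim_n v^{(n)}(x)=v(x)$, which is precisely the information your argument is missing. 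If you wish to stay closer to your route, you would need an additional input singling out $w$ among the positive normalized eigenvectors on $S$ --- for instance, for $\bg^{Q,q}$ the invariance of the ball $\La_n$ under the stabilizer of $0$, which forces every cluster point of $v_n\lceil_{S_n}$ to be radial, and radiality does single out $w$ --- but as written the identification $\tilde w=w$ is unjustified.
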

\begin{proof}
We start by considering the networks $\S_{n}$ consisting of $\bg^Q$, perturbed along the subset  
$S_n:=S\cap B_n^Q$, see Fig. \ref{figgk}. 
\begin{figure}[ht]
     \centering
     \psfig{file=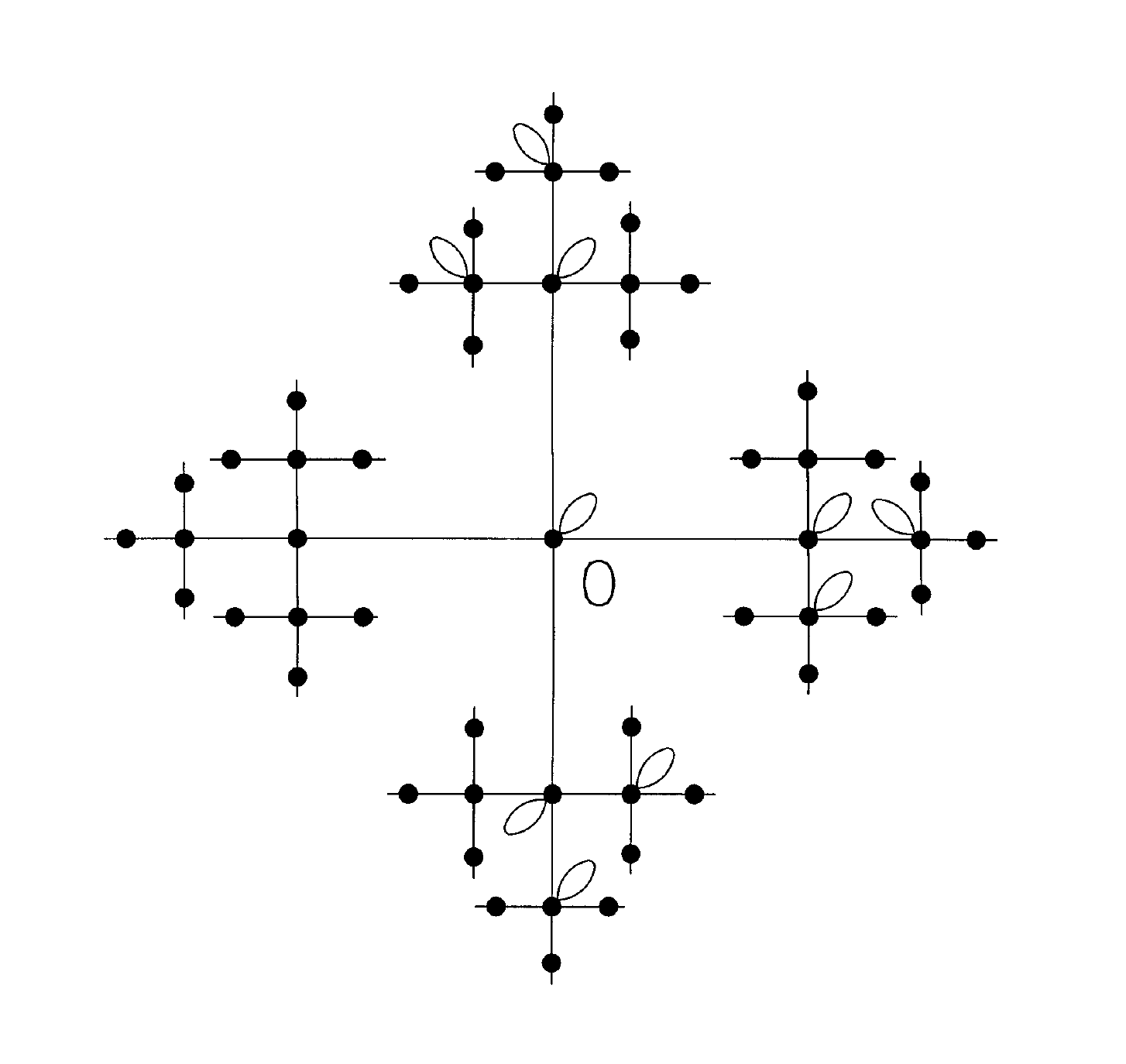,height=2.3in} \,\, \psfig{file=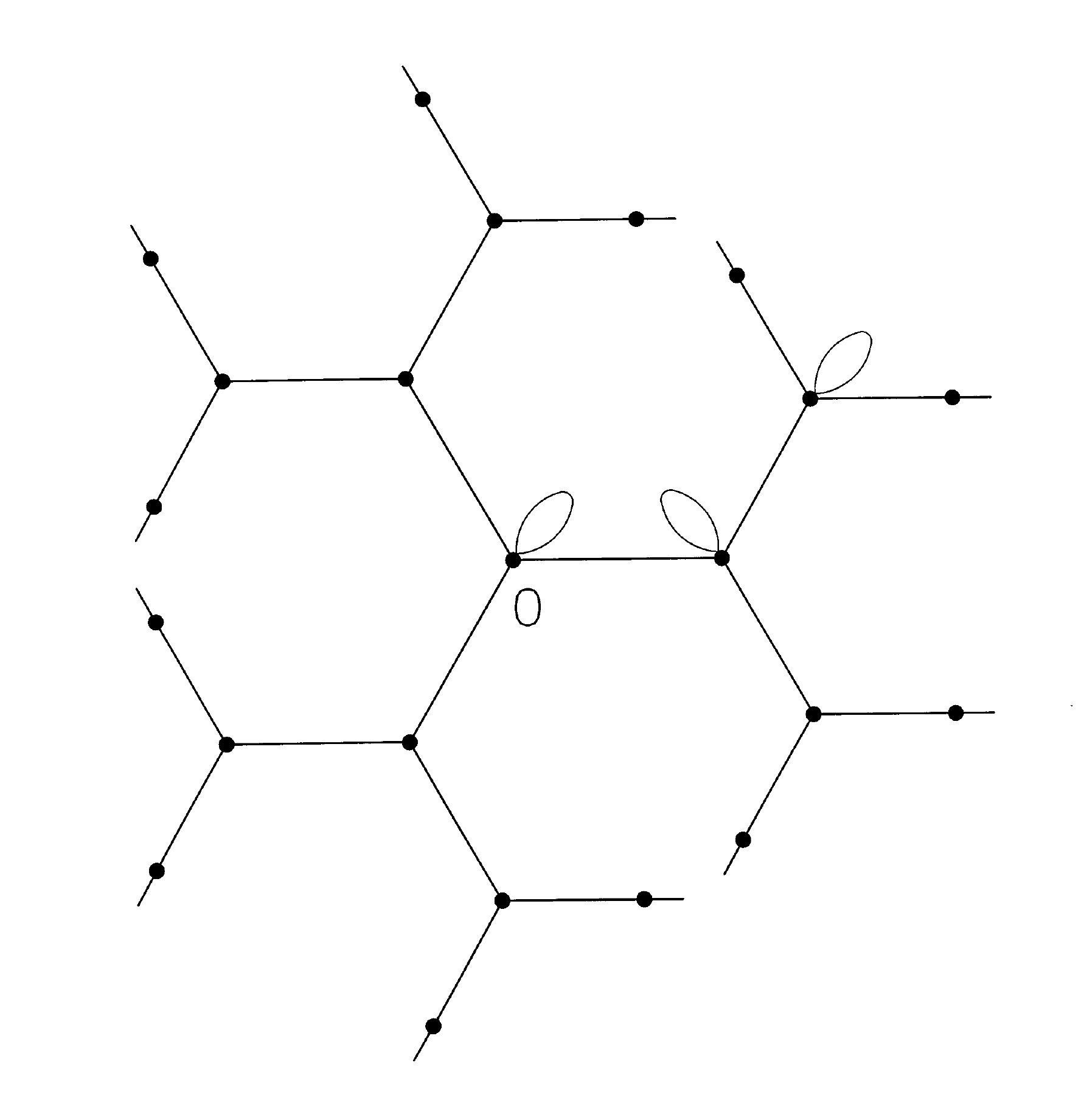,height=2.3in}
     \caption{The graphs $\S_2$ in the cases $\bg^{4,3}$ and $\bh^3$, respectively.}
     \label{figgk}
     \end{figure}The vectors $v^{(n)}$, also described in \cite{F}, are the Perron--Frobenius weights normalized at $1$ on the common fixed root. Let $r$ be any weak cluster point of $\{v_n\}_{n\in\bn}$ which exists by compactness (cf. Proposition 4.1of \cite{FGI1}). It is straightforwardly 
seen that there exists a collections $\{f_x\}_{x\in S}$ of smooth universal functions (on a sufficiently large neighborhood of $\|A_Y\|$) such that for each $x\in A_Y$ and $n>d(x,0)$,
\begin{equation*}
u(x)=a(\l)^{d(x,S)}f_{y(x)}(\l)\,,
\end{equation*}
Here, $y(x)\in S$ is the unique element such that $d(x,S)=d(x,y(x))$ (cf. Lemma 4.2 of \cite{F}),
$u$ is either $r$ or $v^{(n)}$, and, correspondingly, $\l$ is $\|A_Y\|$ or $\|A_{\S_{n}}\|$.
Notice that, it was shown in \cite{F} that
$$
\lim_n v^{(n)}(x)=v(x)\,,\quad x\in Y\,.
$$
This first implies that there exists a unique weak cluster point, say $r$, of the sequence $\{v_n\}_{n\in\bn}$. 
In addition, as $\|A_{\S_{n}}\|\uparrow\|A_{Y}\|$, we get for each $x\in Y$,
\begin{align*}
v(x)=\lim_n v^{(n)}(x)&=\lim_n a(\|A_{\S_{n}}\|)^{d(x,S)}f_{y(x)}(\|A_{\S_{n}}\|)\\
=&a(\|A_{Y}\|)^{d(x,S)}f_{y(x)}(\|A_{Y}\|)=r(x)\,.
\end{align*}
 Concerning the limits of the norms we can suppose that the $v_n$ act on $\ell^2(\bg^Q)$ by extending those to $0$ on the complement of the balls $B^Q_n$, respectively. We have by Fatou Lemma
 $$
 +\infty=\|v\|\leq\liminf_n\|v_n\|
 $$
 which completes the proof.
\end{proof}
The precise estimate of the behavior of the $\ell^2$--norms $\|v_n\|$ of the Perron--Frobenius eigenvectors $v_n$ of 
$A_{\La_n}$ as $\La_n\uparrow Y$ is needed in the sequel. Unfortunately, even if we have a precise estimate of the behavior of $\big\|v\lceil_{\ell^2(\La_n)}\big\|$, this does not guarantee the analogous estimate \eqref{1z1} for $\|v_n\|$. However, it is expected that 
\begin{equation}
\label{fregg242}
\lim_n\frac{\|v_n\|^2|S_n|}{|\La_n|}=0
\end{equation}
still holds. For the reader convenience, we report a sufficient condition, tested on simpler models, which implies \eqref{fregg242}.
Suppose that
\begin{equation*}
v_n(x)\leq v(x)\,, \quad x\in S_n\,,n\in\bn\,.
\end{equation*}
We automatically get $\big\|v_n\lceil_{\ell^2(S_n)}\big\|\leq\big\|v\lceil_{\ell^2(S_n)}\big\|$ in $\ell^2(S_n)\subset\ell^2(\bg^Q)$. 
Thus, as in the proof of Proposition \ref{3a3c}, we compute,
\begin{align*}
\frac{\|v_n\|^2|S_n|}{|\La_n|}\leq&(1+\big\|R_{A_{B_n}}(\|A_{\La_n}\|)\big\|)^2\frac{\big\|v_n\lceil_{\ell^2(S_n)}\big\|^2|S_n|}{|\La_n|}\\
\leq&(1+\big\|R_{A_{B_n}}(\|A_{\La_n}\|)\big\|)^2\frac{\big\|v\lceil_{\ell^2(S_n)}\big\|^2|S_n|}{|\La_n|}\to0
\end{align*}
because $\big\|v\lceil_{\ell^2(S_n)}\big\|^2|S_n|/|\La_n|\to0$, and $\big\|R_{A_{B_n}}(\|A_{\La_n}\|)\big\|\to\big\|R_{A_{\bg^Q}}(\|A_Y\|)\big\|$ as 
$A_{B_n}\to A_{\bg^Q}$ in the strong operator topology. The estimate \eqref{fregg242} implies the weaker one
\begin{equation}
\label{fregg2422}
\lim_n\frac{\|v_n\|^2}{|\La_n|}=0
\end{equation}
needed in Proposition \ref{pprroo}.

Now we pass to the transient situation by showing a preliminary result which, in this case, takes into account the case when the parameter $D$ describing the amount of the condensate in \eqref{kpf} on the base--point subspace is zero.
\begin{Prop}
\label{faxpx}
For each $x\in Y$ we have 
$$
\lim_n\left\langle(e^{H_n}-\idd)^{-1}\d_x,\d_x\right\rangle=\left\langle(e^{H}-\idd)^{-1}\d_x,\d_x\right\rangle\,.
$$
\end{Prop}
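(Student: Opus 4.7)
The plan is to use the elementary splitting $\frac{1}{e^\l-1}=\frac{1}{\l}+f(\l)$ from \eqref{resam}, with $f$ continuous and bounded on $[0,+\infty)$. Since $\|A_{\La_n}\|<\|A_Y\|$ (strictly, because $\|A_Y\|$ sits at the edge of the continuous spectrum of $A_Y$), the operator $H_n$ is strictly positive and hence invertible on $\ell^2(\La_n)$. The spectral theorem then yields
\[
\langle (e^{H_n}-\idd_n)^{-1}\d_x,\d_x\rangle
=\langle H_n^{-1}\d_x,\d_x\rangle+\langle f(H_n)\d_x,\d_x\rangle,
\]
and the corresponding identity for $H$. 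In the limit, $\langle H^{-1}\d_x,\d_x\rangle$ is well--defined and finite by the transience of $A_Y$, which amounts precisely to the finiteness of $\lim_{\l\downarrow\|A_Y\|}\langle R_{A_Y}(\l)\d_x,\d_x\rangle$.

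First I would handle the continuous piece by strong operator convergence. Because $A_Y$ has finite propagation, each $A_Y\d_y$ is finitely supported, so $H_n\d_y=H\d_y$ for all $n$ sufficiently large; combined with the uniform bound $\|H_n\|\leq 2\|A_Y\|$, this extends to $H_n\to H$ strongly on $\ell^2(Y)$. Standard functional calculus---approximate $f$ uniformly on the compact interval $[0,2\|A_Y\|]$ by polynomials and apply a $3\eps$--argument---then yields $f(H_n)\to f(H)$ strongly, and in particular $\langle f(H_n)\d_x,\d_x\rangle\to\langle f(H)\d_x,\d_x\rangle$.

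The main step, and the one for which transience is essential, is to show that $\langle H_n^{-1}\d_x,\d_x\rangle\to\langle H^{-1}\d_x,\d_x\rangle$. Since $\|A_{\La_n}\|<\|A_Y\|$, the Neumann expansion converges absolutely:
\[
\langle H_n^{-1}\d_x,\d_x\rangle=\sum_{k=0}^{+\infty}\frac{\langle A_{\La_n}^{k}\d_x,\d_x\rangle}{\|A_Y\|^{k+1}}.
\]
Each coefficient $\langle A_{\La_n}^{k}\d_x,\d_x\rangle$ counts closed walks of length $k$ at $x$ lying entirely in $\La_n$; this count is monotonically non--decreasing in $n$ and equals $\langle A_Y^k\d_x,\d_x\rangle$ as soon as $\La_n$ contains the ball of radius $k$ around $x$. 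Monotone convergence, applied term--by--term in the outer sum, then yields
\[
\lim_n\langle H_n^{-1}\d_x,\d_x\rangle=\sum_{k=0}^{+\infty}\frac{\langle A_Y^{k}\d_x,\d_x\rangle}{\|A_Y\|^{k+1}}=\lim_{\l\downarrow\|A_Y\|}\langle R_{A_Y}(\l)\d_x,\d_x\rangle,
\]
the right--most quantity being precisely $\langle H^{-1}\d_x,\d_x\rangle$ and finite by the transience hypothesis. Combining the two pieces gives the claim.

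The main obstacle is exactly this $H^{-1}$ piece: since $0$ lies at the edge of $\s(H)$, one cannot directly invoke continuity of the functional calculus for the unbounded function $\l\mapsto 1/\l$, and weak convergence of the associated spectral measures alone is insufficient. It is the combinatorial monotonicity of the walk counts $\langle A_{\La_n}^{k}\d_x,\d_x\rangle\uparrow\langle A_Y^{k}\d_x,\d_x\rangle$, together with the transience of $A_Y$ which controls the sum, that permits the exchange of limit and summation.
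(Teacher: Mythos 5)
Your proposal is correct and follows essentially the same route as the paper: the splitting $\frac{1}{e^x-1}=f(x)+\frac1x$ from \eqref{resam} to reduce to the resolvent, then the Neumann expansion $\sum_k\langle A_{\La_n}^k\d_x,\d_x\rangle/\|A_Y\|^{k+1}$ together with the monotonicity of the walk counts in $n$ and the Monotone Convergence Theorem, with transience guaranteeing finiteness of the limit. Your explicit treatment of the bounded piece $f(H_n)$ via strong convergence and polynomial approximation is a detail the paper leaves implicit, but it is the same argument.
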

\begin{proof}
By using the splitting in \eqref{resam}, we can reduce the matter to the analysis of the Resolvents. Notice that
$\big\langle A_{\La_n}^k\d_x,\d_x\big\rangle<\big\langle A_{\La_{n+1}}^k\d_x,\d_x\big\rangle$. Thus, by Monotone Convergence Theorem, we get
\begin{align*}
\lim_n&\left\langle R_{A_{\La_n}}(\|A_Y\|)\d_x,\d_x\right\rangle
=\lim_n\sum_{k=0}^{+\infty}\frac{\left\langle A_{\La_n}^k\d_x,\d_x\right\rangle}{\|A_Y\|^{k+1}}\\
&=\sum_{k=0}^{+\infty}\frac{\left\langle A_{Y}^k\d_x,\d_x\right\rangle}{\|A_Y\|^{k+1}}
=\left\langle R_{A_Y}(\|A_Y\|)\d_x,\d_x\right\rangle\,.
\end{align*}
\end{proof}
We now recall some notations standardly used in the sequel. 
Put
$$
S_n(\l_n):=\idd_n-P_{\ell^2(S_n)}R_{A_{B_n}}(\l_n)P_{\ell^2(S_n)}\,.
$$
eventually acting on $\ell^2(S_n)$. Consider $B^Q_n$ the ball of radius $n$ of $\bg^Q$, $S_n:=B_n\cap S$ is the support of the perturbation inside $\La_n$ (i.e. 
$S_n\sim\bg^q$ for $\bg^{q,Q}$, and $S_n$ is isomorphic to the finite set consisting of $n+1$ points for $\bh^{Q}$). Finally, $Q_n$ is the selfadjoint projection onto $\ell^2(\La_n)\ominus\bc v_n$.  As $\La_n$ is the finite volume perturbation of $B_n^Q$, 
$\|A_{B_n^Q}\|\leq\|A_{\La_n}\|$.
\begin{Lemma}
\label{korila}
If $\l\in\big(\|A_{\La_n}\|,\G\big]$, then
$$
\frac1{1-\|S_n(\l)\|}\leq\frac{\G-\|A_{B_n}\|}{\l-\|A_{\La_n}\|}\,.
$$
\end{Lemma}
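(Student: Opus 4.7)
The inequality is equivalent to the linear lower bound
$$
1 - \|S_n(\l)\| \;\geq\; \frac{\l - \|A_{\La_n}\|}{\G - \|A_{B_n}\|},\qquad \l\in(\|A_{\La_n}\|,\G].
$$
My strategy is to establish the stronger pointwise bound
$$
\|S_n(\l)\|\,(\l - \|A_{B_n}\|) \;\leq\; \|A_{\La_n}\| - \|A_{B_n}\|,\qquad \l\in[\|A_{\La_n}\|,\G],\qquad(\star)
$$
from which the target follows by rearranging and using $\l\leq\G$: $(\star)$ gives $\|S_n(\l)\|\leq(\|A_{\La_n}\|-\|A_{B_n}\|)/(\l-\|A_{B_n}\|)$, hence $1 - \|S_n(\l)\| \geq (\l - \|A_{\La_n}\|)/(\l - \|A_{B_n}\|) \geq (\l - \|A_{\La_n}\|)/(\G - \|A_{B_n}\|)$.

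\textbf{Key steps.} To prove $(\star)$, I would introduce the auxiliary function
$$
\Phi(\l) \;:=\; (\l - \|A_{B_n}\|)\,\|S_n(\l)\|
$$
and show that $\Phi$ is non-increasing on $[\|A_{\La_n}\|, \G]$. Evaluation at the left endpoint gives $\Phi(\|A_{\La_n}\|) = \|A_{\La_n}\| - \|A_{B_n}\|$, since $\|S_n(\|A_{\La_n}\|)\|=1$ is exactly the Krein-type characterization of $\|A_{\La_n}\|$ as the largest value of $\l$ for which the relevant Krein operator reaches norm one (cf.\ the discussion preceding Proposition \ref{koro}). The monotonicity of $\Phi$ would be derived from the resolvent identity $R_{A_{B_n}}(\l_1) - R_{A_{B_n}}(\l_2) = (\l_2 - \l_1)R_{A_{B_n}}(\l_1)R_{A_{B_n}}(\l_2)$, which after compression to $\ell^2(S_n)$ and multiplication by the factor $(\l - \|A_{B_n}\|)$ yields a controlled variation of $\Phi$ in the required direction.

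\textbf{Main obstacle.} The crux is the monotonicity of $\Phi$. Writing $T_n(\l) := P_{\ell^2(S_n)}R_{A_{B_n}}(\l)P_{\ell^2(S_n)}$, so that $S_n(\l) = \idd_n - T_n(\l)$, the norm $\|S_n(\l)\|$ is \emph{not} globally monotone in $\l$: it attains $1$ both as $\l\downarrow\|A_{\La_n}\|$ and in the limit $\l\to+\infty$, with an interior minimum. Monotonicity of $\Phi$ therefore must be justified specifically on the interval $[\|A_{\La_n}\|,\G]$, where the dominant contribution to $\|\idd_n - T_n(\l)\|$ comes from the top eigenvalue of $T_n(\l)$ (rather than from the bottom of its spectrum). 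The choice of $\G$ made in Proposition \ref{koro} was exactly to guarantee that $\|S_n(\l)\|<1$ in the regime of interest, placing us in this "top eigenvalue dominates" situation. Carrying out the tracking of the top eigenvalue via first-order perturbation theory, and verifying that the resulting scalar function $\mu_{\max}(\l)\cdot(\l-\|A_{B_n}\|)$ inherits the needed monotonicity from the operator identity, is the technically delicate step.
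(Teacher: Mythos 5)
Your reduction of the target inequality to the pointwise bound $(\star)$ is algebraically fine, but $(\star)$ itself is false: the auxiliary function $\Phi$ is monotone in the \emph{opposite} direction to the one you need. Write $M:=\|A_{B_n}\|$ and read $S_n(\l)=P_{\ell^2(S_n)}R_{A_{B_n}}(\l)P_{\ell^2(S_n)}$ (the reading forced by the use of $R_{S_n(\l)}(1)$ in Proposition \ref{resspr} and by the normalization $\|S_n(\|A_{\La_n}\|)\|=1$ that you also invoke). For a unit vector $u\in\ell^2(S_n)$ with spectral measure $\nu_u$ of $A_{B_n}$, supported in $(-\infty,M]$, one has
$$
(\l-M)\,\langle R_{A_{B_n}}(\l)u,u\rangle=\int\Big(1-\frac{M-x}{\l-x}\Big)\,\di\nu_u(x)\,,
$$
and each integrand is non-decreasing in $\l$ because $M-x\geq0$ while $\l-x>0$ increases. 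Taking the supremum over $u$ shows that $\Phi(\l)=(\l-M)\|S_n(\l)\|$ is \emph{non-decreasing} on $(M,+\infty)$; hence $\Phi(\l)\geq\Phi(\|A_{\La_n}\|)=\|A_{\La_n}\|-M$, i.e. $\|S_n(\l)\|\geq\frac{\|A_{\La_n}\|-M}{\l-M}$ and $1-\|S_n(\l)\|\leq\frac{\l-\|A_{\La_n}\|}{\l-M}$. Your intermediate conclusion is therefore exactly reversed, and the step you yourself flag as the "technically delicate" one is not merely unproved: it has the wrong sign, already at the level of each quadratic form, so no tracking of the top eigenvalue can repair it. Note also that the true inequality just derived gives $\frac1{1-\|S_n(\l)\|}\geq\frac{\l-M}{\l-\|A_{\La_n}\|}$, which coincides with the lemma's upper bound at $\l=\G$; this shows the claimed estimate is as tight as it can possibly be, and hence cannot follow from the purely generic ingredients (the normalization $\|S_n(\|A_{\La_n}\|)\|=1$ and the resolvent identity) on which your plan relies.

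For comparison, the paper's proof is a one-liner of a different nature: it appeals to the explicit computations of Lemma 3.1 of \cite{F} for the concrete convolution operator \eqref{cvskr}, combined with the differentiability and convexity of $\l\mapsto\|S_n(\l)\|$. Schematically, convexity yields $1-\|S_n(\l)\|\geq-\frac{\di}{\di\l}\|S_n(\l)\|\,(\l-\|A_{\La_n}\|)$ with the derivative controlled from below on $(\|A_{\La_n}\|,\G]$ by the model-specific formulas; it does not pass through any analogue of your $\Phi$. If you want to salvage your write-up, this convexity-plus-derivative-bound route, with the derivative estimated from the explicit kernel $a(\l)^{d(x,y)}/\m(\l)$, is the direction to pursue.
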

\begin{proof}
The proof immediately follows from the same computations in Lemma 3.1 of \cite{F}, by taking into account that the resolvent is differentiable and convex.
\end{proof}
The key--point to investigate the infinite volume limit for the remaining cases, that is when the underlying perturbed graph is transient, is the following
\begin{Prop}
\label{resspr}
Let $\m_n<\|A_{Y}\|-\|A_{\La_n}\|$ (i.e. $\l_n>\|A_{\La_n}\|$) such that  $\m_n\to0$ (i.e. $\l_n\to\|A_{Y}\|$). Then, 
\begin{align*}
\r_{\La_n}(\m_n)-\r_c=a_n+&\t_n\left(Q_nR_{A_{B_n}}(\l_n)R_{S_n(\l_n)}(1)R_{A_{B_n}}(\l_n)\right)\\
+&\frac1{|\La_n|(\|A_{Y}\|-\|A_{\La_n}\|-\m_n)}\,,
\end{align*}
where $a_n\to0$. In addition, if 
$$
\frac{|S_n|}{|\La_n|(\|A_{Y}\|-\|A_{\La_n}\|-\m_n)}\to0\,,
$$
then $\r_{\La_n}(\m_n)\to\r_c$.
\end{Prop}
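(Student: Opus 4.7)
The strategy is to exploit the splitting $\frac{1}{e^x-1} = \frac{1}{x} + f(x)$ from \eqref{resam}, which separates the singular resolvent part from a bounded continuous remainder. Since $H_n - \m_n\idd_n = \l_n\idd_n - A_{\La_n}$ with $\l_n = \|A_Y\| - \m_n$, one first writes
$$
\r_{\La_n}(\m_n) = \t_n\bigl(R_{A_{\La_n}}(\l_n)\bigr) + \t_n\bigl(f(H_n - \m_n\idd_n)\bigr),
$$
and applies the same splitting to the limit: $\r_c = \int dN_H(h)/h + \int f(h)\,dN_H(h)$, both pieces being finite thanks to transience of $A_Y$ (for the first) and boundedness of $f$ (for the second). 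The remainder $\t_n(f(H_n - \m_n\idd_n))$ is the easy half: $f$ is continuous and bounded on a common neighborhood of the spectra, $\m_n \to 0$, and the IDS of $H_n$ converges to that of $H$, so $\t_n(f(H_n - \m_n\idd_n)) \to \int f(h)\,dN_H(h)$.

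Next I would peel off the Perron--Frobenius direction from $R_{A_{\La_n}}(\l_n)$. Since $A_{\La_n} v_n = \|A_{\La_n}\| v_n$ and $\tr_n(P_{\bc v_n}) = 1$,
$$
\t_n\bigl(R_{A_{\La_n}}(\l_n)\bigr) = \frac{1}{|\La_n|(\l_n - \|A_{\La_n}\|)} + \t_n\bigl(Q_n R_{A_{\La_n}}(\l_n)\bigr),
$$
producing the claimed final summand $1/[|\La_n|(\|A_Y\|-\|A_{\La_n}\|-\m_n)]$. Invertibility of $\idd_n - S_n(\l_n)$ for $\l_n > \|A_{\La_n}\|$ is Lemma \ref{korila}, so the Krein formula \eqref{koro1} in finite volume yields
$$
Q_n R_{A_{\La_n}}(\l_n) = Q_n R_{A_{B_n}}(\l_n) + Q_n R_{A_{B_n}}(\l_n) R_{S_n(\l_n)}(1) P_{\ell^2(S_n)} R_{A_{B_n}}(\l_n).
$$
Collecting pieces forces
$$
a_n = \t_n\bigl(f(H_n - \m_n\idd_n)\bigr) + \t_n\bigl(Q_n R_{A_{B_n}}(\l_n)\bigr) - \r_c.
$$
To verify $a_n \to 0$, observe that $\l_n \to \|A_Y\| > \|A_{\bg^Q}\|$ lies in the resolvent set of $A_{\bg^Q}$, so $\|R_{A_{B_n}}(\l_n)\|$ is uniformly bounded; consequently $|\t_n(P_{\bc v_n}R_{A_{B_n}}(\l_n))| = O(1/|\La_n|)$ is negligible, and $\t_n(Q_n R_{A_{B_n}}(\l_n))$ has the same limit as $\t_n(R_{A_{B_n}}(\l_n)) \to \int dN_{A_{\bg^Q}}(a)/(\|A_Y\|-a)$ by spectral-theorem convergence. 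Proposition \ref{density0} together with a change of variables identifies this with $\int dN_H(h)/h$, and $a_n \to 0$ follows.

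For the second claim, note first that $1/[|\La_n|(\|A_Y\|-\|A_{\La_n}\|-\m_n)] \leq |S_n|/[|\La_n|(\l_n - \|A_{\La_n}\|)] \to 0$ since $|S_n| \geq 1$. For the Krein remainder, cyclicity of $\tr_n$ and $\tr_n(P_{\ell^2(S_n)}) = |S_n|$ yield
$$
\bigl|\t_n\bigl(Q_n R_{A_{B_n}}(\l_n) R_{S_n(\l_n)}(1) R_{A_{B_n}}(\l_n)\bigr)\bigr| \leq \|R_{A_{B_n}}(\l_n)\|^2\,\|R_{S_n(\l_n)}(1)\|\,\frac{|S_n|}{|\La_n|},
$$
and Lemma \ref{korila} supplies $\|R_{S_n(\l_n)}(1)\| \leq (\G - \|A_{B_n}\|)/(\l_n - \|A_{\La_n}\|)$, bounding this term by $O(|S_n|/[|\La_n|(\l_n-\|A_{\La_n}\|)]) \to 0$. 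The main obstacle is the bookkeeping that forces $a_n \to 0$: matching the finite-volume splittings to their infinite-volume counterparts requires simultaneously the density-zero invariance of the IDS (Proposition \ref{density0}) and the rank-one Perron--Frobenius estimate, since the eigenvectors $v_n$ of $A_{\La_n}$ do not diagonalize $A_{B_n}$ and therefore $Q_n$ fits awkwardly against the Krein expansion built on $B_n$.
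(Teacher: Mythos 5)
Your proof follows essentially the same route as the paper's: the splitting \eqref{resam}, the rank--one Perron--Frobenius reduction (for which the paper cites Lemma 2.4 of \cite{F}), the finite--volume Krein formula, IDS convergence via Propositions \ref{density0} and \ref{6} to force $a_n\to0$, and Lemma \ref{korila} together with the $|S_n|/|\La_n|$ trace count for the second claim. The only quibble is your attribution of the finiteness of $\r_c$ to transience rather than to the hidden spectrum (the IDS being supported away from $0$), but this does not affect the argument.
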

\begin{proof}
By using the splitting \eqref{resam}, the Krein formula \eqref{k2s} and finally Lemma 2.4 of \cite{F}, we easily obtain for $\r_{\La_n}(\m_n)-\r_c$ the above formula with $a_n$ given by,
\begin{align*}
a_n=&\t_n\left(f((\|A_{Y}\|-\m_n)\idd_n-A_{\La_n})\right)-\t^{A_Y}\left(f(\|A_{Y}\|\idd_n-A_{Y})\right)\\
+&\t_n\left(R_{A_{B_n}}(\l_n)\right)-\t^{A_{\bg^Q}}\left(R_{A_{\bg^Q}}(\|A_Y\|)\right)\,,
\end{align*}
where the last goes to zero by Propositions \ref{density0} and \ref{6}.

By Lemma \ref{korila}, the second half directly follows as 
\begin{align*}
&\t_n\left(R_{A_{B_n}}(\l_n)R_{S_n(\l_n)}(1)R_{A_{B_n}}(\l_n)\right)\\
=&\t_n\left(R_{S_n(\l_n)}(1)^{1/2}R_{A_{B_n}}(\l_n)^2R_{S_n(\l_n)}(1)^{1/2}\right)\\
\leq&\|R_{A_{Y}}(\|A_{\bg^Q}\|)\|^2\t_n\left(R_{S_n(\l_n)}(1)\right)\\
\leq&\|R_{A_{Y}}(\|A_{\bg^Q}\|)\|^2\frac{|S_n|(\|A_Y\|-\|A_{B_n}\|)}{|\La_n|(\|A_{Y}\|-\|A_{\La_n}\|-\m_n)}\,.
\end{align*}
\end{proof}
Now we specialize the situation to the transient situation for the case when finite volume Gibbs states are prepared in order to describe in the infinite volume limit, a mean density $\r>\r_c$. To this end, we chose the sequence of the chemical potential by putting
\begin{equation}
\label{fregg1}
\frac1{|\La_n|(\|A_{Y}\|-\|A_{\La_n}\|-\m_n)}\to c>0\,.
\end{equation}
Even if it is expected that \eqref{fregg1} implies that the resulting limit density describes a mean density
$\r=\r_c+c$, in order to prove that we should show that
\begin{equation}
\label{fregg}
\t_n\left(Q_nR_{A_{B_n}}(\l_n)R_{S_n(\l_n)}(1)R_{A_{B_n}}(\l_n)\right)\to0\,.
\end{equation}
The last result \eqref{fregg} would follow again by the Krein formula \eqref{k2s} for the resolvent. Unfortunately, its proof relies by unavoidable boundary effects which seriously affect the related formulas and are not easily manageable to take the infinite volume limit. 
However, for the infinite volume mean density $\r$, if \eqref{fregg} holds true, we get by Proposition \ref{resspr},
$$
\r=\lim_n\r_{\La_n}(\m_n)=\r_c+\lim\frac1{|\La_n|(\|A_{Y}\|-\|A_{\La_n}\|-\m_n)}=\r_c+c>\r_c\,.
$$
This simply means that, if \eqref{fregg} holds true, the sequence \eqref{kmsbecd1} of finite volume Gibbs states obtained with the choice \eqref{fregg1} of the chemical potentials, describes a fixed mean density $\r>\r_c$ in the infinite volume limit. 
It is expected that the condition  \eqref{fregg1} is incompatible with local normality as it is shown in the following
\begin{Prop}
\label{pprroo}
Suppose that the estimate \eqref{fregg2422} holds true. If the sequence of the 
finite volume chemical potentials satisfies \eqref{fregg1}, then for the two--point function in \eqref{kmsbecd1}, we get
$$
\lim_n\om_n(a^\dagger(\d_0)a(\d_0))=+\infty\,.
$$
\end{Prop}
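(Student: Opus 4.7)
My plan is to bound $\om_n(a^\dagger(\d_0) a(\d_0))$ from below by isolating, in the spectral decomposition of the finite--dimensional self--adjoint operator $A_{\La_n}$, the single contribution coming from the Perron--Frobenius eigenvector $v_n$. Concretely, let $\{e_k^{(n)}\}$ be an orthonormal basis of eigenvectors of $A_{\La_n}$ with eigenvalues $\a_k^{(n)}$, and set $e_0^{(n)} := v_n/\|v_n\|$ corresponding to the top eigenvalue $\|A_{\La_n}\|$. Then
\begin{equation*}
\om_n(a^\dagger(\d_0)a(\d_0)) = \sum_k \frac{|\langle \d_0, e_k^{(n)}\rangle|^2}{e^{\l_n - \a_k^{(n)}}-1}\,,
\end{equation*}
with $\l_n - \a_k^{(n)} > 0$ for every $k$, so every summand is strictly positive and it suffices to control the $k=0$ term from below.

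The $k=0$ contribution equals $\big(|v_n(0)|^2/\|v_n\|^2\big)\big/\big(e^{\l_n-\|A_{\La_n}\|}-1\big)$, and the normalization $v_n(0)=1$ reduces it to $1/\big(\|v_n\|^2(e^{\eps_n}-1)\big)$, where $\eps_n := \|A_Y\|-\|A_{\La_n}\|-\m_n=\l_n-\|A_{\La_n}\|$. Hypothesis \eqref{fregg1} forces $\eps_n\to 0^+$ with $|\La_n|\eps_n\to 1/c$, whence $e^{\eps_n}-1\sim \eps_n$ by elementary Taylor expansion. Combining,
\begin{equation*}
\frac{1}{\|v_n\|^2(e^{\eps_n}-1)}\sim\frac{1}{\|v_n\|^2\,\eps_n}=\frac{|\La_n|}{\|v_n\|^2}\cdot\frac{1}{|\La_n|\eps_n}\longrightarrow (+\infty)\cdot c\,,
\end{equation*}
where the first factor diverges by \eqref{fregg2422} and the second converges to $c>0$ by \eqref{fregg1}. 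Since all the remaining terms in the spectral sum are non--negative, this forces $\om_n(a^\dagger(\d_0)a(\d_0))\to +\infty$.

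The argument is essentially purely spectral once one recognizes that the Perron--Frobenius mode carries the condensate, and I do not expect serious obstacles: all that is used beyond the hypotheses is the trivial fact $e^{\eps_n}-1\sim\eps_n$ as $\eps_n\downarrow 0$ and the positivity of every summand in the spectral expansion. The conceptual content---which is what makes the statement physically meaningful---is that \eqref{fregg1} closes the spectral gap at rate $1/|\La_n|$, while the mass that the normalized Perron--Frobenius vector places at the root decays like $1/\|v_n\|^2$; their product is exactly the ratio $|\La_n|/\|v_n\|^2$ that \eqref{fregg2422} forces to diverge. Combined with Proposition \ref{add2}, this shows that a strictly super--critical infinite volume density cannot be accommodated by a locally normal state when the $\ell^2$--mass of the ground state wave function grows too slowly with respect to the ambient volume.
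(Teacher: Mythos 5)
Your argument is correct and is essentially the paper's own proof: the paper likewise bounds $\om_n(a^\dagger(\d_0)a(\d_0))$ from below by the single Perron--Frobenius term $1/\bigl(\|v_n\|^2(\|A_Y\|-\|A_{\La_n}\|-\m_n)\bigr)$ (up to the harmless replacement of $e^{\eps_n}-1$ by $\eps_n$, which you justify via the Taylor expansion) and then combines \eqref{fregg2422} with \eqref{fregg1} exactly as you do. Your write-up merely makes explicit the spectral decomposition and the normalization $v_n(0)=1$ that the paper leaves implicit.
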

\begin{proof}
We have
$$
\om_n(a^\dagger(\d_x)a(\d_x))>\frac1{\|v_n\|^2(\|A_{Y}\|-\|A_{\La_n}\|-\m_n)}\,.
$$
The proof easily follows collecting together \eqref{fregg2422} and \eqref{fregg1}.
\end{proof}
The last result together with  Proposition \ref{add2}, would mean that it is impossible to construct locally normal states exhibiting BEC with mean density $\r>\r_c$ in all the situations under consideration in the present paper.

Finally, we pass for the transient cases to the construction of the states \eqref{kmsbecd} by the infinite volume limits of finite volume Gibbs states in \eqref{kmsbecd1} with a suitable choice of the sequence of chemical potentials. Also in this situation, we assume that
\begin{align}
\label{freggg}
&\lim_n\left\langle R_{A_{B_n}}(\l_n)R_{S_n(\l_n)}(1)R_{A_{B_n}}(\l_n)Q_n\d_x,Q_n\d_x\right\rangle\\
&=\left\langle R_{A_{Y}}(\|A_Y\|)R_{S(\|A_Y\|)}(1)R_{A_{Y}}(\|A_Y\|)\d_x,\d_x\right\rangle\,,x\in\bg^Q\nn
\end{align}
holds true.\footnote{By transience, it might be enough to check \eqref{freggg} only for a fixed point $x_0$ in order to obtain \eqref{freggg} itself for each other point $x\in Y$.} In order to recover \eqref{kmsbecd} by the infinite volume limits of \eqref{kmsbecd1}, we put
\begin{equation}
\label{fregg3}
\frac1{\|v_n\|^2(\|A_{Y}\|-\|A_{\La_n}\|-\m_n)}\to D\,.
\end{equation}
\begin{Prop}
Let $Y$ be either $\bh^Q$ with $3\leq Q\leq 7$, or $\bg^{Q,q}$ where $q\geq3$, $q<Q\leq Q(q)$ with 
$Q(q)$ given in \eqref{ub}. Suppose that \eqref{fregg242} holds true, and \eqref{freggg} is satisfied under the choice \eqref{fregg3}. Then for each $x\in Y$, we get for the two--point function in \eqref{kmsbecd1}, 
$$
\lim_n\om_n(a^\dagger(\d_x)a(\d_x))=\om_D(a^\dagger(\d_x)a(\d_x))\,.
$$
In addition $\r_{\La_n}(\m_n)\to\r_c$.
\end{Prop}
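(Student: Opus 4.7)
The plan is to apply the splitting identity \eqref{resam} to the bounded positive operator $H_n-\m_n\idd_n=\l_n\idd_n-A_{\La_n}$, which is positive because $\m_n<\|A_Y\|-\|A_{\La_n}\|$ forces $\l_n>\|A_{\La_n}\|$. This decomposes the two--point function \eqref{kmsbecd1} as
$$
\om_n(a^\dagger(\d_x)a(\d_x))=\langle f(H_n-\m_n\idd_n)\d_x,\d_x\rangle+\langle R_{A_{\La_n}}(\l_n)\d_x,\d_x\rangle\,,
$$
and I would handle the two summands separately.

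For the first summand, $f$ is bounded and continuous on $[0,+\infty)$, the spectra of all operators $H_n-\m_n\idd_n$ lie in a common compact interval, and $H_n-\m_n\idd_n\to H$ in the strong operator topology on finite supported vectors (because $A_{\La_n}\to A_Y$ strongly and $\l_n\to\|A_Y\|$). Uniform polynomial approximation of $f$ on this interval then yields $\langle f(H_n-\m_n\idd_n)\d_x,\d_x\rangle\to\langle f(H)\d_x,\d_x\rangle$.

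For the second summand, the key idea is to exploit the finite volume Perron--Frobenius spectral decomposition: since $\|A_{\La_n}\|$ is a simple eigenvalue of $A_{\La_n}$ with eigenvector $v_n$, one has
$$
\langle R_{A_{\La_n}}(\l_n)\d_x,\d_x\rangle=\frac{v_n(x)^2}{\|v_n\|^2(\|A_Y\|-\|A_{\La_n}\|-\m_n)}+\langle Q_nR_{A_{\La_n}}(\l_n)Q_n\d_x,\d_x\rangle\,.
$$
By \eqref{fregg3} the scalar factor tends to $D$, while Proposition \ref{faxpx111} gives $v_n(x)^2\to v(x)^2$, so the pole contribution converges to $D\langle\d_x,v\rangle\langle v,\d_x\rangle$. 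For the orthogonal complement, $\|\d_x-Q_n\d_x\|=|v_n(x)|/\|v_n\|\to 0$ (again by Proposition \ref{faxpx111}), and the Krein formula \eqref{k2s} yields
$$
\langle Q_nR_{A_{\La_n}}(\l_n)Q_n\d_x,\d_x\rangle=\langle R_{A_{B_n^Q}}(\l_n)Q_n\d_x,Q_n\d_x\rangle+\langle R_{A_{B_n^Q}}(\l_n)R_{S_n(\l_n)}(1)P_{\ell^2(S_n)}R_{A_{B_n^Q}}(\l_n)Q_n\d_x,Q_n\d_x\rangle\,.
$$
The first piece converges to $\langle R_{A_{\bg^Q}}(\|A_Y\|)\d_x,\d_x\rangle$ by an argument analogous to Proposition \ref{faxpx} (strong convergence $A_{B_n^Q}\to A_{\bg^Q}$ and $Q_n\d_x\to\d_x$), while the second piece converges to the corresponding Krein correction by the standing hypothesis \eqref{freggg}. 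Summing and invoking Krein once more at the infinite volume level, they reconstitute $\langle R_{A_Y}(\|A_Y\|)\d_x,\d_x\rangle$, which is finite by transience. Recombining with the $f$--summand through the reverse splitting identity produces $\langle(e^H-\idd)^{-1}\d_x,\d_x\rangle+D\langle\d_x,v\rangle\langle v,\d_x\rangle=\om_D(a^\dagger(\d_x)a(\d_x))$.

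For the density $\r_{\La_n}(\m_n)\to\r_c$ I would invoke Proposition \ref{resspr} directly. The remainder $a_n$ already tends to zero. Under the choice \eqref{fregg3} the boundary term satisfies
$$
\frac{1}{|\La_n|(\|A_Y\|-\|A_{\La_n}\|-\m_n)}\approx\frac{D\|v_n\|^2}{|\La_n|}\to 0
$$
by the weak estimate \eqref{fregg2422}, which is implied by the assumed \eqref{fregg242}. The middle trace term is bounded, as in the proof of Proposition \ref{resspr}, by a constant multiple of $|S_n|\|v_n\|^2/|\La_n|$, and hence also vanishes in the limit by \eqref{fregg242}. The hard part is the rank--one subtraction: one must cleanly isolate the divergent Perron--Frobenius pole from the residual $Q_nR_{A_{\La_n}}(\l_n)Q_n$ and certify that the latter converges to the correct infinite volume resolvent on the orthogonal complement. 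This is exactly the delicate input encoded in \eqref{freggg}, which captures the boundary effects intrinsic to the non--amenable setting and is treated as a standing hypothesis rather than proved.
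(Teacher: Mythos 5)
Your proposal is correct and follows essentially the same route as the paper: the splitting \eqref{resam} to reduce to resolvents, isolation of the Perron--Frobenius pole $\frac{v_n(x)^2}{\|v_n\|^2(\l_n-\|A_{\La_n}\|)}\to Dv(x)^2$ via \eqref{fregg3} and Proposition \ref{faxpx111}, the Krein formula plus the standing hypothesis \eqref{freggg} for the $Q_n$--compressed remainder, and the second half of Proposition \ref{resspr} combined with the factorization $\frac{|S_n|}{|\La_n|(\cdots)}=\frac{|S_n|\|v_n\|^2}{|\La_n|}\cdot\frac{1}{\|v_n\|^2(\cdots)}\to 0\cdot D$ for the density claim. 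The only cosmetic difference is that the paper pins down $\lim_n\langle R_{A_{\La_n}}(\l_n)Q_n\d_x,Q_n\d_x\rangle$ by a two--sided $\limsup$/$\liminf$ sandwich (monotonicity from below, Krein plus \eqref{freggg} from above), whereas you assert direct convergence of each Krein piece; both are sound given the standing assumptions.
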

\begin{proof}
The second half directly follows from the assumption  \eqref{fregg242}, by the second half of Proposition \ref{resspr}. Indeed, we get, 
$$
\frac{|S_n|}{|\La_n|(\|A_{Y}\|-\|A_{\La_n}\|-\m_n)}=
\frac{|S_n|\|v_n\|^2}{|\La_n|}\frac1{\|v_n\|^2(\|A_{Y}\|-\|A_{\La_n}\|-\m_n)}\to0\,.
$$

Concerning the first part, as usual we reduce the matter to the Resolvent. As $\l_n\to\|A_Y\|$, for each fixed $\l>\|A_Y\|$, we get by monotony
$$
0<\left\langle R_{A_{Y}}(\l)\d_x,\d_x\right\rangle-\frac{v_n(x)^2}{\|v_n\|^2(\l-\|A_{Y}\|)}
\leq\left\langle R_{A_{\La_n}}(\l_n)Q_n\d_x,Q_n\d_x\right\rangle\,.
$$
By Proposition \ref{faxpx111}, we get after taking the limits in $n$ and in $\l\downarrow\|A_Y\|$,
\begin{equation*}
\left\langle R_{A_{Y}}(\|A_Y\|)\d_x,\d_x\right\rangle\leq\limsup_n\left\langle R_{A_{\La_n}}(\l_n)Q_n\d_x,Q_n\d_x\right\rangle\,.
\end{equation*}
On the other hand, by using the Krein Formula, we obtain
\begin{align*}
&\left\langle R_{A_{\La_n}}(\l_n)Q_n\d_x,Q_n\d_x\right\rangle
\leq\left\langle R_{A_{\bg^Q}}(\l_n)\d_x,\d_x\right\rangle\\
+&\left\langle R_{A_{B_n}}(\l_n)R_{S_n(\l_n)}(1)R_{A_{B_n}}(\l_n)Q_n\d_x,Q_n\d_x\right\rangle\,,
\end{align*}
which, taking into account \eqref{freggg}, leads to
\begin{align*}
&\liminf_n\left\langle R_{A_{\La_n}}(\l_n)Q_n\d_x,Q_n\d_x\right\rangle
\leq\left\langle R_{A_{\bg^Q}}(\|A_Y\|)d_x,\d_x\right\rangle\\
+&\left\langle R_{A_{Y}}(\|A_Y\|)R_{S(\|A_Y\|)}(1)R_{A_{Y}}(\|A_Y\|)\d_x,\d_x\right\rangle
=\left\langle R_{A_{Y}}(\|A_Y\|)\d_x,\d_x\right\rangle\,.
\end{align*}
Collecting together, we get the assertion.
\end{proof}
We end by noticing that the case without condensate, but at the critical point $\m=0$, is directly covered by Proposition \ref{faxpx} without any further assumption.

\section*{Acknowledgements}

The author would like to thank the University of Pretoria, and especially A. Str\"oh, for the warm hospitality and for financial support during the period March--May 2010, when the present work has been started.

\end{document}